\documentclass[10pt,a4paper]{article}
\usepackage[latin1]{inputenc}
\usepackage{setspace}
\onehalfspacing

\usepackage{amsmath}
\usepackage{amsfonts}
\usepackage{amssymb}
\usepackage{amsthm}
\usepackage{bm}
\usepackage{graphicx}
\usepackage{enumitem}
\graphicspath{ {./figures/} }

\usepackage[left=1in, right=1in]{geometry}
\usepackage{authblk}
\usepackage{xcolor}

\usepackage{hyperref}

\usepackage[mathscr]{euscript}
\usepackage{bbm}

\usepackage{enumitem}

\theoremstyle{plain}
\newtheorem{theorem}{Theorem}
\newtheorem{lemma}{Lemma}
\newtheorem{corollary}{Corollary}
\newtheorem{proposition}{Proposition}

\theoremstyle{remark}
\newtheorem{example}{Example}
\newtheorem{condition}{Condition}

\usepackage{caption}
\captionsetup{width=0.9\textwidth}

\usepackage{algorithm}
\usepackage{algorithmicx}
\usepackage[noend]{algpseudocode}

\DeclareMathOperator*{\argmax}{argmax}
\DeclareMathOperator*{\argmin}{argmin}

\newcommand{\real}{{\mathbb{R}}}
\newcommand{\modelspace}{{\mathcal{M}}}
\newcommand{\observation}{\mathbf{X}}
\newcommand{\observationspace}{{\mathcal{X}}}
\newcommand{\coarsen}{\mathcal{C}}

\newcommand{\decisionspace}{{\mathcal{D}}}

\newcommand{\expect}{{\mathbb{E}}}
\newcommand{\Var}{{\mathrm{Var}}}
\newcommand{\intd}{{\mathrm{d}}}
\newcommand{\MV}{{\mathrm{MV}}}
\newcommand{\ind}{{\mathbbm{1}}}

\usepackage{tikz}
\usetikzlibrary{positioning}

\tikzset{%
    every neuron/.style={
        circle,
        draw,
        minimum size=1cm
    },
    neuron missing/.style={
        draw=none, 
        scale=4,
        text height=0.333cm,
        execute at begin node=\color{black}$\vdots$
    },
}

\usepackage{natbib}
\bibliographystyle{apalike}

\title{Adversarial Meta-Learning of Gamma-Minimax Estimators\\That Leverage Prior Knowledge}
\author[1]{Hongxiang Qiu}
\author[2]{Alex Luedtke}
\affil[1]{Department of Epidemilogy and Biostatistics, Michigan State University}
\affil[2]{Department of Statistics, University of Washington}

\date{}

\begin{document}
\maketitle

\begin{abstract}
    Bayes estimators are well known to provide a means to incorporate prior knowledge that can be expressed in terms of a single prior distribution. However, when this knowledge is too vague to express with a single prior, an alternative approach is needed. Gamma-minimax estimators provide such an approach. These estimators minimize the worst-case Bayes risk over a set $\Gamma$ of prior distributions that are compatible with the available knowledge. Traditionally, Gamma-minimaxity is defined for parametric models. In this work, we define Gamma-minimax estimators for general models and propose adversarial meta-learning algorithms to compute them when the set of prior distributions is constrained by generalized moments. Accompanying convergence guarantees are also provided. We also introduce a neural network class that provides a rich, but finite-dimensional, class of estimators 
    from which a Gamma-minimax estimator can be selected. 
    We illustrate our method in two settings, namely entropy estimation and a prediction problem that arises in biodiversity studies.
\end{abstract}

\section{Introduction}

A variety of principles can be used to guide the search for a suitable statistical estimator. Asymptotic efficiency \citep{Pfanzagl1990}, minimaxity \citep{Wald1945} and Bayes optimality \citep{Berger1985} are popular examples of such principles. Defining the performance criteria underlying these principles requires specifying a model space, that is, a collection of possible data-generating mechanisms known to contain the true, underlying distribution.

It is often desirable to incorporate prior information about the true data-generating mechanism into a statistical procedure. This might be done differently in different statistical paradigms. %
For frequentist methods, such as those based on the asymptotic efficiency or minimax principle, the primary way to incorporate this information is via the definition of the model space. 
In the Bayesian paradigm, such information may be represented by further specifying a prior distribution (or \emph{prior} for short) over the model space and aiming for an estimator that minimizes the induced Bayes risk. However, in many cases, there may be several priors that are compatible with the available information, especially in the context of rich model spaces. %

The Gamma-minimax paradigm, proposed by \cite{Robbins1951}, provides a principled means to overcome the challenge of specifying a single prior distribution. Under this paradigm, a statistician first specifies a set $\Gamma$ of all priors that are consistent with the available prior information and subsequently seeks an estimator that minimizes the worst-case Bayes risk over this set of priors. The Gamma-minimax paradigm may be viewed as a robust version of the Bayesian paradigm that is less sensitive to misspecification of a prior distribution \citep{Vidakovic2000}. When it is infeasible to specify a prior due to the complexity of the model space, the Gamma-minimax paradigm may also be viewed as a feasible substitute for the Bayesian paradigm. The Gamma-minimax paradigm is closely related to Bayes and minimax paradigms: when the set of priors consists of a single prior, a Gamma-minimax estimator is Bayes with respect to that prior; when the set $\Gamma$ of priors is the entire set of possible prior distributions, a Gamma-minimax estimator is also minimax.

Gamma-minimax estimators have been studied for a variety of problems. Some explicit forms of Gamma-minimax estimators have been obtained. For example, \cite{olman1985} studied Gamma-minimax estimation of the mean of a normal distribution for the set of symmetric and unimodal priors on an interval and obtained an explicit form when this interval is sufficiently small. \cite{Eichenauer-Herrmann1990} generalized this result to more general parametric models and \cite{Eichenauer-Herrmann1994} obtained a further generalization with the requirement of symmetry on the priors dropped. \cite{Chen1988} studied Gamma-minimax estimation for multinomial distributions and the set of priors with bounded mean. \cite{chen1991} studied Gamma-minimax estimation for one-parameter exponential families and the set of priors that place certain bounds on the first two moments. These results do not deal with general model spaces, such as semiparametric or nonparametric models, and general forms of the set of priors that may not directly impose bounds on prior moments of the parameters of interest. One reason for this lack of generality might be that, in the existing literature, Gamma-minimaxity is defined only for parametric models. However, an issue with parametric models is that they often fail to contain the true data-generating mechanism, in which case output from the aforementioned statistical procedures may no longer be interpretable. Another possible reason is that it is typically intractable to analytically derive Gamma-minimax estimators, even for parametric models.

To overcome this lack of analytical tractability, meta-learning algorithms to compute a minimax or Gamma-minimax estimator have been proposed. Still, most of these works focus on parametric models. For example, \cite{Nelson1966} and \cite{Kempthorne1987} each proposed an algorithm to compute a minimax estimator. \cite{Bryan2007} and \cite{Schafer2009} proposed an algorithm to compute an approximate confidence region of optimal expected size in the minimax sense. \cite{Noubiap2001} proposed an iterative algorithm to compute a Gamma-minimax decision for the set of priors constrained by generalized moment conditions. More recent works explored computing estimators under more general models. For example, \cite{Luedtke2020} introduced an approach, termed Adversarial Monte Carlo meta-learning (AMC), for constructing minimax estimators. In the special case of prediction problems with mean-squared error, \cite{Luedtke2020adversarial_prediction} studied the invariance properties of the decision problem and their implications for AMC.

In this paper, we make the following contributions:
\begin{enumerate}[resume]
	\item\label{it:iterativeScheme} %
	We propose iterative adversarial meta-learning algorithms for %
	constructing Gamma-minimax estimators for a general model space and class of estimators. We further provide convergence guarantees for these algorithms.
\end{enumerate}
To our best knowledge, this is the first algorithm to compute Gamma-minimax estimators under general models, including infinite-dimensional models. We also show that, for certain problems, there is a unique Gamma-minimax estimator and our computed estimator converges to this estimator as the number of iterations increases to infinity.

Like the approach proposed in \cite{Noubiap2001}, we consider sets of priors characterized by (in)equality constraints on prior generalized moments and our proposed iterative algorithm involves solving a
discretized Gamma-minimax optimization problem in each intermediate step. 
However, we explicitly describe algorithms to solve these minimax problems, which facilitates the use of our approach by practitioners. When the space of estimators can be parameterized by a Euclidean space and gradients are available, we propose to use a gradient-based algorithm or a stochastic variant thereof. When gradients are unavailable, we propose to instead use fictitious play \citep{Brown1951,Robinson1951} to compute a stochastic estimator, which is a mixture of deterministic estimators belonging to some specified collection. We also provide a convergence result that is applicable even when this collection has infinite cardinality. This is in contrast to the results in %
\cite{Robinson1951}, which require that each player has only finitely many possible deterministic strategies.

\begin{enumerate}[resume]
	\item We propose a Markov chain Monte Carlo (MCMC) method to construct the approximating grids defining the discretized Gamma-minimax problems used in our iterative scheme.
\end{enumerate}
Like the approach proposed in \cite{Noubiap2001}, our proposed iterative algorithm relies on increasingly fine finite grids over the model space. However, since we allow the model space to be high or even infinite-dimensional, randomly adding points to the grid may lead to unacceptably slow convergence. To overcome this challenge, we propose to use MCMC to efficiently construct such grids.

Our theoretical results allow for many different choices of classes of estimators. Our final contribution concerns the introduction of one such class:
\begin{enumerate}[resume]
	\item %
	We introduce a new neural network architecture that can be used to parameterize statistical estimators and argue that this class represents an appealing class to optimize over.
\end{enumerate}
For this final point, we build on existing works in adversarial learning \citep[e.g.,][]{Goodfellow2014,Luedtke2020,Luedtke2020adversarial_prediction} and extreme learning machines \citep{Huang2006}. 
Thanks to the universal approximation properties of neural networks \citep[e.g.,][]{Hornik1991,Csaji2001} and extreme learning machines \citep{Huang2006universal}, we also show that both of these parameterizations can achieve good performance for sufficiently large networks. Furthermore, inspired by pre-training \citep[e.g.,][]{Erhan2010} and transfer learning \citep[e.g.,][]{Torrey2009}, we recommend leveraging knowledge of existing estimators as inputs to the network in settings where this is possible. Under such choices of the space of estimators, we can expect to obtain a useful estimator even if the associated nonconvex-concave minimax problems prove to be difficult.

This paper is organized as follows. In Section~\ref{section: setup}, we introduce the framework of Gamma-minimax estimation and regularity conditions that we assume throughout the paper. In Section~\ref{section: algorithm}, we describe our proposed iterative adversarial meta-learning algorithms. In Section~\ref{section: considerations in implementation}, we discuss considerations when choosing hyperparameters in the algorithms. In Section~\ref{section: simulation}, we demonstrate our method in three simulation studies. We conclude with a discussion in Section~\ref{section: discussion}. Proof sketches of key results are provided in the main text, and complete proofs can be found in the appendix. We also provide a table summarizing the frequently used symbols in Table~\ref{table: symbols} in the appendix.
The code for our simulations is available on GitHub \citep{github}.

\section{Problem setup} \label{section: setup}

Let $\modelspace$ be a separable Hausdorff space of data-generating mechanisms that contains the truth $P_0$ and is equipped with a metric $\rho$.
Under a data-generating mechanism $P \in \modelspace$, let $\observation^* \in \observationspace^*$ denote the random data being generated, where $\observationspace^*$ is the space of values that the random data takes. Let $\coarsen$ denote a known coarsening mechanism such that the observed data $\observation=\coarsen(\observation^*) \in \observationspace$, where $\observationspace$ is the space of observed data. In some cases, the coarsening mechanism will be the identity map, whereas in other settings, such as those in which missing, censored or truncated data is present, the coarsening mechanism will be nontrivial \citep[e.g.,][]{Birmingham2003,Gill1997,Heitjan1991,Heitjan1993,Heitjan1994}. %
Let $\decisionspace$ denote the space of estimators (or decision functions) equipped with a metric $\varrho$. In practice, for computational feasibility, we will mainly consider an estimator space $\decisionspace$ that contains estimators parameterized by a Euclidean space such as linear estimators or neural networks, and approximates a more general space $\decisionspace_0$, for example, the space of all estimators satisfying certain smoothness conditions. We discuss considerations concerning the choice of $\decisionspace$ in Section~\ref{section: choice of estimator space} and note that our proposed methods may be applied to broader estimator classes.
We treat $\decisionspace$ as fixed throughout this paper.
Let $R: \decisionspace \times \modelspace \rightarrow \real$ denote a risk function that measures the performance of an estimator under a data-generating mechanism such that smaller risks are preferable. We suppose throughout that $\modelspace$ and $\decisionspace$ are equipped with the topologies induced by $\rho$ and $\varrho$, respectively.

We now present three examples in which we formulate statistical decision problems in the above form. The first example is a general example of point estimation. We use this example to illustrate how the Gamma-minimax estimation framework naturally fits into many statistical problems. The other two examples are more concrete and we will study them in the simulations and data analyses.

\begin{example}[Point estimation] \label{example: estimation}
	Suppose that $\modelspace$ is a statistical model, which may be parametric, semiparametric, or nonparametric \citep{bickel1993efficient}. The data $\observation^*$ consists of $n$ independently and identically distributed (iid) random variables $O_i$, $i=1,\ldots,n$, following the true distribution $P_0 \in \modelspace$. We set $\coarsen$ to be the identity function so that $\observation=\observation^*$. We wish to estimate an aspect $\Psi(P_0) \in \real$ of $P_0$. Then, we can consider $\decisionspace$ to be a set of $\observationspace \rightarrow \real$ functions and the mean-squared error risk $R(d,P)= \expect_P[\{d(\observation) - \Psi(P)\}^2]$. Some specific examples of estimands include:
	\begin{enumerate}[label=\roman*)]
		\item Mean: $\Psi(P)=\expect_P[O_i]$;
		\item Cumulative distribution function at a point $o$: $\Psi(P)=\mathbb{P}_P(O_i \leq o)$;
		\item Correlation: with $O_i=(X_i,Y_i) \in \real^2$, $\Psi(P)=\expect_P[X_i Y_i] - \expect_P[X_i] \expect_P[Y_i]$.
	\end{enumerate}
\end{example}

\begin{example}[Predicting the expected number of novel categories to be observed in a new sample] \label{example: predict n new species}
	Suppose that $\modelspace$ consists of multinomial distributions with an unknown number of categories. Let an iid random sample of size $n$ be generated from the true multinomial distribution, so that $\observation^*$ is a multiset containing the number $X_k$ of observations in each category $k$. Suppose that only categories with nonzero occurrences are observed, so that $\observation$ is a left-truncated version of $\observation^*$. In other words, $\observation$ is the multiset $\coarsen(\observation^*)=\{X_k: X_k > 0\}$. Then, we may wish to predict the number of new categories that would be observed if a new sample of size $m$ were collected. This problem has been extensively studied in the literature, with applications in microbiome data, species taxonomic surveys, and assessment of vocabulary size, among other areas \citep[e.g.,][]{Shen2003,Bunge2014,Orlitsky2016}. This prediction problem can be formulated in our framework. For each $P \in \modelspace$, let $p_k$ ($k=1,\ldots,K_P$) be the probability of category $k$, and $\Psi(P)(\observation^*)$ be $\sum_{k=1}^{K_P} I(X_k=0) (1-(1-p_k)^m)$, the expected number of new observed categories given the current full data $\observation^*$. We consider $\decisionspace$ to be a set of $\observationspace \rightarrow \real$ functions and set the risk to be the mean-squared error, that is, $R(d,P)= \expect_P[\{d(\observation) - \Psi(P)(\observation^*)\}^2]$. This prediction problem is known to be intrinsically difficult when the future sample size $m$ is greater than the observed sample size $n$, and we might expect prior information to substantially improve prediction.
\end{example}

\begin{example}[Entropy estimation] \label{example: estimate entropy}
	Consider the same data-generating mechanism and observed data as in Example~\ref{example: predict n new species}. We may wish to estimate Shannon entropy \citep{Shannon1948} $\Psi(P) = -\sum_{k=1}^{K_P} p_k \log p_k$, a measure of diversity. We consider $\decisionspace$ to be a set of $\observationspace \rightarrow \real$ functions and set the risk to be the mean-squared error, that is, $R(d,P) = \expect_P[ \{d(\observation) - \Psi(P)\}^2]$. \cite{Jiao2015} proposed a rate-minimax estimator. Thus, in contrast to Example~\ref{example: predict n new species}, this is an example of a statistical problem with a satisfactory solution. For these problems, we might not expect prior information to substantially improve estimation.
\end{example}

We now define Gamma-minimaxity within our decision-theoretic framework. We assume that $\modelspace$ is equipped with the Borel $\sigma$-field $\mathcal{B}$ and let $\Pi$ denote the set of all probability distributions on the measurable space $(\modelspace,\mathcal{B})$. We also assume that, for any $d \in \decisionspace$ and any $\pi \in \Pi$, $P \mapsto R(d,P)$ is $\pi$-integrable. The Bayes risk corresponding to an estimator $d$ and a prior $\pi$ is defined as $r: (d,\pi) \mapsto \int R(d,P) \, \pi(\intd P)$. Let $\Gamma \subseteq \Pi$ be the set of priors such that all $\pi \in \Gamma$ are consistent with the available prior information. An estimator is called a $\Gamma$-minimax estimator if it is in the set
\begin{equation}
	\argmin_{d \in \decisionspace} \sup_{\pi \in \Gamma} r(d,\pi). \label{equation: define Gamma minimax}
\end{equation}
Throughout the rest of this paper, we assume the existence of this solution set and other solution sets to minimax problems, and that $\sup_{\pi \in \Gamma} r(d,\pi)$ is finite for any $d \in \decisionspace$.

In this paper, we consider the case in which $\Gamma$ is characterized by finitely many generalized moment conditions, that is,
$$\Gamma=\left\{ \pi \in \Pi: \Phi_k \in L^1(\pi), \int \Phi_k(P) \, \pi(\intd P) \leq c_k, k=1,\ldots,K \right\}$$
where each $\Phi_k: \modelspace \rightarrow \real$ is a prespecified function that extracts an aspect of a data-generating mechanism and $c_k \in \real$ is a prespecified constant.
The validity of our proposed template to find a $\Gamma$-minimax estimator in Section~\ref{section: increasing grid} does not require $\Gamma$ to take this form, but our proposed algorithms in Sections~\ref{section: SGDmax} and \ref{section: fictitious play} are computationally feasible for such constraints because these linear constraints lead to linear programs, which can be solved efficiently \citep[e.g.,][]{Jiang2020}. In principle, more general constraints can be handled by using suitable minimax problem solvers.
Such constraints were considered in \cite{Noubiap2001} and can represent a variety of forms of prior information. For example, with $\Phi_k=\pm \Psi^\kappa$ for some $\kappa \geq 1$, $\Gamma$ imposes bounds on prior moments of $\Psi(P)$; with $\Phi_k(P)=\pm \ind(\Psi(P) \in I)$ for some known interval $I$, $\Gamma$ imposes bounds on the prior probability of $\Psi(P)$ lying in $I$. Similar prior information on aspects of $P_0$ other than $\Psi(P_0)$ can also be represented. In addition, since an equality can be equivalently expressed by two inequalities, $\Gamma$ may also impose equality constraints on prior generalized moments.
Such information is commonly used to choose prior distributions in Bayesian settings \citep{Sarma2020}. Since we do not require specifying a parametric model or specifying an entire prior distribution for any finite-dimensional summary of $P_0$, specifying a set $\Gamma$ of prior distributions in the above form is no more difficult --- and often easier --- than specifying a single prior distribution, as would be required in a Bayesian approach.

\section{Proposed meta-learning algorithms to compute a $\Gamma$-minimax estimator} \label{section: algorithm}

Since both the model space $\modelspace$ and the estimator space $\decisionspace$ may be infinite, it is computationally infeasible to directly solve the minimax problem \eqref{equation: define Gamma minimax} defining a $\Gamma$-minimax estimator. Similarly to \cite{Noubiap2001}, our general strategy is to discretize $\modelspace$ and thus consider prior distributions with discrete supports. Once the supports of prior distributions are discrete, the optimization over prior distributions only involves finitely many parameters, namely the probability masses at support points, and thus is computationally possible. We will show that, when the grid is sufficiently fine, a solution to the discretized minimax problem is close to a solution to the original minimax problem.

Our proposed algorithm consists of two main steps. The first step is to discretize the model space $\modelspace$ and consider an approximating grid $\modelspace_\ell$ instead of the original complicated model space $\modelspace$. This discretization is illustrated in Fig.~\ref{fig: grid}. We will describe $\modelspace_\ell$ in more detail in Section~\ref{section: increasing grid}. In the second step, we consider a set $\Gamma_\ell$ of priors with support contained $\modelspace_\ell$ and compute a $\Gamma_\ell$-minimax estimator. We will describe two classes of algorithms to solve this discretized minimax problem in Sections~\ref{section: SGDmax} and \ref{section: fictitious play}, respectively.

\begin{figure}[bt!]
	\centering
	\includegraphics[scale=.35]{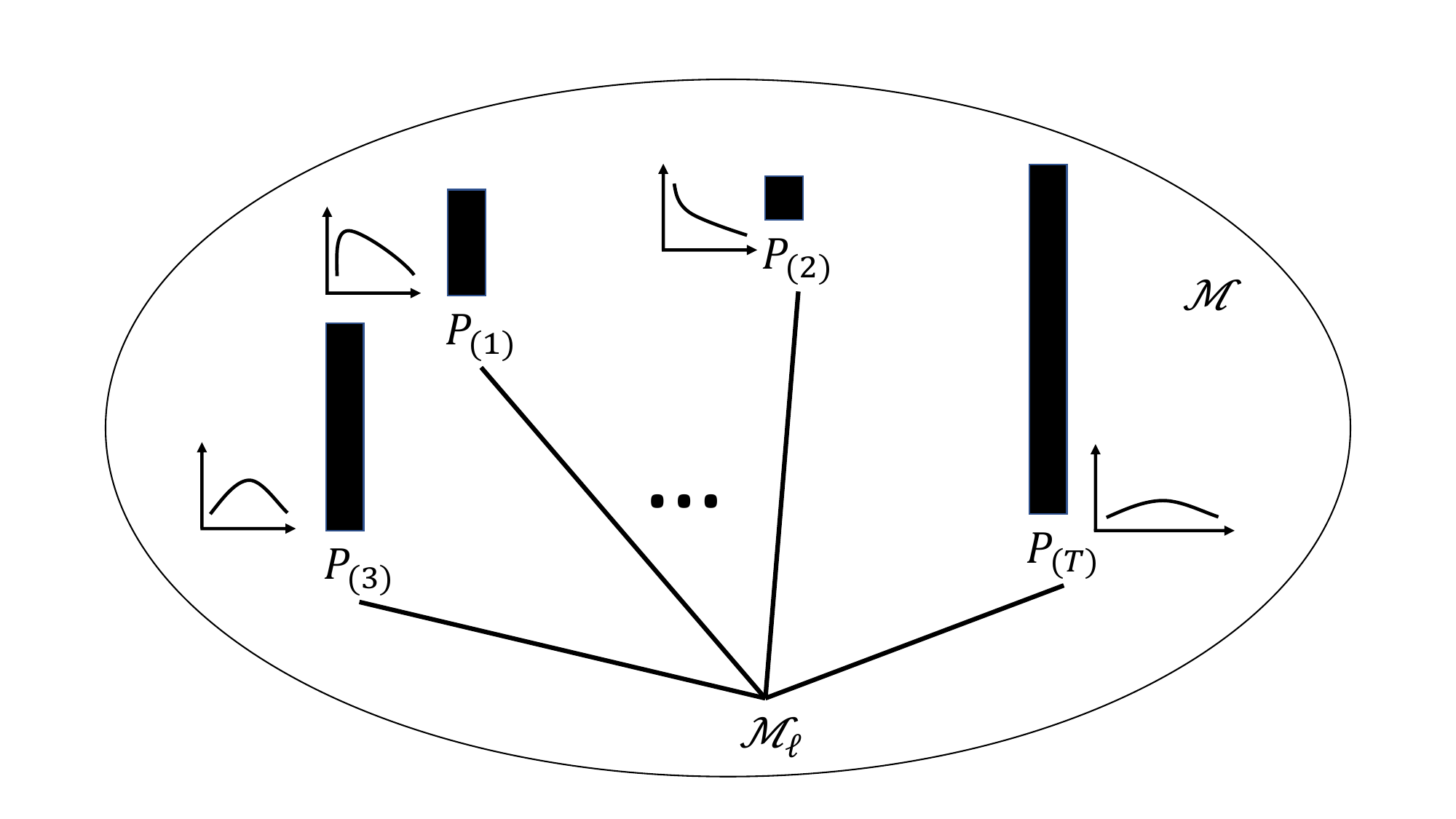}
	\caption{Illustration of grid $\modelspace_\ell=\{P_{(1)},P_{(2)},P_{(3)},\ldots,P_{(T)}\} \subseteq \modelspace$ approximating the entire model space $\modelspace$. Examples of densities of distributions $P_{(t)}$ ($t=1,\ldots,T$) in the grid are displayed. A prior distribution with support in $\modelspace_\ell$ is parameterized by the probability mass at each distribution $P_{(t)}$. An example of a prior distribution is displayed as black bars with their heights being proportional to the probability masses.}
	\label{fig: grid}
\end{figure}

\subsection{Grid-based approximation of $\Gamma$-minimax estimators} \label{section: increasing grid}

We first define the discretization of the model space $\modelspace$ that we will use. Let $\{\modelspace_\ell\}_{\ell=1}^\infty$ be an increasing sequence of finite subsets of $\modelspace$ such that $\bigcup_{\ell=1}^\infty \modelspace_\ell$ is dense in $\modelspace$. That is, $\{\modelspace_\ell\}_{\ell=1}^\infty$ is an increasingly fine grid over $\modelspace$. Since $\modelspace$ is separable, such an $\{\modelspace_\ell\}_{\ell=1}^\infty$ necessarily exists. Define
$$\Gamma_\ell := \{\pi \in \Gamma: \pi \text{ has support in } \modelspace_\ell\} \qquad \text{and} \qquad r_{\sup}(d,\Gamma') := \sup_{\pi \in \Gamma'} r(d,\pi)$$ for any $d \in \decisionspace$ and $\Gamma' \subseteq \Pi$.

Algorithm~\ref{algorithm: approximation with increasingly fine grid} describes how the grids $\modelspace_\ell$ are used to compute an approximately $\Gamma$-minimax estimator in our proposed algorithms. We will show that the approximation error decays to zero as $\ell$ grows to infinity.
Here and in the rest of the algorithms in the paper, for any real-valued function $f$, when we assign $\argmin_x f(x)$ or $\argmax_x f(x)$ to a variable, we arbitrarily pick a minimizer or maximizer if there are multiple optimizers. In practice, the user may stop the iteration at some $\ell$ and use a $\Gamma_\ell$-minimax estimator $d^*_\ell$ as the output estimator. We discuss the stopping criterion in more detail at the end of this section.

\begin{algorithm}
	\caption{Iteratively approximate a $\Gamma$-minimax estimator over an increasingly fine grid.}
	\label{algorithm: approximation with increasingly fine grid}
	\begin{algorithmic}[1]
		\For{$\ell=1,2,\ldots$}
		\State Construct a grid $\modelspace_{\ell} \subseteq \modelspace$ such that $\modelspace_{\ell-1} \subsetneq \modelspace_{\ell}$
		\State $d^*_\ell \gets \argmin_{d \in \decisionspace} \sup_{\pi \in \Gamma_\ell} r(d,\pi)$ \label{step: Gamma_l minimax}
		\EndFor
	\end{algorithmic}
\end{algorithm}

We note that the minimax problem in Line~\ref{step: Gamma_l minimax} of Algorithm~\ref{algorithm: approximation with increasingly fine grid} is nontrivial to solve, and therefore we propose two algorithms that can solve this minimax problem in Sections~\ref{section: SGDmax} and \ref{section: fictitious play}.

We assume that the following conditions hold throughout the rest of the paper.

\begin{condition} \label{condition: limit point}
	There exists a limit point $d^* \in \decisionspace$ of the sequence $\{d_\ell^*\}_{\ell=1}^\infty$.
\end{condition}

Condition~\ref{condition: limit point} holds if the sequence $\{d_\ell^*\}_{\ell=1}^\infty$ eventually falls in a compact set. For example, if $\decisionspace$ is a space of neural networks and we take $\varrho$ to be the Euclidean norm in the coefficient space, then we expect Condition~\ref{condition: limit point} to hold if all coefficients are restricted to fall in a bounded set, which is a common restriction in theoretical analyses of neural networks \citep[see, e.g.,][]{Goel2016,Zhang2016,Eckle2019} and often leads to desirable generalization bounds \citep[see, e.g.,][]{Bartlett1997,Bartlett2017,Neyshabur2017}.
Our theoretical results hold for any limit point $d^*$ in Condition~\ref{condition: limit point}, even if there is more than one of them.

\begin{condition} \label{condition: conditions on risk}
	The mapping $d \mapsto R(d,P)$ is continuous at $d^*$ for all $P \in \modelspace$.
\end{condition}

Condition~\ref{condition: conditions on risk} also often holds. For example, when parameterized using neural networks, all estimators are continuous functions of coefficients for common activation functions such as the sigmoid or the rectified linear unit (ReLU) \citep{Glorot2011} function, and therefore $d \mapsto R(d,P)$ is continuous everywhere.

We next present a sufficient condition to ensure that $d^*$ is $\Gamma$-minimax, so that $d^*_\ell$ is approximately $\Gamma$-minimax for sufficiently large $\ell$.

\begin{condition} \label{condition: subset of modelspace}
	We assume that there exists an increasing sequence $\{\Omega_\ell\}_{\ell=1}^\infty$ of subsets of $\modelspace$ such that
	\begin{enumerate}
		\item $\bigcup_{\ell=1}^\infty \Omega_\ell = \modelspace$; \label{condition: subset of modelspace, dense}
		\item for all $\ell=1,2,\ldots$ and all $d \in \decisionspace$, define $\tilde{\Gamma}_\ell := \{\pi \in \Gamma: \pi \text{ has support in } \Omega_\ell\}$ and $\tilde{\Gamma}_{i \mid \ell} := \{\pi \in \Gamma: \pi \text{ has support in } \modelspace_i \bigcap \Omega_\ell\}$. For any $\pi \in \tilde{\Gamma}_\ell$ with a finite support, there exists a sequence $\pi_i \in \tilde{\Gamma}_{i \mid \ell}$ such that $r(d,\pi_i) \rightarrow r(d,\pi)$ as $i \rightarrow \infty$. \label{condition: subset of modelspace, grid approximate subset}
	\end{enumerate}
\end{condition}

We note that, in contrast to $\modelspace_\ell$, $\Omega_\ell$ may be an infinite set. 
We may expect Condition~\ref{condition: subset of modelspace} to hold in many cases, especially when $P \mapsto R(d,P)$ is continuous at each $d \in \decisionspace$ and the grid $\modelspace_\ell$ contains a variety of distributions that are consistent with prior information represented by $\Gamma$. We illustrate this point with two counterexamples in Appendix~\ref{section: counterexample of subset of modelspace}. We will check the plausibility of Condition~\ref{condition: subset of modelspace} for Example~\ref{example: predict n new species} in our simulation and data analysis in Section~\ref{section: simulation n new species} for exemplar prior information; an almost identical argument shows the plausibility of Condition~\ref{condition: subset of modelspace} for Example~\ref{example: estimate entropy}.

We now present the theorem on $\Gamma$-minimaxity of $d^*$.

\begin{theorem}[Validity of grid-based approximation] \label{theorem: approximate Gamma-minimax on a grid}
	Under Conditions~\ref{condition: limit point}--\ref{condition: subset of modelspace}, $d^*$ is $\Gamma$-minimax and
	$$r_{\sup}(d_\ell^*,\Gamma_\ell) \nearrow \min_{d \in \decisionspace} r_{\sup}(d,\Gamma) \quad \text{as} \quad \ell \rightarrow \infty.$$
\end{theorem}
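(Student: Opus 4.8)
The plan is to show two things: first, that $r_{\sup}(d_\ell^*,\Gamma_\ell)$ is a nondecreasing sequence whose limit equals $\min_{d}r_{\sup}(d,\Gamma)$; and second, that the limit point $d^*$ attains that value, hence is $\Gamma$-minimax. The overall structure is a sandwich argument: $r_{\sup}(d_\ell^*,\Gamma_\ell)$ is bounded above by $\min_d r_{\sup}(d,\Gamma)$ for every $\ell$ (since $\Gamma_\ell\subseteq\Gamma$), and the subsequence along which $d_{\ell_j}^*\to d^*$ has a liminf of its worst-case Bayes risk that is bounded below by $r_{\sup}(d^*,\Gamma)\ge\min_d r_{\sup}(d,\Gamma)$. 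If both inequalities hold, all quantities collapse to $\min_d r_{\sup}(d,\Gamma)$, which simultaneously gives the convergence and the $\Gamma$-minimaxity of $d^*$.

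First I would establish monotonicity and the upper bound. Since $\modelspace_{\ell-1}\subsetneq\modelspace_\ell$, we have $\Gamma_{\ell-1}\subseteq\Gamma_\ell$, so $r_{\sup}(d,\Gamma_{\ell-1})\le r_{\sup}(d,\Gamma_\ell)$ for every $d$; taking the min over $d$ shows $r_{\sup}(d_{\ell-1}^*,\Gamma_{\ell-1})\le r_{\sup}(d_\ell^*,\Gamma_\ell)$, i.e. the sequence is nondecreasing. For the upper bound, fix any $d\in\decisionspace$; since $\Gamma_\ell\subseteq\Gamma$ we get $r_{\sup}(d_\ell^*,\Gamma_\ell)=\min_{d'}r_{\sup}(d',\Gamma_\ell)\le r_{\sup}(d,\Gamma_\ell)\le r_{\sup}(d,\Gamma)$, and minimizing the right side over $d$ gives $r_{\sup}(d_\ell^*,\Gamma_\ell)\le\min_d r_{\sup}(d,\Gamma)$ for all $\ell$. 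Hence the monotone sequence converges to some limit $L\le\min_d r_{\sup}(d,\Gamma)$.

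Next I would prove the matching lower bound $L\ge r_{\sup}(d^*,\Gamma)$, which is the crux. Pass to a subsequence with $d_{\ell_j}^*\to d^*$. It suffices to show that for every $\pi\in\Gamma$ we have $r(d^*,\pi)\le L$; actually, by Condition~\ref{condition: subset of modelspace} it suffices to handle finitely supported $\pi$ in $\tilde\Gamma_\ell$ for each $\ell$, since $\bigcup_\ell\Omega_\ell=\modelspace$ and one can approximate a general $\pi\in\Gamma$ by finitely supported priors with support in the $\Omega_\ell$ — this reduction is where I expect the bookkeeping to be delicate, and I would lean on the two-counterexamples discussion to confirm the hypotheses are genuinely being used. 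Given such a finitely supported $\pi\in\tilde\Gamma_\ell$, Condition~\ref{condition: subset of modelspace}\ref{condition: subset of modelspace, grid approximate subset} supplies $\pi_i\in\tilde\Gamma_{i\mid\ell}\subseteq\Gamma_i$ with $r(d,\pi_i)\to r(d,\pi)$; combining the continuity of $d\mapsto R(d,P)$ at $d^*$ (Condition~\ref{condition: conditions on risk}), hence of $d\mapsto r(d,\pi)$ at $d^*$ for each fixed finitely supported prior, with $r(d_i^*,\pi_i)\le r_{\sup}(d_i^*,\Gamma_i)\le L$, I would pass to the limit along a suitably diagonalized index to conclude $r(d^*,\pi)\le L$. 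Taking the supremum over admissible $\pi$ and invoking the reduction yields $r_{\sup}(d^*,\Gamma)\le L$.

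Finally, assembling the pieces: $\min_d r_{\sup}(d,\Gamma)\le r_{\sup}(d^*,\Gamma)\le L\le\min_d r_{\sup}(d,\Gamma)$, so equality holds throughout. This shows $d^*$ achieves the minimax value over $\Gamma$, hence is $\Gamma$-minimax, and that the monotone sequence $r_{\sup}(d_\ell^*,\Gamma_\ell)$ increases to $\min_d r_{\sup}(d,\Gamma)$, as claimed. The main obstacle is the lower-bound step — specifically, correctly interchanging the three limiting operations (the subsequence $\ell_j\to\infty$, the grid-approximation index $i\to\infty$ in Condition~\ref{condition: subset of modelspace}\ref{condition: subset of modelspace, grid approximate subset}, and the finite-support approximation of an arbitrary $\pi\in\Gamma$), and doing so without uniformity in $\pi$, which forces the argument to be carried out prior-by-prior rather than by a single uniform convergence statement.
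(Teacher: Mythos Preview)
Your proposal is correct and follows essentially the same route as the paper: the upper bound from $\Gamma_\ell\subseteq\Gamma$ and monotonicity, then the lower bound by reducing to finitely supported priors (the paper cites \cite{Pinelis2016} for this reduction, which is the ``delicate bookkeeping'' you anticipate), approximating those via Condition~\ref{condition: subset of modelspace} by priors in $\Gamma_i$, and passing to the limit using continuity at $d^*$. The only cosmetic difference is that the paper packages the lower-bound step through auxiliary lemmas at the level of $r_{\sup}(d^*,\tilde\Gamma_\ell)$ and $r_{\sup}(d^*,\tilde\Gamma_{i\mid\ell})$ with an $\epsilon/3$ argument, whereas you work prior-by-prior and diagonalize; the underlying logic is identical.
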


To prove Theorem~\ref{theorem: approximate Gamma-minimax on a grid}, we utilize a result in \cite{Pinelis2016} to establish that $r_{\sup}(d,\Gamma)$ can be approximated arbitrarily well by a discrete prior in $\Gamma$ for any $d \in \decisionspace$. This is a key ingredient in the proof of Lemma~\ref{lemma: approximate r_sup on subset}, which states that, for any $d\in\decisionspace$, $r_{\sup}(d,\tilde{\Gamma}_\ell)$ converges to $r_{\sup}(d,\Gamma)$. Then, we show that the sequence $\{r_{\sup}(d^*_\ell,\Gamma_{\ell})\}_{\ell=1}^\infty$ is nondecreasing and upper bounded by $\inf_{d \in \decisionspace} r_{\sup}(d,\Gamma)$, which is less than or equal to the $\Gamma$-maximal Bayes risk $r_{\sup}(d^*,\Gamma)$ of the limit point $d^*$ of $\{d_{\ell}^*\}_{\ell=1}^\infty$ in Condition~\ref{condition: limit point}. Therefore, $r_{\sup}(d^*_\ell,\Gamma_{\ell})$ converges to a limit. We finally use a contradiction argument to prove that this limit is greater than or equal to $r_{\sup}(d^*,\Gamma)$, which implies Theorem~\ref{theorem: approximate Gamma-minimax on a grid}.

We have the following corollary on the uniqueness of the $\Gamma$-minimax estimator and the convergence of $\{d^*_\ell\}_{\ell=1}^\infty$ for certain problems.

\begin{corollary}[Convergence of $\Gamma_\ell$-minimax estimator] \label{corollary: convexity; approximate Gamma-minimax on a grid}
	Suppose that $\decisionspace$ is a convex subset of a vector space, $d \mapsto R(d,P)$ is strictly convex for each $P \in \modelspace$, and $r_{\sup}(d,\Gamma)$ is attainable for each $d \in \decisionspace$ in the sense that, for all $d \in \decisionspace$, there exists a $\pi \in \Gamma$ such that $r(d,\pi)=r_{\sup}(d,\Gamma)$. Under Conditions~\ref{condition: limit point}--\ref{condition: subset of modelspace}, $d^*$ is the unique $\Gamma$-minimax estimator and
	$$d^*_\ell \rightarrow d^* \quad \text{as} \quad \ell \rightarrow \infty.$$
\end{corollary}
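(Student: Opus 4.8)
The plan is to first establish uniqueness of the $\Gamma$-minimax estimator and then bootstrap it, via Theorem~\ref{theorem: approximate Gamma-minimax on a grid}, to convergence of $\{d^*_\ell\}$. Write $v:=\min_{d\in\decisionspace} r_{\sup}(d,\Gamma)$, which is attained at $d^*$ by Theorem~\ref{theorem: approximate Gamma-minimax on a grid}. Suppose toward a contradiction that $d_1\neq d_2$ are both $\Gamma$-minimax, and set $\bar d:=\tfrac12 d_1+\tfrac12 d_2$, which lies in $\decisionspace$ by convexity. Strict convexity of $d\mapsto R(d,P)$ gives $R(\bar d,P)<\tfrac12 R(d_1,P)+\tfrac12 R(d_2,P)$ for \emph{every} $P\in\modelspace$; since each $\pi\in\Gamma$ is a probability measure and the integrands are $\pi$-integrable, integrating this strict pointwise inequality yields $r(\bar d,\pi)<\tfrac12 r(d_1,\pi)+\tfrac12 r(d_2,\pi)\le v$ for every $\pi\in\Gamma$, where the last step uses $r(d_i,\pi)\le r_{\sup}(d_i,\Gamma)=v$. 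By the assumed attainability of $r_{\sup}(\bar d,\Gamma)$, there is $\pi^\star\in\Gamma$ with $r_{\sup}(\bar d,\Gamma)=r(\bar d,\pi^\star)<v$, contradicting the definition of $v$ as the minimum of $r_{\sup}(\cdot,\Gamma)$ over $\decisionspace$. Hence the $\Gamma$-minimax estimator is unique, and by Theorem~\ref{theorem: approximate Gamma-minimax on a grid} it equals $d^*$.

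Next I would show that \emph{every} limit point of $\{d^*_\ell\}_{\ell=1}^\infty$ is $\Gamma$-minimax. The argument establishing Theorem~\ref{theorem: approximate Gamma-minimax on a grid} applies to an arbitrary limit point: it uses only the monotonicity and boundedness of $\{r_{\sup}(d^*_\ell,\Gamma_\ell)\}$ (forcing its limit to be $\le \inf_d r_{\sup}(d,\Gamma)=v$) together with the lower-semicontinuity argument at the limit point built on Lemma~\ref{lemma: approximate r_sup on subset} and Condition~\ref{condition: conditions on risk}, which gives $r_{\sup}(\tilde d,\Gamma)\le\lim_\ell r_{\sup}(d^*_\ell,\Gamma_\ell)$ for any limit point $\tilde d$. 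Hence $r_{\sup}(\tilde d,\Gamma)\le v$, and since $v$ is the minimum we get $r_{\sup}(\tilde d,\Gamma)=v$, i.e., $\tilde d$ is $\Gamma$-minimax; by the uniqueness just proved, $\tilde d=d^*$. So $d^*$ is the \emph{only} limit point of $\{d^*_\ell\}$.

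Finally I would upgrade ``the only limit point is $d^*$'' to $d^*_\ell\to d^*$. In a general metric space this step needs the tail of $\{d^*_\ell\}$ to be relatively compact, which is exactly the situation in which Condition~\ref{condition: limit point} is verified in practice (e.g., neural-network coefficients restricted to a bounded set in a finite-dimensional $\decisionspace$); assuming $\{d^*_\ell\}$ eventually lies in a compact set $\mathcal{K}\subseteq\decisionspace$, every subsequence of $\{d^*_\ell\}$ has a further subsequence converging in $\mathcal{K}$, necessarily to $d^*$, and a sequence in a compact metric space all of whose convergent subsequences share the limit $d^*$ converges to $d^*$.

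I expect the main obstacle to be precisely this last passage from ``$d^*$ is the unique subsequential limit'' to ``$d^*_\ell\to d^*$'': strict convexity (as opposed to \emph{uniform} strong convexity) does not by itself furnish a quantitative stability bound turning $r_{\sup}(d^*_\ell,\Gamma_\ell)\to v$ into $\varrho(d^*_\ell,d^*)\to 0$, so one must lean on the relative compactness underlying Condition~\ref{condition: limit point}. The uniqueness half is comparatively routine once one notices that strict convexity is integrated against a probability measure (so strictness is preserved) and that the attainability hypothesis is what converts ``$r(\bar d,\pi)<v$ for every $\pi$'' into ``$r_{\sup}(\bar d,\Gamma)<v$.''
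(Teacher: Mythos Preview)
Your proposal is correct and follows essentially the same approach as the paper: establish uniqueness of the $\Gamma$-minimax estimator from strict convexity plus attainability (the paper phrases this as showing $d\mapsto r_{\sup}(d,\Gamma)$ is strictly convex, you as a contradiction with two putative minimizers---these are equivalent), then invoke Theorem~\ref{theorem: approximate Gamma-minimax on a grid} to identify every limit point of $\{d^*_\ell\}$ with $d^*$. You are, if anything, more explicit than the paper about the compactness needed to pass from ``unique limit point'' to $d^*_\ell\to d^*$; the paper leaves that step implicit in its reliance on Condition~\ref{condition: limit point}.
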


We prove Corollary~\ref{corollary: convexity; approximate Gamma-minimax on a grid} by establishing that $d \mapsto r_{\sup}(d,\Gamma)$ is strictly convex.

In practice, the user also needs to specify a stopping criterion for Algorithm~\ref{algorithm: approximation with increasingly fine grid}. In \cite{Noubiap2001}, the authors recommended computing or approximating
$r_{\sup}(d^*_\ell,\Gamma)$ and stop if $r_{\sup}(d^*_\ell,\Gamma)$ is sufficiently close to $r_{\sup}(d^*_\ell,\Gamma_{\ell})$. However, the procedure to approximate $r_{\sup}(d^*_\ell,\Gamma)$ in that work relies on the compactness of $\modelspace$, but we do not want to assume this condition because it may restrict the applicability of the method. Therefore, we propose to use the following alternative criterion: stop if $r_{\sup}(d^*_\ell,\Gamma_{\ell+1}) - r_{\sup}(d^*_\ell,\Gamma_\ell) \leq \epsilon$ for a prespecified tolerance level $\epsilon>0$. This criterion was proposed but not recommended in \cite{Noubiap2001} because it
does not guarantee that $r_{\sup}(d^*_\ell,\Gamma_\ell)$ is close to $r_{\sup}(d^*,\Gamma)$. For example, if $\modelspace_{\ell+1} \setminus \modelspace_\ell$ is small, it is even possible that $r_{\sup}(d^*_\ell,\Gamma_{\ell+1}) - r_{\sup}(d^*_\ell,\Gamma_\ell) = 0$, but $d^*_\ell$ is far from being $\Gamma$-minimax.
In contrast, we recommend this criterion for our proposed methods because we allow more flexibility in model specification, that is, $\modelspace$ need not be compact.
We discuss this issue in more detail in Section~\ref{section: increase grid with MCMC}.

We finally remark that $r_{\sup}(d,\Gamma_\ell)$ may be difficult to evaluate exactly. Since the risk is often an expectation, we recommend approximating $r_{\sup}(d,\Gamma_\ell)$ for any given $d$ via Monte Carlo as follows: first, estimate risks $R(d,P)$ for all $P \in \modelspace_\ell$ with a large number of Monte Carlo runs; second, estimate the corresponding least favorable prior $\pi_{d,\ell} \in \argmax_{\pi \in \Gamma_\ell} r(d,\pi)$ using the estimated risks; third, estimate the risks $R(d,P)$ ($P \in \modelspace_\ell$) again with independent Monte Carlo runs, and, finally, calculate $r(d,\pi_{d,\ell})$ with the estimated risks and the estimated least favorable prior. Using two independent estimates of the risk can remove the positive bias that would otherwise arise due to using the same data to estimate the risks and the least favorable prior.

\subsection{Computation of an estimator on a grid via stochastic gradient descent with max-oracle} \label{section: SGDmax}

In this section, we present methods to compute a $\Gamma_\ell$-minimax estimator, which corresponds to Line~\ref{step: Gamma_l minimax} in Algorithm~\ref{algorithm: approximation with increasingly fine grid}. Gradient descent with max-oracle (GDmax) and its stochastic variant (SGDmax), which were presented in \cite{Lin2020}, can be used to solve general minimax problems in Euclidean spaces. We focus on SGDmax in the main text and present GDmax in Appendix~\ref{section: GDmax}. To apply these algorithms to find a $\Gamma_\ell$-m~inimax estimator, we need to assume that $\decisionspace$ can be parameterized by a subset of a Euclidean space, that is, that for any $d \in \decisionspace$, there exists a real vector-valued coefficient
$\beta \in \real^D$
such that $d$ may be written as $d(\beta)$. For example, $\decisionspace$ may be a neural network class. More discussions on the parameterization of $\decisionspace$ can be found in Section~\ref{section: choice of estimator space}. In this section, in a slight abuse of notation, we define $R(\beta,P):=R(d(\beta),P)$, $r(\beta,\pi):=r(d(\beta),\pi)$ and $r_{\sup}(\beta,\Gamma_\ell) := r_{\sup}(d(\beta),\Gamma_\ell)$ for a coefficient $\beta \in \real^D$, a data-generating mechanism $P \in \modelspace$ and a prior $\pi \in \Gamma$. We assume that $\beta \mapsto R(\beta,P)$ is differentiable for all $P \in \modelspace$, and hence so is $\beta \mapsto r(\beta,\pi)$ for all $\pi \in \Gamma$.

It is often the case that $R(\beta,P)$ is expressed as an expectation. In this case, $R(\beta,P)$ may instead be approximated using Monte Carlo techniques. With $\xi$ being an exogenous source of randomness according to law $\Xi$, let $\hat{R}(\beta,P,\xi)$ be an unbiased approximation of $R(\beta,P)$ with $\expect[ \| \nabla_\beta \{\hat{R}(\beta,P,\xi) - R(\beta,P)\} \|^2 ] \leq \sigma^2 <\infty$, where $\| \cdot \|$ denotes the $\ell_2$-norm in Euclidean spaces. Let $\hat{r}(\beta,\pi,\xi) := \int \hat{R}(\beta,P,\xi) \, \pi(\intd P)$ for $\pi \in \Gamma_\ell$. In this case, SGDmax (Algorithm~\ref{algorithm: SGDmax}) may be used to find a (locally) $\Gamma_\ell$-minimax estimator. Note that Algorithm~\ref{algorithm: SGDmax} represents a generalization of the nested minimax AMC strategy in \cite{Luedtke2020} to $\Gamma_{\ell}$-minimax problems.

\begin{algorithm}
	\caption{Stochastic gradient descent with max-oracle (SGDmax) to compute a $\Gamma_\ell$-minimax estimator}
	\label{algorithm: SGDmax}
	\begin{algorithmic}[1]
		\State Initialize $\beta_{(0)} \in \real^D$. Set learning rate $\eta>0$, max-oracle accuracy $\zeta>0$ and batch size $J$.
		\For{$t=1,2,\ldots$}
		\State Stochastic maximization: use a stochastic procedure to find $\pi_{(t)} \in \Gamma_\ell$ such that $\expect[r(\beta_{(t-1)},\pi_{(t)})] \geq \max_{\pi \in \Gamma_\ell} r(\beta_{(t-1)},\pi) - \zeta$, where the expectation is over the randomness in stochastic maximization (e.g., variants of stochastic gradient ascent). \label{step: SGDmax stochastic max}
		\State Generate iid copies $\xi_1,\ldots,\xi_J$ of $\xi$.
		\State Stochastic gradient descent: $\beta_{(t)} \gets \beta_{(t-1)} - \frac{\eta}{J} \sum_{j=1}^{J} \nabla_\beta \hat{r}(\beta,\pi_{(t)},\xi_j) |_{\beta = \beta_{(t-1)}}$.
		\EndFor
	\end{algorithmic}
\end{algorithm}

We next present two conditions needed for the validity of Algorithm~\ref{algorithm: SGDmax}.

\begin{condition} \label{condition: uniform Lipschitz on R}
	For each $\ell=1,2,\ldots$ and all
	$\beta \in \real^D$,
	$\beta \mapsto R(\beta,P)$ is Lipschitz continuous with a universal Lipschitz constant $L_1$ independent of $P \in \modelspace_\ell$.
\end{condition}

Note that Condition~\ref{condition: uniform Lipschitz on R} differs from Condition~\ref{condition: conditions on risk} in that the former relies on the parameterization of $\decisionspace$ in a Euclidean space equipped with the Euclidean norm, while the latter may rely on a different metric on $\decisionspace$ such as an $L^2$-distance.

\begin{condition} \label{condition: uniform Lipschitz on R'}
	For each $\ell=1,2,\ldots$ and all
	$\beta \in \real^D$,
	$\nabla_\beta R(\beta,P)$ is bounded; $\beta \mapsto \nabla_\beta R(\beta,P)$ is Lipschitz continuous with a universal Lipschitz constant $L_2$ independent of $P \in \modelspace_\ell$.
\end{condition}

Under these conditions, using the results in \cite{Lin2020}, we can show that SGDmax yields an approximation to a local minimum of $\beta \mapsto r_{\sup}(\beta,\Gamma_\ell)$ when the algorithms' hyperparameters are suitably chosen. Before we formally present the theorem, we introduce some definitions related to the local optimality of a potentially nondifferentiable and nonconvex function. A real-valued function $f$ is called $q$-weakly convex if $x \mapsto f(x) + (q/2) \| x \|^2$ is convex ($q>0$). The Moreau envelope of a real-valued function $f$ with parameter $q>0$ is $f_q: x \mapsto \min_{x'} f(x') + \| x'-x \|^2/(2q)$. A point $x$ is an $\epsilon$-stationary point ($\epsilon \geq 0$) of a $q$-weakly convex function $f$ if $\| \nabla f_{1/(2q)}(x) \| \leq \epsilon$. Similarly, a random point $x$ is an $\epsilon$-stationary point ($\epsilon \geq 0$) of a $q$-weakly convex function $f$ in expectation if $\expect [\| \nabla f_{1/(2q)}(x) \|] \leq \epsilon$. If $x$ is an $\epsilon$-stationary point in expectation, we may conclude that it is an $\epsilon$-stationary point with high probability by Markov's inequality. Lemma~3.8 in \cite{Lin2020} shows that an $\epsilon$-stationary point of $f$ is close to a point $x'$ at which $f$ has at least one small subgradient for small $\epsilon$, so that $f(x')$ is close to a local minimum. In other words, if an algorithm outputs an estimator $\hat{d}=d(\hat{\beta})$ such that $\hat{\beta}$ is an $\epsilon$-stationary point of $\beta \mapsto r_{\sup}(\beta,\Gamma_\ell)$, then we know that $r_{\sup}(\hat{\beta},\Gamma_\ell)$ is close to a local minimum of $\beta \mapsto r_{\sup}(\beta,\Gamma_\ell)$.

We next present the validity result for Algorithm~\ref{algorithm: SGDmax}.

\begin{theorem}[Validity of SGDmax (Algorithm~\ref{algorithm: SGDmax})] \label{theorem: SGDmax convergence}
	Suppose that Conditions~\ref{condition: limit point}--\ref{condition: conditions on risk} and \ref{condition: uniform Lipschitz on R}--\ref{condition: uniform Lipschitz on R'} hold. Let $\epsilon>0$ be fixed and define $\Delta := (r_{\sup})_{1/(2 L_1)} (\beta_{(0)}) - \min_{\beta \in \real^D} (r_{\sup})_{1/(2 L_1)} (\beta)$, where we recall that $(r_{\sup})_{1/(2 L_1)}$ is the Moreau envelope of $r_{\sup}$ with parameter $1/(2 L_1)$. In Algorithm~\ref{algorithm: SGDmax}, with $\eta = \epsilon^2/[L_1 (L_2^2 + \sigma^2)]$, $\zeta = \epsilon^2/(24 L_1)$ and $J=1$, $\beta_{(t)}$ is an $\epsilon$-stationary point of $\beta \mapsto r_{\sup}(\beta,\Gamma_\ell)$ in expectation for $t = O(L_1 (L_2^2 + \sigma^2) \Delta/\epsilon^4)$, and is thus close to a local minimum of $\beta \mapsto r_{\sup}(\beta,\Gamma_\ell)$ with high probability.
\end{theorem}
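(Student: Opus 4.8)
The plan is to recognize the inner-maximization and stochastic-gradient steps of Algorithm~\ref{algorithm: SGDmax} as an instance of the stochastic gradient descent with inexact max-oracle scheme analyzed in \cite{Lin2020} for nonconvex--concave minimax problems, and then to check that the regularity hypotheses of that analysis follow from the standing assumptions together with Conditions~\ref{condition: uniform Lipschitz on R}--\ref{condition: uniform Lipschitz on R'} and the differentiability and bounded-variance assumptions of Section~\ref{section: SGDmax}. Concretely, I would take the outer (minimizing) variable to be $\beta\in\real^D$ ranging over all of Euclidean space, the inner (maximizing) variable to be the prior $\pi$ ranging over $\Gamma_\ell$, and the payoff to be $(\beta,\pi)\mapsto r(\beta,\pi)$; the target conclusion is then exactly the statement produced by the corresponding theorem of \cite{Lin2020}, namely that with the prescribed step size, oracle tolerance and batch size $\beta_{(t)}$ is an $\epsilon$-stationary point in expectation of $\beta\mapsto r_{\sup}(\beta,\Gamma_\ell)$ within $O(L_1(L_2^2+\sigma^2)\Delta/\epsilon^4)$ iterations, with ``close to a local minimum with high probability'' following from Lemma~3.8 of \cite{Lin2020} (quoted in the text) together with Markov's inequality.

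The verification rests on the fact that every $\pi\in\Gamma_\ell$ is supported on the \emph{finite} grid $\modelspace_\ell=\{P_{(1)},\dots,P_{(T)}\}$, so that $r(\beta,\pi)=\sum_{t=1}^T \pi(\{P_{(t)}\})R(\beta,P_{(t)})$ and $\hat r(\beta,\pi,\xi)=\sum_{t=1}^T \pi(\{P_{(t)}\})\hat R(\beta,P_{(t)},\xi)$ are finite convex combinations. From this I would establish: (i) parameterizing $\pi$ by its mass vector $(w_1,\dots,w_T)$, the defining constraints of $\Gamma_\ell$ become $w_t\ge0$, $\sum_t w_t=1$ and $\sum_t w_t\Phi_k(P_{(t)})\le c_k$, so $\Gamma_\ell$ is a polytope --- nonempty by the standing assumptions --- hence convex and compact, and $\pi\mapsto r(\beta,\pi)$ is linear, hence concave; (ii) by Condition~\ref{condition: uniform Lipschitz on R} each $R(\cdot,P_{(t)})$ is $L_1$-Lipschitz, so $r(\cdot,\pi)$ is $L_1$-Lipschitz uniformly over $\pi\in\Gamma_\ell$; (iii) by Condition~\ref{condition: uniform Lipschitz on R'} each $\nabla_\beta R(\cdot,P_{(t)})$ is bounded and $L_2$-Lipschitz, hence $\nabla_\beta r(\cdot,\pi)$ is bounded and $L_2$-Lipschitz uniformly over $\pi$, so that $r_{\sup}(\cdot,\Gamma_\ell)=\max_{\pi\in\Gamma_\ell}r(\cdot,\pi)$, a pointwise maximum of $L_1$-Lipschitz, $L_2$-smooth functions, is $L_1$-Lipschitz and weakly convex, placing us within the Moreau-envelope-based stationarity framework of \cite{Lin2020}, with $\Delta$ the associated initial envelope gap; and (iv) $\nabla_\beta\hat r(\beta,\pi,\xi)$ is unbiased for $\nabla_\beta r(\beta,\pi)$ and, writing the difference as $\sum_t w_t\{\nabla_\beta\hat R(\beta,P_{(t)},\xi)-\nabla_\beta R(\beta,P_{(t)})\}$ and applying Jensen's inequality to $\|\cdot\|^2$, has variance at most $\sum_t w_t\expect[\|\nabla_\beta\{\hat R(\beta,P_{(t)},\xi)-R(\beta,P_{(t)})\}\|^2]\le\sigma^2$.

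With (i)--(iv) in hand, all hypotheses of the relevant nonconvex--concave SGDmax theorem of \cite{Lin2020} hold on the grid at level $\ell$: the stochastic maximization step plays the role of the inexact max-oracle, its guarantee $\expect[r(\beta_{(t-1)},\pi_{(t)})]\ge\max_{\pi\in\Gamma_\ell}r(\beta_{(t-1)},\pi)-\zeta$ being exactly the required oracle accuracy, and the choices $\eta=\epsilon^2/[L_1(L_2^2+\sigma^2)]$, $\zeta=\epsilon^2/(24L_1)$ and $J=1$ are the prescribed ones. Invoking that theorem yields the claimed iteration complexity and in-expectation $\epsilon$-stationarity, and Lemma~3.8 of \cite{Lin2020} then gives the closeness to a local minimum.

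The main obstacle is not a hard inequality but the bookkeeping that makes the per-$P$ regularity of Conditions~\ref{condition: uniform Lipschitz on R}--\ref{condition: uniform Lipschitz on R'} yield regularity of $r(\cdot,\pi)$ and $r_{\sup}(\cdot,\Gamma_\ell)$ that is \emph{uniform over $\pi\in\Gamma_\ell$} and that matches, item by item, the hypothesis list (convex compact inner domain, concavity in the inner variable, Lipschitzness, gradient-Lipschitzness, unbiased bounded-variance stochastic gradient, inexact max-oracle of the stated accuracy) of the result we invoke; finiteness of $\modelspace_\ell$ is precisely what lets each of these transfers go through by a convex-combination argument, and one must also reconcile the Moreau-envelope parameter $1/(2L_1)$ and the stated constants with the conventions under which \cite{Lin2020} states the result. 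A secondary point worth noting is that weak convexity and $\epsilon$-stationarity here are taken with respect to the Euclidean metric on the coefficient space $\real^D$ underlying Conditions~\ref{condition: uniform Lipschitz on R}--\ref{condition: uniform Lipschitz on R'}, which need not coincide with the metric $\varrho$ on $\decisionspace$ used elsewhere in the paper.
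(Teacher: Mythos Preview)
Your proposal is correct and follows essentially the same route as the paper: both proofs identify Algorithm~\ref{algorithm: SGDmax} as an instance of the stochastic GDmax scheme of \cite{Lin2020}, parameterize $\pi\in\Gamma_\ell$ by its finite mass vector over $\modelspace_\ell$, verify the required regularity (Lipschitzness of $r$ and its gradient, concavity in $\pi$, convexity and boundedness of $\Gamma_\ell$, unbiased bounded-variance stochastic gradients) via convex-combination arguments from Conditions~\ref{condition: uniform Lipschitz on R}--\ref{condition: uniform Lipschitz on R'}, and then invoke the corresponding theorem in \cite{Lin2020}. The only cosmetic difference is that the paper explicitly verifies joint Lipschitz continuity of the full gradient $(\nabla_\beta r,\nabla_\pi r)$ in $(\beta,\pi)$ as required by Assumption~3.6 of \cite{Lin2020}, whereas you phrase (iii) as Lipschitzness in $\beta$ uniformly over $\pi$; since $r$ is linear in $\pi$ this gap is immediate to close.
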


The assumption that the batch size $J=1$ is purely for convenience since increasing $J$ corresponds to decreasing variance $\sigma^2$. To run Algorithm~\ref{algorithm: SGDmax} in practice, the user only needs to specify tuning parameters in Line~1 and all other constants in Theorem~\ref{theorem: SGDmax convergence} need not be known.
	In general, a small learning rate $\eta$, a stringent accuracy $\zeta$, and a large batch size $J$ make Algorithm~\ref{algorithm: SGDmax} likely to eventually reach an approximation of a local minimum of $\beta \mapsto r_{\sup}(\beta,\Gamma_\ell)$, but computation time might increase.
	Similar to most numeric optimization algorithms, fine-tuning is needed to achieve a balance between convergence guarantee and computation time, but a conservative choice of tuning parameters would typically result in convergence at the cost of computation time.

We note that Line~\ref{step: SGDmax stochastic max} in Algorithm~\ref{algorithm: SGDmax} may be inconvenient to implement because linear program solvers often do not use stochastic optimization. Therefore, we propose to use a convenient variant (Algorithm~\ref{algorithm: convenient SGDmax} in Appendix~\ref{section: GDmax}), where the stochastic maximization step (Line~\ref{step: SGDmax stochastic max} in Algorithm~\ref{algorithm: SGDmax}) is replaced by solving a linear program where the objective is approximated via Monte Carlo. This variant has similar validity under similar conditions.
We also note that the two uniform Lipschitz continuity conditions (\ref{condition: uniform Lipschitz on R} and \ref{condition: uniform Lipschitz on R'}) heavily rely on
the fact that $\modelspace_\ell$ is finite and the compactness of a set containing the coefficients. Nevertheless, the latter compactness restriction is common in theoretical analyses of neural networks  \citep[see, e.g.,][]{Goel2016,Zhang2016,Eckle2019}.
Moreover, these two conditions are sufficient conditions for the validity of the gradient-based methods, namely SGDmax, our variant of SGDmax and GDmax; a guarantee similar to these validity results might hold when two conditions are violated.

We finally remark that other algorithms similar to SGDmax can be applied, for example, (stochastic) gradient descent ascent with projection \citep{Lin2020}, (stochastic) mirror descent ascent, or accelerated (stochastic) mirror descent ascent \citep{Huang2021}. It is of future research interest to develop gradient-based methods to solve minimax problems with convergence guarantees under weaker conditions.

\subsection{Computation of an estimator on a grid via fictitious play} \label{section: fictitious play}

The algorithms in Section~\ref{section: SGDmax} may be convenient in many cases, but the requirements such as parameterization of the space $\decisionspace$ of estimators in a Euclidean space, differentiability of the risk function $R$ with respect to the coefficients $\beta$, and uniform Lipschitz continuity may be restrictive for certain problems. In this section, we propose an alternative algorithm, fictitious play, that avoids these requirements. We also present its convergence results.

\cite{Brown1951} introduced fictitious play as a means to find the value of a zero-sum game, that is, the optimal mixed strategy for both players and their expected gains. \cite{Robinson1951} then proved that fictitious play can be used to iteratively solve a two-player zero-sum game for a saddle point that is a pair of mixed strategies where both players have finitely many pure strategies. Our problem of finding a $\Gamma$-minimax estimator may also be viewed as a two-player zero-sum game where one player chooses a prior from $\Gamma$ and the other player chooses an estimator from $\decisionspace$. If we assume that, for the $\Gamma$-minimax problem at hand, the pair of both players' optimal strategies is a saddle point, which holds in many minimax problems \citep[e.g.,][]{V.Neumann1928,fan1953minimax,Sion1958}, then fictitious play may also be used to find a $\Gamma$-minimax estimator. Since $\Gamma$ may be too rich to allow for feasible implementation of fictitious play, we propose to use this algorithm to find a $\Gamma_\ell$-minimax estimator.

In the fictitious play algorithm in \cite{Robinson1951}, the two players take turns to play the best pure strategy against the mixture of the opponent's historic pure strategies, and the final output is a pair of mixtures of the two players' historic pure strategies. Since this algorithm aims to find minimax mixed strategies, we consider stochastic estimators. That is, consider the Borel $\sigma$-field $\mathcal{F}$ over $\decisionspace$ and let $\varPi$ denote the set of all probability distributions on the measurable space $(\decisionspace,\mathcal{F})$. We define $\overline{\decisionspace}$ to be the space of stochastic estimators with each element taking the following form: first draw an estimator from $\decisionspace$ according to a distribution $\varpi \in \varPi$ with an exogenous random mechanism and then use the estimator to obtain an estimate based on the data. Note that we may write any $\overline{d} \in \overline{\decisionspace}$ as $\overline{d}(\varpi)$ for some $\varpi \in \varPi$. We consider estimators in $\overline{\decisionspace}$ throughout this section, with the definition of $\Gamma$-minimaxity extended in the natural way, so that $\overline{d}^* = \overline{d}(\varpi^*) \in \overline{\decisionspace}$ is $\Gamma$-minimax if $r_{\sup}(\overline{d}^*,\Gamma) = \min_{\overline{d} \in \overline{\decisionspace}} r_{\sup}(\overline{d},\Gamma)$; we similarly extend all other definitions from Section~\ref{section: setup}. We assume that there exists $\pi^*_\ell \in \Gamma_\ell$ ($\ell=1,2,\ldots$) such that
\begin{equation}
	\label{equation: fictitious play minimax theorem}
	r ( \overline{d}^*,\pi^*_\ell ) = \sup_{\pi \in \Gamma_\ell} \inf_{\overline{d} \in \overline{\decisionspace}} r(\overline{d},\pi) = \inf_{\overline{d} \in \overline{\decisionspace}} \sup_{\pi \in \Gamma_\ell} r(\overline{d},\pi).
\end{equation}
In other words, $(\overline{d}^*,\pi^*_\ell)$ is a saddle point of $r$ in $\overline{\decisionspace} \times \Gamma_{\ell}$. Under this condition and the further conditions that $\decisionspace$ is convex and $d \mapsto R(d,P)$ is convex for all $P \in \modelspace$, it is possible to use a $\Gamma$-minimax estimator over the richer class $\overline{\decisionspace}$ of stochastic estimators to derive a $\Gamma$-minimax estimator over the original class $\decisionspace$. Indeed, for any $\overline{d}(\varpi) \in \overline{\decisionspace}$ and $P \in \modelspace$, by Jensen's inequality, $R(\overline{d}(\varpi),P) = \int R(d,P) \, \varpi(\intd d) \geq R(\underline{\overline{d}}(\varpi),P)$ where $\underline{\overline{d}}(\varpi) := \int d \, \varpi(\intd d) \in \decisionspace$ is the average of the stochastic estimator $\overline{d}(\varpi)$; that is, the risk of $\underline{\overline{d}}(\varpi)$ is never greater than that of $\overline{d}(\varpi)$. Therefore, we may use the fictitious play algorithm to compute $\overline{d}(\varpi^*_\ell)$ for each $\ell$ and further apply Algorithm~\ref{algorithm: approximation with increasingly fine grid} to compute $\overline{d}(\varpi^*)$. After that, we may take $\underline{\overline{d}}(\varpi^*)$ as the final output deterministic estimator.

Algorithm~\ref{algorithm: fictitious play} presents the fictitious play algorithm for finding a $\Gamma_\ell$-minimax estimator in $\overline{\decisionspace}$. Note that $\Gamma_\ell$ is convex, and hence $\pi$ always lies in $\Gamma_\ell$ throughout the iterations. In practice, we may initialize $\varpi$ as a point mass at an initial estimator in $\decisionspace$. In addition, similarly to \cite{Robinson1951}, we may replace Line~\ref{step: fictitious play find d} with $d^\dagger_{(t)} \gets \argmin_{d \in \decisionspace} r(d,\pi_{(t)})$, that is, minimizing the Bayes risk with the most recently updated prior rather than with the previous prior.

\begin{algorithm}
	\caption{Fictitious play to compute a $\Gamma_\ell$-minimax stochastic estimator}
	\label{algorithm: fictitious play}
	\begin{algorithmic}[1]
		\State Initialize $\varpi_{(0)} \in \varPi$ and $\pi_{(0)} \in \Gamma_\ell$.
		\For{t=1,2,$\ldots$}
		\State $\pi^\dagger_{(t)} \gets \argmax_{\pi \in \Gamma_\ell} r(\overline{d}(\varpi_{(t-1)}),\pi)$
		\State $\pi_{(t)} \gets \frac{t-1}{t} \pi_{(t-1)} + \frac{1}{t} \pi^\dagger_{(t)}$
		\State $d^\dagger_{(t)} \gets \argmin_{d \in \decisionspace} r(d,\pi_{(t-1)})$ \label{step: fictitious play find d}
		\State $\varpi_{(t)} \gets \frac{t-1}{t} \varpi_{(t-1)} + \frac{1}{t} \delta(d^\dagger_{(t)})$, where $\delta(d)$ denotes a point mass at $d \in \decisionspace$.
		\EndFor
	\end{algorithmic}
\end{algorithm}

We next present a convergence result for this algorithm.

\begin{theorem}[Validity of fictitious play (Algorithm~\ref{algorithm: fictitious play})] \label{theorem: fictitious play convergence}
	Assume that there exists a compact subset $\bar{\decisionspace}$ of $\decisionspace$ that contains all $d^\dagger_{(t)}$ ($t=1,2,\ldots$). Under Conditions~\ref{condition: limit point}--\ref{condition: conditions on risk}, it holds that
	$$r ( d^\dagger_{(t)},\pi_{(t-1)} ) \leq r ( \overline{d}(\varpi^*_\ell),\pi^*_\ell ) \leq r ( \overline{d}(\varpi_{(t-1)}),\pi^\dagger_{(t)} )$$
	for all $t$ and
	$$\lim_{t \rightarrow \infty} \left[ r ( \overline{d}(\varpi_{(t-1)}),\pi^\dagger_{(t)} ) - r ( d^\dagger_{(t)},\pi_{(t-1)} ) \right] = 0.$$
	Consequently, the $\Gamma_\ell$-maximal risk of $\overline{d}(\varpi_{(t)})$ converges to the $\Gamma_\ell$-minimax risk, that is,
	$$r_{\sup}(\overline{d}(\varpi_{(t-1)}),\Gamma_\ell) \rightarrow r_{\sup}(\overline{d}(\varpi_\ell^*),\Gamma_\ell) \quad \text{as} \quad t \rightarrow \infty.$$
\end{theorem}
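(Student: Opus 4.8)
The plan is to mimic Robinson's classical analysis of fictitious play, but to handle the fact that the estimator player has a continuum of pure strategies by working with the compact set $\bar{\decisionspace}$ supplied in the hypothesis and exploiting the continuity of $d \mapsto R(d,P)$ at the relevant limit points. First I would establish the two chains of inequalities. The left inequality $r(d^\dagger_{(t)},\pi_{(t-1)}) \le r(\overline{d}(\varpi^*_\ell),\pi^*_\ell)$ follows since $d^\dagger_{(t)} = \argmin_d r(d,\pi_{(t-1)})$ makes $r(d^\dagger_{(t)},\pi_{(t-1)}) = \inf_{d\in\decisionspace} r(d,\pi_{(t-1)}) \le \inf_{\overline{d}\in\overline{\decisionspace}} r(\overline{d},\pi_{(t-1)}) \le \sup_{\pi\in\Gamma_\ell}\inf_{\overline{d}} r(\overline{d},\pi)$, which equals the saddle value $r(\overline{d}(\varpi^*_\ell),\pi^*_\ell)$ by the assumed equality \eqref{equation: fictitious play minimax theorem}; here I also use that $\inf$ over $\decisionspace$ equals $\inf$ over $\overline{\decisionspace}$ because point masses lie in $\overline{\decisionspace}$ and mixtures cannot beat their best component. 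Symmetrically, $r(\overline{d}(\varpi_{(t-1)}),\pi^\dagger_{(t)}) = \sup_{\pi\in\Gamma_\ell} r(\overline{d}(\varpi_{(t-1)}),\pi) \ge \inf_{\overline{d}}\sup_{\pi} r(\overline{d},\pi)$, which is again the saddle value. This gives both inequalities for every $t$.

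Next I would prove that the gap $g_t := r(\overline{d}(\varpi_{(t-1)}),\pi^\dagger_{(t)}) - r(d^\dagger_{(t)},\pi_{(t-1)})$ tends to zero. The standard Robinson argument runs by bounding, via the recursive convex-combination updates $\varpi_{(t)} = \frac{t-1}{t}\varpi_{(t-1)} + \frac1t\delta(d^\dagger_{(t)})$ and $\pi_{(t)} = \frac{t-1}{t}\pi_{(t-1)} + \frac1t\pi^\dagger_{(t)}$, the quantities $U_t := \sup_{\pi\in\Gamma_\ell} r(\overline{d}(\varpi_{(t)}),\pi)$ and $V_t := \inf_{d\in\decisionspace} r(d,\pi_{(t)})$ and showing $U_t - V_t \to 0$. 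Writing $R(\overline{d}(\varpi),\pi) = r(\overline{d}(\varpi),\pi)$ as a bilinear form in $(\varpi,\pi)$, one gets recursions of the shape $U_t \le \frac{t-1}{t} U_{t-1} + \frac1t M$ and $V_t \ge \frac{t-1}{t} V_{t-1} + \frac1t m$ where $M$ and $m$ are one-step best responses; Robinson's induction on the number of extreme actions of the \emph{finite} player ($\Gamma_\ell$ is a polytope: priors on the finite grid $\modelspace_\ell$ cut out by finitely many generalized-moment inequalities, hence finitely many extreme points) then yields $U_t - V_t = O(t^{-c})$ for some $c>0$ depending only on the number of extreme priors. The key point that lets the continuum of estimator strategies through is that in each step we only ever need a single best-response estimator $d^\dagger_{(t)}$, and its contribution enters the bilinear form linearly; Robinson's potential-function bookkeeping is asymmetric and needs finiteness on only one side. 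Finiteness of $R$ on $\bar{\decisionspace}$ (it is continuous on a compact set, by Condition~\ref{condition: conditions on risk} applied along the sequence, together with finiteness of $\sup_{\pi}r(d,\pi)$ assumed in Section~\ref{section: setup}) supplies the uniform bound on the bilinear form needed to start the induction.

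Finally, combining $g_t \to 0$ with the sandwiching inequalities forces both $r(\overline{d}(\varpi_{(t-1)}),\pi^\dagger_{(t)})$ and $r(d^\dagger_{(t)},\pi_{(t-1)})$ to converge to the common saddle value $r(\overline{d}(\varpi^*_\ell),\pi^*_\ell)$. Since $r_{\sup}(\overline{d}(\varpi_{(t-1)}),\Gamma_\ell) = \sup_{\pi\in\Gamma_\ell} r(\overline{d}(\varpi_{(t-1)}),\pi) = r(\overline{d}(\varpi_{(t-1)}),\pi^\dagger_{(t)})$ by the definition of $\pi^\dagger_{(t)}$ in the algorithm, and the saddle value equals $r_{\sup}(\overline{d}(\varpi^*_\ell),\Gamma_\ell)$ by \eqref{equation: fictitious play minimax theorem}, the stated convergence $r_{\sup}(\overline{d}(\varpi_{(t-1)}),\Gamma_\ell) \to r_{\sup}(\overline{d}(\varpi^*_\ell),\Gamma_\ell)$ follows at once.

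I expect the main obstacle to be adapting Robinson's convergence-rate induction to the infinite strategy set on the estimator side: Robinson's original proof is stated for finite games and its inductive step dissects the simplex of mixed strategies of \emph{both} players. I would need to check carefully that the induction can be carried out on the polytope $\Gamma_\ell$ alone — treating the estimator player purely through best-response values rather than through an enumeration of pure strategies — and that all the constants remain finite thanks to the compactness of $\bar{\decisionspace}$ and the continuity/integrability hypotheses. A secondary technical point is verifying measurability of the stochastic estimators $\overline{d}(\varpi_{(t)})$ and that the bilinear form $(\varpi,\pi)\mapsto\int\!\int R(d,P)\,\varpi(\intd d)\,\pi(\intd P)$ is well defined and jointly affine, which is routine given the finiteness of the supports involved in each iteration.
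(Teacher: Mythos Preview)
Your sandwiching inequalities and the final deduction of $r_{\sup}$-convergence are correct and match the paper. The divergence is in how you close the gap $g_t\to 0$.

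Your proposal hinges on the claim that ``Robinson's potential-function bookkeeping is asymmetric and needs finiteness on only one side,'' so that you can induct on the extreme points of the polytope $\Gamma_\ell$ while treating the estimator player purely through best-response values. This is precisely the step the paper does \emph{not} believe can be made to work directly, and you yourself flag it as the main obstacle. Robinson's induction removes a pure strategy from one player and then argues about eligibility of the \emph{other} player's strategies over a sub-interval; both sides enter the recursion symmetrically, and there is no known version of her argument that dispenses with finiteness on one side. Your assertion that the induction yields $U_t-V_t=O(t^{-c})$ is also not what Robinson proved: her theorem gives convergence without a rate, and the known rate results (Karlin, Shapiro) still depend on strategy counts for \emph{both} players.

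The paper's route is different: it discretizes \emph{both} $\bar{\decisionspace}$ and $\Gamma_\ell$ via finite $\epsilon$-covers $\{\decisionspace_i\}_{i=1}^I$, $\{\Pi_j\}_{j=1}^J$ chosen so that the maximum variation of the cumulative risk over each piece is at most $\epsilon t$, and then runs a Robinson-style induction on $|\mathcal I|+|\mathcal J|$. The difficulty the paper isolates is that a naive adaptation doubles the bound at every induction step, producing $2^{I+J}\epsilon$, which diverges as $\epsilon\downarrow 0$ since $I,J$ depend on $\epsilon$. The paper's contribution is a sharper induction step: after increasing either index set by one, the bound grows by a factor $(1+C)$ with $C$ free, and choosing $C=1/N$ for the total number $N$ of steps gives a final bound $(1+1/N)^N\cdot 2\epsilon\le 2e\,\epsilon$, uniform in $N$. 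That is the missing idea in your plan: you either need to justify the one-sided induction you propose (which is not in the literature and not obvious), or adopt the two-sided $\epsilon$-cover together with the $(1+1/N)^N$ control to prevent blow-up.
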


\cite{Robinson1951} proved a similar case for two-player zero-sum games where each player has finitely many pure strategies. In contrast, in our problem, each player may have infinitely many pure strategies. A natural attempt to prove Theorem~\ref{theorem: fictitious play convergence} would be to consider finite covers of $\bar{\decisionspace}$ and $\Gamma_{\ell}$, that is, $\bar{\decisionspace} = \bigcup_{i=1}^I \decisionspace_i$ and $\Gamma_{\ell} = \bigcup_{j=1}^J \Pi_j$, such that the range of $r(d,\pi)$ in each $\decisionspace_i$ and $\Pi_j$ is small (say less than $\epsilon$), bin pure strategies into these subsets, and then apply the argument in \cite{Robinson1951} to these bins. The collection of $\decisionspace_i$ and $\Pi_j$ may be viewed as finitely many approximated pure strategies to $\Gamma_{\ell}$ and $\bar{\decisionspace}$ up to accuracy $\epsilon$, respectively. Unfortunately, we found that this approach fails. The problem arises because \cite{Robinson1951} inducted on $I$ and $J$, and, after each induction step, the corresponding upper bound becomes twice as large. Unlike the case with finitely many pure strategies that was considered in \cite{Brown1951} and \cite{Robinson1951}, as the desired approximation accuracy $\epsilon$ approaches zero, the numbers of approximated pure strategies, $I$ and $J$, may diverge to infinity, and so does the number of induction steps. Therefore, the resulting final upper bound is of order $2^{I+J} \epsilon$ and generally does not converge to zero as $\epsilon$ tends to zero. To overcome this challenge, we instead control the increase in the relevant upper bound after each induction step more carefully so that the final upper bound converges to zero as $\epsilon$ decreases to zero, despite the fact that $I$ and $J$ may diverge to infinity.

We remark that, because Line~\ref{step: fictitious play find d} of Algorithm~\ref{algorithm: fictitious play} typically involves another layer of iteration in addition to that over $t$, this algorithm will often be more computationally intensive than SGDmax. Nevertheless, Algorithm~\ref{algorithm: fictitious play} provides an approach to construct $\Gamma_{\ell}$-minimax estimators in cases where these other algorithms cannot be applied, for example, in settings where the risk is not differentiable in the parameters indexing the estimator or uniform Lipschitz conditions fail. In our numerical experiments, we have implemented this algorithm in the context of mean estimation (Appendix~\ref{section: simulation mean}). %

\section{Considerations in implementation} \label{section: considerations in implementation}

\subsection{Considerations when constructing the grid over the model space} \label{section: increase grid with MCMC}

By Theorem~\ref{theorem: approximate Gamma-minimax on a grid}, $r_{\sup}(d_\ell^*,\Gamma_\ell) \nearrow \min_{d \in \decisionspace} r_{\sup}(d,\Gamma)$ whenever Conditions~\ref{condition: limit point}--\ref{condition: subset of modelspace} hold and the increasing sequence $\{\modelspace_\ell\}_{\ell=1}^\infty$ is such that $\bigcup_{\ell=1}^\infty \modelspace_\ell$ is dense in $\modelspace$. Though this guarantee holds for all such sequences $\{\modelspace_\ell\}_{\ell=1}^\infty$, in practice, judiciously choosing this sequence of grids of distributions can lead to faster convergence. In particular, it is desirable that the least favorable prior $\Gamma_{\ell}$ puts mass on some of the distributions in $\modelspace_\ell\backslash\modelspace_{\ell-1}$ since, if this is not the case, then $d_{\ell}^*$ will be the same as $d_{\ell-1}^*$. %
While we may try to arrange for this to occur by adding many new points when enlarging $\modelspace_{\ell-1}$ to $\modelspace_\ell$, it may not be likely that any of these points will actually modify the least favorable prior unless they are carefully chosen.

To better address this issue, we propose to add grid points using a Markov chain Monte Carlo (MCMC) method. 
Our intuition is that, given an estimator $d$, the maximal Bayes risk is likely to significantly increase if we add distributions that (i) have a high risk for $d$, and (ii) are consistent with prior information so that there exists some prior such that these distributions lie in a high-probability region. We propose to use the MCMC algorithm to bias the selection of distributions in favor of those with the above characteristics. 
Let $\tau: \modelspace \rightarrow [0,\infty)$ denote a function such that $\tau(P) > \tau(P')$ if $P$ is more consistent with prior information than $P'$. For example, given a prior mean $\mu$ of some real-valued summary $\Psi(P)$ of $P$ and an interval $I$ that contains $\Psi(P)$ with prior probability at least 95\%, we may choose $\tau: P \mapsto \phi(\Psi(P))$, where $\phi$ is the density of a normal distribution that has mean $\mu$ and places 95\% of its probability mass in $I$. We call $\tau$ a pseudo-prior. Then, with the current estimator being $d$, we wish to select distributions $P$ for which $R(d,P) \tau(P)$ is large. We may use the Metropolis-Hastings-Green algorithm \citep{Metropolis1953,Hastings1970,Green1995} to draw samples from a density proportional to $P \mapsto R(d,P) \tau(P)$. We then let $\modelspace_{\ell}$ be equal to the union of $\modelspace_{\ell-1}$ and the set containing all unique distributions in this sample.

Details of the proposed scheme are provided in Algorithm~\ref{algorithm: MCMC-like}. To use this proposed algorithm, we rely on it being possible to define a sequence of parametric models $\{\tilde{\Omega}_{\ell}\}_{\ell=1}^\infty$ such that $\tilde{\modelspace}:=\cup_{\ell=1}^\infty \tilde{\Omega}_{\ell}$ is dense in $\modelspace_{\ell}$---this is possible in many interesting examples \citep[see, e.g.,][]{Chen2007}. When combined with separability of $\modelspace$, this condition enables the definition of an increasing sequence of grids of distributions $\{\modelspace_{\ell}\}_{\ell=1}^\infty$ such that, for each $\ell$, $\modelspace_{\ell}\subseteq \tilde{\modelspace}$.

\begin{algorithm}
	\caption{MCMC algorithm to construct $\modelspace_\ell$}
	\label{algorithm: MCMC-like}
	\begin{algorithmic}[1]
		\Require Previous grid $\modelspace_{\ell-1}$, current estimator $d^*_{\ell-1}$ and number $T$ of iterations. We define $\modelspace_{-1} := \emptyset$. An initial estimator $d^*_0$ must be available if $\ell=1$.
		\State Initialize $P_{(0)} \in \tilde{\modelspace}$.
		\For{$t=1,2,\ldots,T$}
		\State Propose a distribution $P' \in \tilde{\modelspace}$ from $P_{(t-1)}$ \label{step: MCMC propose}
		\State Calculate the MCMC acceptance probability $p_\mathrm{accept}$ of $P'$ for target density $P \mapsto R(d^*_{\ell-1},P) \tau(P)$
		\State With probability $p_\mathrm{accept}$, accept $P'$ and $P_{(t)} \gets P'$
		\If{$P'$ is not accepted}
		\State $P_{(t)} \gets P_{(t-1)}$
		\EndIf
		\EndFor
		\State $\modelspace_\ell \gets \text{unique elements of the multiset } \modelspace_{\ell - 1} \bigcup \{P_{(1)}, P_{(2)}, \ldots, P_{(T)}\}$
	\end{algorithmic}
\end{algorithm}

The following theorem on distributional convergence follows from that for the Metropolis-Hastings-Green algorithm \citep[see Section~3.2 and 3.3 of ][]{Green1995}.

\begin{theorem}[Validity of MCMC algorithm (Algorithm~\ref{algorithm: MCMC-like})] \label{theorem: MCMC}
	Suppose that $P \mapsto R(d^*_{\ell-1},P) \tau(P)$ is bounded and integrable with respect to some measure $\mu$ on $\tilde{\modelspace}$ and let $\mathscr{L}$ denote the probability law on $\tilde{\modelspace}$ whose density function with respect to $\mu$ is proportional to this function. Suppose that the MCMC is constructed such that the Markov chain is irreducible and aperiodic. Then, $P_{(t)}$ converges weakly to $\mathscr{L}$ as $t \rightarrow \infty$.
\end{theorem}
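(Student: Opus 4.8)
The plan is to recognize the chain $\{P_{(t)}\}_{t \geq 0}$ generated by Algorithm~\ref{algorithm: MCMC-like} as a Metropolis-Hastings-Green chain targeting $\mathscr{L}$, verify that $\mathscr{L}$ is its invariant law, and then invoke the standard ergodic theorem for such chains. First I would observe that the hypotheses make $\mathscr{L}$ a legitimate probability measure: since $g_0 := R(d^*_{\ell-1},\cdot)\,\tau(\cdot)$ is nonnegative, bounded and $\mu$-integrable, its integral $Z := \int g_0 \, \intd\mu$ is finite, and it is strictly positive whenever $g_0$ is not $\mu$-a.e.\ zero (the degenerate case being excluded implicitly by the algorithm being well-posed), so $g := g_0/Z$ is a bona fide $\mu$-density and $\mathscr{L}(\intd P) = g(P)\,\mu(\intd P)$.

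Next I would appeal to the construction of the Metropolis-Hastings-Green kernel. The acceptance probability $p_{\mathrm{accept}}$ computed in Line~4 of Algorithm~\ref{algorithm: MCMC-like} is, by definition, the Metropolis-Hastings-Green ratio associated with the target density $g$ (equivalently $g_0$, since the normalizing constant $Z$ cancels) and the proposal mechanism of Line~3. By Sections~3.2--3.3 of \cite{Green1995}, the resulting one-step transition kernel $K$ satisfies detailed balance with respect to $\mathscr{L}$, i.e.\ $\mathscr{L}(\intd P)\,K(P,\intd P') = \mathscr{L}(\intd P')\,K(P',\intd P)$ on $\tilde{\modelspace}\times\tilde{\modelspace}$; integrating one coordinate out shows $\mathscr{L}$ is invariant for $K$. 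Combined with the assumed irreducibility and aperiodicity of the chain, the existence of the invariant probability measure $\mathscr{L}$ yields that $\{P_{(t)}\}$ is positive (Harris) recurrent, and the convergence theorem for general state-space Markov chains then gives $\|K^t(P_{(0)},\cdot) - \mathscr{L}\|_{\mathrm{TV}} \to 0$ as $t \to \infty$. Since $\tilde{\modelspace}\subseteq\modelspace$ is a metric space and total-variation convergence implies weak convergence, $P_{(t)}$ converges weakly to $\mathscr{L}$ in the topology induced by $\rho$, as claimed.

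The main subtlety to be careful about is that $\tilde{\modelspace}$ is a general (possibly infinite-dimensional) measurable space rather than a nice finite or countable one, so ``irreducible'' and ``aperiodic'' must be read in the $\phi$-irreducibility sense appropriate to general state-space chains, and one genuinely needs \emph{Harris} recurrence---not merely recurrence---in order to conclude convergence from every starting point rather than only $\mu$-almost every one. Because the theorem is stated as a direct consequence of the Metropolis-Hastings-Green theory, I would handle this by being explicit that the hypotheses are to be interpreted in exactly the sense under which the results of \cite{Green1995} are established (together with the fact that an aperiodic, $\phi$-irreducible kernel admitting an invariant probability distribution is automatically positive Harris recurrent), and otherwise the argument is routine.
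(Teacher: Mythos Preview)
Your proposal is correct and takes essentially the same approach as the paper: the paper does not give a standalone proof of this theorem but simply states that it ``follows from that for the Metropolis-Hastings-Green algorithm [see Section~3.2 and 3.3 of \cite{Green1995}].'' You have spelled out the standard argument behind that citation (detailed balance $\Rightarrow$ invariance, then irreducibility and aperiodicity $\Rightarrow$ convergence), which is more detail than the paper itself provides; one small caveat is that $\phi$-irreducibility plus an invariant probability measure yields positive recurrence but not automatically \emph{Harris} recurrence in full generality, so if you want convergence from every starting point you should either invoke the usual argument that Metropolis--Hastings chains are Harris (via the positive rejection probability), or be content with convergence from $\mathscr{L}$-almost every starting point, which is all the paper's informal statement really commits to.
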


Therefore, if $\mathscr{L}$ corresponds to a continuous distribution with nonzero density over the parameter space of $\tilde{\modelspace}$, then Theorem~\ref{theorem: MCMC} implies that $\bigcup_{\ell=1}^\infty \modelspace_\ell$ is dense in $\modelspace$, as required by Algorithm~\ref{algorithm: approximation with increasingly fine grid}.

Implementing Algorithm~\ref{algorithm: MCMC-like} relies on the user making several decisions. These decisions include the choice of the pseudo-prior $\tau$ and the technique used to approximate the risk $R(d,P)$ to a reasonable accuracy. Fortunately, regardless of the decisions made, Theorem~\ref{theorem: approximate Gamma-minimax on a grid} suggests that $r_{\sup}(d_\ell^*,\Gamma_\ell) \nearrow \min_{d \in \decisionspace} r_{\sup}(d,\Gamma)$ for a wide range of sequences $\{\modelspace_{\ell}\}_{\ell=1}^\infty$. Indeed, all that theorem requires on this sequence is that the grid $\modelspace_{\ell}$ becomes arbitrarily fine as $\ell$ increases. Though the final decisions made are not important when $\ell$ is large, we still comment briefly on the decisions that we have made in our experiments, First, we have found it effective to approximate $R(d,P)$ via a large number of Monte Carlo draws. Second, in a variety of settings, we have also identified, via numerical experiments, candidate pseudo-priors that balance high risk and consistency with prior information (see Sections~\ref{section: simulation n new species} and \ref{section: simulation entropy} for details).

\subsection{Considerations when choosing the space of estimators} \label{section: choice of estimator space}

It is desirable to consider a rich space $\decisionspace_0$ of estimators to obtain an estimator with low maximal Bayes risk, and thus good general performance. However, to make numerically constructing these estimators computationally feasible, we usually have to consider a restricted space $\decisionspace$ of estimators.
This approximation is justified because, if estimators in $\decisionspace$
can approximate the $Gamma$-minimax estimator in $\decisionspace_0$
well, then we expect the resulting excess maximal Bayes risk is small.

Feedforward neural networks (or neural networks for short) are natural options for the space of estimators because of their universal approximation property \citep[e.g.,][]{Hornik1991,Csaji2001,Hanin2017,Kidger2020}. However, training commonly used neural networks can be computationally intensive. Moreover, a space of neural networks is typically nonconvex, and hence it may be difficult to find a global minimizer of the maximal Bayes risk even if the risk is convex in the estimator. Therefore, the learned estimator might not perform well.

To help overcome this challenge, we advocate for utilizing available statistical knowledge when designing the space of estimators. We call estimators that take this form \emph{statistical knowledge networks}. In particular, if a simple estimator is already available, we propose to use neural networks with such an estimator as a node connected to the output node. An example of such an architecture is presented in Fig.~\ref{figure: example nn}. In this sample architecture, each node is an activation function such as the sigmoid or the rectified linear unit (ReLU) \citep{Glorot2011} function applied to an affine transformation of the vector containing the ancestors of the node. The only exception is the output node, which is again an affine transformation of its ancestors but uses the identity activation function. When training the neural network, we may initialize the affine transformation in the output layer to only give weight to the simple estimator.
Under this approach, the space of estimators is a set of perturbations of an existing simple estimator. Although we may still face the challenge of nonconvexity and local optimality, we can at least expect to improve the initial simple estimator.

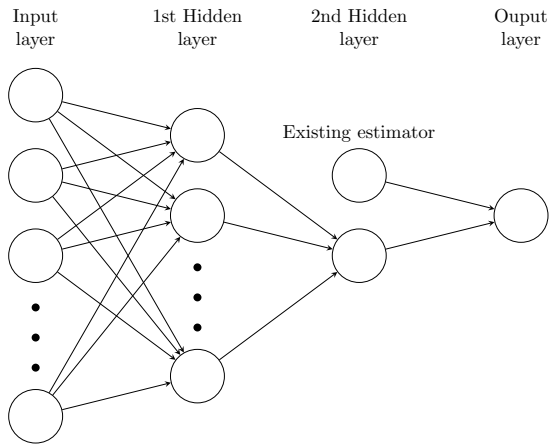
\begin{figure}[bt!]
	\centering
	\resizebox{!}{2.3in}{%
		\begin{tikzpicture}[x=1.5cm, y=1.5cm, >=stealth]
			\foreach \m [count=\y] in {1,2,3,missing,4}
			\node [every neuron/.try, neuron \m/.try] (input-\m) at (0,2.5-\y) {};
			
			\foreach \m [count=\y] in {1,2,missing,3}
			\node [every neuron/.try, neuron \m/.try ] (hidden1-\m) at (2,2-\y) {};
			
			\foreach \m [count=\y] in {1,2}
			\node [every neuron/.try, neuron \m/.try ] (hidden2-\m) at (4,1.5-\y) {};
			
			\foreach \m [count=\y] in {1}
			\node [every neuron/.try, neuron \m/.try ] (output-\m) at (6,0) {};

			\node [above] at (hidden2-1.north) {Existing estimator};
			
			\foreach \i in {1,...,4}
			\foreach \j in {1,...,3}
			\draw [->] (input-\i) -- (hidden1-\j);
			
			\foreach \i in {1,...,3}
			\foreach \j in {2}
			\draw [->] (hidden1-\i) -- (hidden2-\j);
			
			\foreach \i in {1,...,2}
			\foreach \j in {1}
			\draw [->] (hidden2-\i) -- (output-\j);
			
			\foreach \l [count=\x from 0] in {Input, 1st Hidden, 2nd Hidden, Ouput}
			\node [align=center, above] at (\x*2,2) {\l \\ layer};
		\end{tikzpicture}%
	}
	\caption{Example of neural network estimator architecture utilizing an existing estimator. The arrows from the input nodes to the existing estimator are omitted from this graph.}
	\label{figure: example nn}
\end{figure}

In the simulation we describe in Appendix~\ref{section: simulation mean}, we compared the empirical performance of several spaces of estimators. This simulation concerns the simple problem of estimating the mean of a true distribution whose support has known bounds (Example~\ref{example: estimation}), and the existing simple estimator we use in the statistical neural network is the sample mean. Fig.~\ref{figure: mean parameter Bayes risk for least favorable prior} presents the trajectory of estimated Bayes risks. As shown in subfigures~(b)--(d), using the statistical knowledge network, the estimator is almost $\Gamma$-minimax after a few iterations; on the other hand, it took about 1000 iterations for the feedforward neural network to reach an approximately $\Gamma$-minimax estimator. Therefore, in this simple problem where the true $\Gamma$-minimax estimator is a shifted and scaled sample mean, statistical knowledge substantially reduced the number of iterations required to obtain an approximately $\Gamma$-minimax estimator. For more complicated problems, we expect statistical knowledge to further help improve the performance of the computed estimator.

\begin{figure}[bt!]
	\centering
	\includegraphics[scale=0.4]{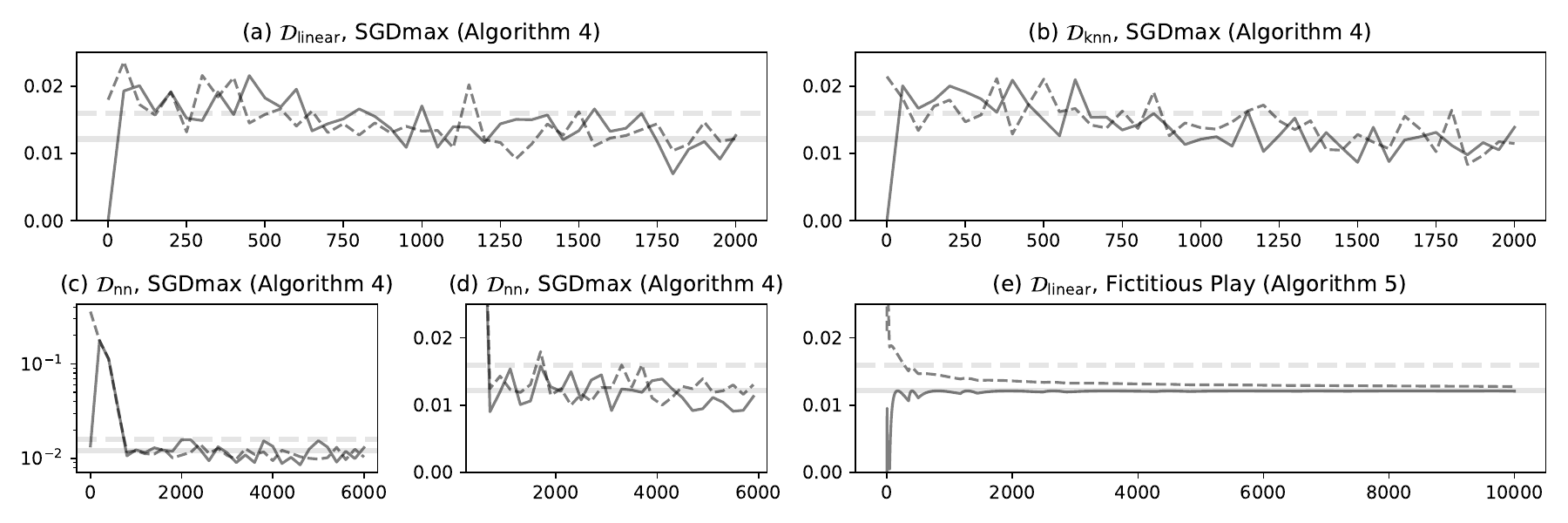}
	\caption{Estimated Bayes risks of the estimator over iterations when computing a $\Gamma_1$-minimax estimator. The lines are the current Bayes risks (y-axis) over iterations (x-axis) (unbiased estimates with 50 Monte Carlo runs for Algorithm~\ref{algorithm: convenient SGDmax}; exact values for Algorithm~\ref{algorithm: fictitious play}). The solid lines are the Bayes risks after an update in the estimator to decrease the Bayes risk. The dashed lines are the Bayes risks after an update in the prior to increase the Bayes risk. The two horizontal lines are the Bayes risk of the sample mean (dashed) and $d^*$ (solid), respectively, for $\pi^*$. For ease of visualization, in subfigures~(a) and (b), the Bayes risks are plotted every 50 iterations; in subfigures~(c) and (d), the Bayes risks are plotted every 200 iterations; subfigure~(d) contains the part in subfigure~(c) after 500 iterations.}
	\label{figure: mean parameter Bayes risk for least favorable prior}
\end{figure}

We note that we might overcome the challenge of nonconvexity and local optimality by using an extreme learning machine (ELM) \citep{Huang2006} to parameterize the estimator. ELMs are neural networks for which the weights in hidden nodes are randomly generated and are held fixed, and only the weights in the output layer are trained. Thus, the space of ELMs with a fixed architecture and fixed hidden layer weights is convex. Like traditional neural networks, ELMs have the universal approximation property \citep{Huang2006universal}. In addition, Corollary~\ref{corollary: convexity; approximate Gamma-minimax on a grid} may be applied to an\ ELM so that the $\Gamma_\ell$-minimax estimator may converge to the $\Gamma$-minimax estimator. As for traditional neural networks, we may incorporate knowledge of existing statistical estimators into an ELM.

We finally remark that, besides computational intensity when constructing (i.e., learning) a $\Gamma$-minimax estimator, another important factor to be considered when choosing $\decisionspace$ is the computational intensity to evaluate the learned estimator at the observed dataset. This is another reason for our choosing neural networks or ELMs as the space of estimators. Indeed, existing software packages \citep[e.g.,][]{pytorch} make it easy to leverage graphics processing units to efficiently evaluate the output of neural networks for any given input. Therefore, if the existing estimator being used is not too difficult to compute, then estimators parameterized using similar architectures to that displayed in Figure~\ref{figure: example nn} will be able to be computed efficiently in practice. This efficiency may be especially important in settings where the estimator will be applied to many datasets, so that the cost of learning the estimator is amortized and the main computational expense is evaluating the learned estimator.

\section{Simulations and data analyses} \label{section: simulation}

We illustrate our methods in Examples~\ref{example: estimation}--\ref{example: estimate entropy}. A toy example of Example~\ref{example: estimation} is presented in Appendix~\ref{section: simulation mean}. We focus on the more complex Examples~\ref{example: predict n new species} and \ref{example: estimate entropy} in this section.

\subsection{Prediction of the expected number of new categories} \label{section: simulation n new species}

We apply our proposed method to Example~\ref{example: predict n new species}. In the simulation, we set the true population to be an infinite population with the same categories and same proportions as the sample studied in \cite{Miller1989}, which consists of 1088 observations in 188 categories. This setting is the same as the simulation setting in \cite{Shen2003}. We set the sample size to be $n=100$ and the size of the new sample to be $m=200$. In this setting, the expected number of new categories in the new sample unconditionally on the observed sample, namely $\Phi(P_0):=\expect_{P_0}[\Psi(P_0)(\observation^*)]$, can be analytically computed and equals $48.02$. We note that this quantity can also be computed via simulation: (i) sample $n$ and $m$ individuals with replacement from the dataset in \cite{Miller1989}, (ii) count the number of new categories in the second sample, and (iii) repeat steps~(i) and (ii) many times and compute the average.

It is well known that this prediction problem is difficult when $m>n$, and we run this simulation to investigate the potential gain from leveraging prior information by computing a Gamma-minimax estimator for such difficult or even ill-posed problems. We consider three sets of prior information:
\begin{enumerate}
	\item strongly informative: prior mean of $\Phi(P)$ in $[45,50]$, $\geq 95\%$ prior probability that $\Phi(P)$ lies in $[40,55]$;
	\item weakly informative: prior mean of $\Phi(P)$ in $[40,55]$, $\geq 95\%$ prior probability that $\Phi(P)$ lies in $[30,65]$; and
	\item almost noninformative: prior mean of $\Phi(P)$ in $[35,60]$, $\geq 95\%$ prior probability that $\Phi(P)$ lies in $[20,75]$.
\end{enumerate}
We note that a traditional Bayesian approach would require specifying a prior on $\modelspace$, including the total number of categories and the proportion of each category, which may be difficult in practice.

We check the plausibility of Condition~\ref{condition: subset of modelspace} in this context. We take the strongly informative prior information as an example. Take $\Omega_\ell$ to be the collection of multinomial distributions with at most $\ell$ categories. It is obvious that $\bigcup_{\ell=1}^\infty \Omega_\ell=\modelspace$. Let $d \in \decisionspace$ be fixed and $\pi \in \tilde{\Gamma}_\ell$ be a fixed prior with finite support, that is, $\pi = \sum_{j=1}^J q_j \delta(Q_j)$ where $\delta(\cdot)$ denotes the point mass distribution, $Q_j \in \Omega_\ell$, $q_j>0$ and $\sum_{j=1}^J q_j=1$. Let $\epsilon>0$ be an arbitrary small number such that $\sum_{j=1}^J q_j \Phi(Q_j) \leq 50 - \epsilon$ or $\sum_{j=1}^J q_j \Phi(Q_j) \geq 45 + \epsilon$. Since $\bigcup_{\ell=1}^\infty \modelspace_\ell$ is dense in $\modelspace$ and $\Phi$ is continuous, there exists a sufficiently large $i$ such that, for every distribution $Q_j$, there exists $P_j \in \modelspace_i \cap \Omega_\ell$ satisfying the following:
\begin{itemize}
	\item $|\Phi(P_j)-\Phi(Q_j)| \leq \epsilon$;
	\item if $\Phi(Q_j) \in [40,55]$, then $\Phi(P_j) \in [40,55]$;
	\item $|R(d,P_j)-R(d,Q_j)| \leq \epsilon$.
\end{itemize}
Take $\pi_i$ to be $\sum_{j=1}^J q_j \delta(P_j)$. Then it is easy to verify that $|\sum_{j=1}^J q_j \Phi(P_j) - \sum_{j=1}^J q_j \Phi(Q_j)| \leq \epsilon$ and thus $\sum_{j=1}^J q_j \Phi(P_j) \in [45,50]$; moreover, $\Phi(Q_j) \in [40,55]$ implies that $\Phi(P_j) \in [40,55]$ and therefore $\sum_{j=1}^J q_j \ind( \Phi(P_j) \in [40,55]) \geq \sum_{j=1}^J q_j \ind( \Phi(Q_j) \in [40,55]) \geq 95\%$. Thus, $\pi_i \in \tilde{\Gamma}_{i \mid \ell}$. Moreover, $|r(d,\pi) - r(d,\pi_i)| \leq \epsilon$. Therefore, $r(d,\pi_i) \rightarrow r(d,\pi)$ as $i \rightarrow \infty$ and Condition~\ref{condition: subset of modelspace} holds.

We design the architecture of the neural network estimator as in Fig.~\ref{figure: nn n new species}. We choose two existing estimators (referred to as the OSW and SCL estimators, respectively) proposed by \cite{Orlitsky2016} and \cite{Shen2003} as human knowledge inputs to the architecture.
We use the ReLU activation function. There are 50 hidden nodes in the first hidden layer. %
We initialize the neural network that we train to output the average of these two existing estimators.

\begin{figure}[bt!]
	\centering
	\resizebox{!}{2.3in}{%
		\begin{tikzpicture}[x=1.5cm, y=1.5cm, >=stealth]
			\foreach \m [count=\y] in {1,missing,2}
			\node [every neuron/.try, neuron \m/.try] (input-\m) at (0,0.5-\y) {};
			
			\node [every neuron/.try, neuron 1/.try] (OSW) at (0,2) {};
			\node [every neuron/.try, neuron 1/.try] (SCL) at (0,1) {};
			
			\foreach \m [count=\y] in {1,missing,2}
			\node [every neuron/.try, neuron \m/.try ] (hidden1-\m) at (2,0.5-\y) {};
			
			\foreach \m [count=\y] in {1}
			\node [every neuron/.try, neuron \m/.try ] (hidden2-\m) at (4,-1) {};
			
			\foreach \m [count=\y] in {1}
			\node [every neuron/.try, neuron \m/.try ] (output-\m) at (6,0) {};
			
			\foreach \l [count=\i] in {1,n}
			\draw [<-] (input-\i) -- ++(-1,0)
			node [above, midway] {$X_\l$};
			
			\node [left] at (OSW.west) {OSW};
			\node [left] at (SCL.west) {SCL};

			\foreach \i in {1,...,2}
			\foreach \j in {1,...,2}
			\draw [->] (input-\i) -- (hidden1-\j);
			
			\foreach \j in {1,...,2}
			\draw [->] (OSW) -- (hidden1-\j);
			
			\foreach \j in {1,...,2}
			\draw [->] (SCL) -- (hidden1-\j);
			
			\foreach \i in {1,...,2}
			\foreach \j in {1}
			\draw [->] (hidden1-\i) -- (hidden2-\j);
			
			\draw [->] (OSW) -- (hidden2-1);
			\draw [->] (SCL) -- (hidden2-1);
			
			\draw [->] (OSW) -- (output-1);
			\draw [->] (SCL) -- (output-1);
			\draw [->] (hidden2-1) -- (output-1);
			
			\foreach \l [count=\x from 0] in {Input, 1st Hidden, 2nd Hidden, Ouput}
			\node [align=center, above] at (\x*2,2.5) {\l \\ layer};
		\end{tikzpicture}%
	}
	\caption{Architecture of the neural network estimator of the expected number of new categories. $X_k$: number of categories with $k$ observations; OSW: the estimator proposed in \cite{Orlitsky2016}; SCL: the estimator proposed in \cite{Shen2003}. The arrows from data $(X_1,\ldots,X_n)$ to the OSW and SCL estimators are omitted from this graph.}
	\label{figure: nn n new species}
\end{figure}

We use Algorithm~\ref{algorithm: MCMC-like} to construct $\modelspace_\ell$. There are 2000 grid points in $\modelspace_1$, and we add 1000 grid points each time we enlarge the grid. When generating $\modelspace_1$, we chose the starting point to be a distribution $P_{(0)}$ with 146 categories and $\Phi(P_{(0)}) = 49.9$. The choice of this starting point $P_{(0)}$ was quite arbitrary. We first generated a sample from $P_0$ and treated it as data from a pilot study. We then came up with a distribution $P_{(0)}$ such that five random samples generated from $P_{(0)}$ all appear qualitatively similar to the pilot data. In practice, this starting point can be chosen based on prior knowledge. Our chosen grid sizes for Algorithm~\ref{algorithm: MCMC-like} were quite arbitrary. For $\modelspace_1$, the generated distributions $P_{(t)}$ appear similar for all $t$, and thus the initial grid size 2000 and the increment size 1000 appeared sufficient. Smaller grid sizes would simply lead to more iterations in Algorithm~\ref{algorithm: approximation with increasingly fine grid}, which effectively increases the grid size.
We selected the log pseudo-prior as a weighted sum of two log density functions: (i) a normal distribution with the mean being the midpoint of the interval constraint on the prior mean of $\Phi(P)$ and central 95\% probability interval being the interval with at least 95\% prior probability, (ii) a negative-binomial distribution of the total number of categories with success probability $0.995$ and $2$ failures until the Bernoulli trial is stopped so that the mode and the variance are approximately $200$ and $8 \times 10^4$, respectively. These log-densities are provided weights 30 and 10, respectively. We selected the weights based on the empirical observation that distributions with only a few categories tend to have high risks, but these distributions are relatively inconsistent with prior information and may well be given almost negligible probability weight in a computed least favorable prior, thus contributing little to computing a $\Gamma$-minimax estimator. We chose the aforementioned weights so that Algorithm~\ref{algorithm: MCMC-like} can explore a fairly large range of distributions and does not generate too many distributions with too few categories.

We use Algorithm~\ref{algorithm: convenient SGDmax} with learning rate $\eta=0.005$ and batch size $J=30$ to compute $\Gamma_\ell$-minimax estimators. The number of iterations is 4,000 for $\Gamma_1$ and 200 for $\Gamma_\ell$ ($\ell>1$). The stopping criterion in Algorithm~\ref{algorithm: approximation with increasingly fine grid} is that the estimated maximal Bayes risk with 2000 Monte Carlo runs does not relatively increase by more than 2\% or absolutely increase by more than 0.0001.
We chose the aforementioned tuning parameters based on the prior belief that at least one of OSW and SCL estimators should perform reasonably well, but the performance of SGDmax (Algorithm~\ref{algorithm: convenient SGDmax}) and Algorithm~\ref{algorithm: MCMC-like} might be sensitive to tuning parameters. Thus, the network we used is neither deep nor wide. We chose a moderately small learning rate and a large number of iterations for SGDmax. Our chosen learning rate and chosen number of iterations led to a trajectory of estimated Bayes risks that approximately reached a plateau with small fluctuations, suggesting that the obtained estimator is approximately $\Gamma_1$-minimax (see Fig.~\ref{figure: n new species Bayes risk and risk}). In practice, such trajectory plots may help tune the learning rate and the number of iterations.

We also run additional simulations to investigate the sensitivity of our methods to tuning parameter selections. We present these simulations in Appendix~\ref{section: simulation n new species sensitivity}. The results suggest that our methods may be insensitive to tuning parameter selections.

We examine the performance of the OSW estimator, the SCL estimator and our trained $\Gamma$-minimax estimator by comparing their risks under our set data-generating mechanism computed with 20000 Monte Carlo runs. We also compare their Bayes risks under the computed prior from Algorithm~\ref{algorithm: convenient SGDmax} using the last and finest grid in the computation with 20000 Monte Carlo runs. We present the results in Table~\ref{table: new species result}. In this simulation experiment, our $\Gamma$-minimax estimator substantially reduces the risk compared to two existing estimators. The $\Gamma$-minimax estimator also has the lowest Bayes risk in all cases. Therefore, incorporating fairly informative prior knowledge into the estimator may lead to a significant improvement in predicting the number of new categories. We expect similar substantial improvement for difficult or even ill-posed statistical problems by incorporating prior knowledge.

\begin{table}[bt!]
	\centering
	\caption{Risks and Bayes risks of estimators. $R(d,P_0)$: risk of the estimator under the true data-generating mechanism $P_0$. $r(d,\hat{\pi}^*)$: Bayes risk under prior $\hat{\pi}^*$, the computed prior from Algorithm~\ref{algorithm: convenient SGDmax} in the last and finest grid in the computation.}
	\label{table: new species result}
	\begin{tabular}{l|l|r|r}
		\hline
		Strength of prior & Estimator & $R(d,P_0)$ & $r(d,\hat{\pi}^*)$ \\
		\hline
		strong & OSW & 265 & 303 \\
		& SCL & 146 & 159 \\
		& $\Gamma$-minimax & 18 & 35 \\
		\rule{0pt}{4ex} weak & OSW & 265 & 328 \\
		& SCL & 146 & 184 \\
		& $\Gamma$-minimax & 17 & 61 \\
		\rule{0pt}{4ex} almost none & OSW & 265 & 293 \\
		& SCL & 146 & 124 \\
		& $\Gamma$-minimax & 24 & 81 \\
		\hline
	\end{tabular}
\end{table}

Fig.~\ref{figure: n new species Bayes risk and risk} presents the unbiased estimator of Bayes risks over iterations when computing a $\Gamma_1$-minimax estimator. The Bayes risks appear to have a decreasing trend and to approach a liming value. Over iterations, the Bayes risks decrease by a considerable amount. The limiting value of the Bayes risks appears to be slightly higher than the risk of the computed $\Gamma$-minimax estimator under $P_0$. This might indicate that $P_0$ is not an extreme distribution that yields a high risk.

\begin{figure}[bt!]
	\centering
	\includegraphics[scale=0.4]{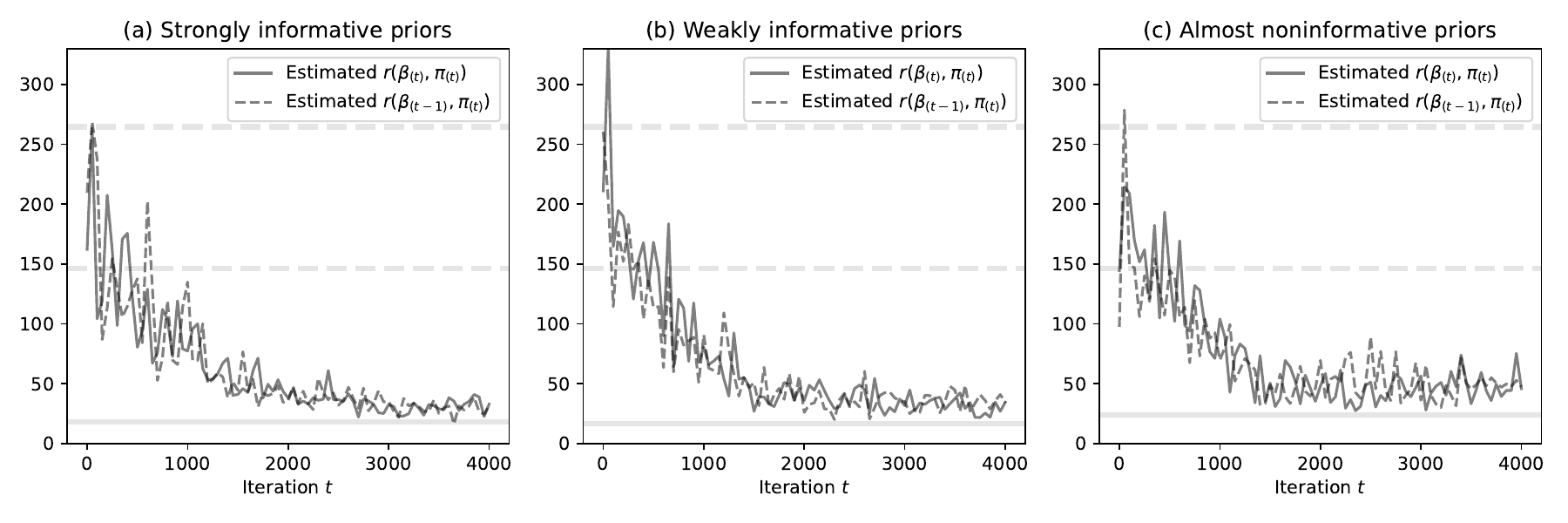}
	\caption{Estimated Bayes risks of the estimator over iterations when computing a $\Gamma_1$-minimax estimator. The lines are unbiased estimates of the current Bayes risks (y-axis) with 30 Monte Carlo runs over iterations (x-axis). The two dashed horizontal lines are the risks of the OSW (upper) and the SCL (lower) estimators, respectively, under $P_0$ in the simulation. The solid horizontal line is the risk of the computed $\Gamma$-minimax estimator under $P_0$. For clearness of visualization, the estimated Bayes risks are plotted every 50 iterations.}
	\label{figure: n new species Bayes risk and risk}
\end{figure}

We also apply the above methods to analyze the dataset studied in \cite{Miller1989}, which is used as the true population in the simulation. Based on this sample consisting of $n=1088$ observations in 188 categories, we use various methods to predict the number of new categories that would be observed if another $m=2000$ observations were to be collected. We train Gamma-minimax estimators using exactly the same tuning parameters as those in the above simulation, except that the starting point in Algorithm \ref{algorithm: MCMC-like} has more categories. The predictions of all methods are presented in Table~\ref{table: new species data result}. The $\Gamma$-minimax estimator outputs a 
more
similar prediction to the SCL estimator.
This similarity appears different from our observation in the simulation, but can be explained by the fact that having more observations ($n=1088$ vs $n=100$; $m=2000$ vs $m=200$) decreases the variance of the number of new observed categories and thus lowers discrepancies between predictions from these methods. 
Since the SCL estimator outperforms the OSW estimator in the above simulation where this dataset is the true population, we expect the SCL estimator to achieve reasonably good performance in this application. Moreover, given that the $\Gamma$-minimax estimators outperform the SCL estimator in the above simulation, we expect that 
57 or 58
represents an improved prediction of the number of new categories as compared to the SCL prediction of 51 in the case where there is limited prior information available.

\begin{table}[bt!]
	\centering
	\caption{Predicted number of new categories (rounded to the nearest integer) in a new sample with size 2000 based on the sample with size 1088 studied in \cite{Miller1989}. The strength of prior information in $\Gamma$-minimax estimators is shown in brackets.}
	\label{table: new species data result}
	\begin{tabular}{l|p{7em}}
		\hline
		Estimator & Predicted \#\phantom{jnk} new categories \\ \hline
		OSW & \hspace{2.5em}72 \\ %
		SCL & \hspace{2.5em}51 \\ %
		$\Gamma$-minimax (strong) & \hspace{2.5em}57 \\ %
		$\Gamma$-minimax (weak) & \hspace{2.5em}57 \\ %
		$\Gamma$-minimax (almost none) & \hspace{2.5em}58 \\ %
		\hline
	\end{tabular}
\end{table}

The computation time to compute an approximated $\Gamma$-minimax estimator was about five to seven hours on an AWS EC2 instance \citep{AmazonEC2} with at least 4 vCPUs and at least 8 GiB of memory, depending on the number of times the grid was enlarged. As shown in Fig.~\ref{figure: n new species Bayes risk and risk}, far few iterations are needed for SGDmax to output a good approximation of a $\Gamma_1$-minimax estimator, which is itself quite close to $\Gamma$-minimax. Therefore, with suitably less conservative tuning parameters or more adaptive minimax problem solvers, the computation time might drastically decrease. Moreover, the computation time needed to evaluate the computed $\Gamma$-minimax estimator at any sample is almost zero.

\subsection{Estimation of the entropy} \label{section: simulation entropy}

We also apply our method to estimate the entropy of a multinomial distribution (Example~\ref{example: estimate entropy}). The data-generating mechanism is the same as that described in Example~\ref{example: predict n new species}, and the estimand of interest is Shannon entropy \citep{Shannon1948}, that is, $\Psi(P_0) = -\sum_{k=1}^K p_k \log p_k$. In the simulation, we choose the same true population and the same sample size $n=100$ as in Section~\ref{section: simulation n new species}. The true entropy $\Psi(P_0)$ is $4.57$. As a reference, the entropy of the uniform distribution with the same number of categories---which corresponds to the maximum entropy of multinomial distributions with the same total number of categories---is $5.24$.

\citet{Jiao2015} developed a minimax rate optimal estimator of the Shannon entropy, and we run this simulation to investigate the potential gain of computing a Gamma-minimax estimator in well-posed problems with satisfactory solutions. As in Section~\ref{section: simulation n new species}, we consider three sets of prior information:
\begin{enumerate}
	\item Strongly informative: Prior mean of $\Psi(P)$ in $[4.3,4.7]$, $\geq 95\%$ probability that $\Psi(P)$ lies in $[4,5]$;
	\item Weakly informative: Prior mean of $\Psi(P)$ in $[4,5]$, $\geq 95\%$ probability that $\Psi(P)$ lies in $[3.5,5.5]$;
	\item Almost noninformative: Prior mean of $\Psi(P)$ in $[3.7,5.3]$, $\geq 95\%$ probability that $\Psi(P)$ lies in $[3,6]$.
\end{enumerate}

The architecture of our neural network estimator is almost identical to that in Section~\ref{section: simulation n new species} except that the existing estimator being used is the one proposed in \cite{Jiao2015} (referred to as the JVHW estimator), and we initialize the network to return the JVHW estimator. We use Algorithm~\ref{algorithm: MCMC-like} to construct $\modelspace_\ell$ and Algorithm~\ref{algorithm: convenient SGDmax} to compute a $\Gamma_\ell$-minimax estimator. The tuning parameters in the algorithms are identical to those used in Section~\ref{section: simulation n new species} except that, in Algorithm~\ref{algorithm: convenient SGDmax}, (i) the learning rate is $\eta=0.001$, and (ii) the number of iterations is 6,000 for $\Gamma_1$. We change these tuning parameters because the JVHW estimator is already minimax in terms of its convergence rate \citep{Jiao2015}, and we may need to update the estimator in a more cautious manner in Algorithm~\ref{algorithm: convenient SGDmax} to obtain any possible improvement.
The trajectories of the estimated Bayes risks (Fig.~\ref{figure: entropy Bayes risk and risk}) all appear to approximately reach a plateau, suggesting that the obtained estimator approximately $\Gamma_1$-minimax and that our choice of a smaller learning rate and a larger number of iterations is valid.
Because of the additional complexity of the JVHW estimator, we ran our simulations on an AWS EC2 instance \citep{AmazonEC2} with 4 vCPUs and 32 GiB of memory. The computation time was ten to seventeen hours, depending on the number of times the grid was enlarged. The longer computation time than that described in Section~\ref{section: simulation n new species} is primarily due to more iterations in SGDmax and the additional complexity of the JVHW estimator.

We compare the risk of the JVHW estimator and our trained $\Gamma$-minimax estimator under our set data-generating mechanism computed with 20000 Monte Carlo runs. We also compare their Bayes risk under the computed prior from Algorithm~\ref{algorithm: convenient SGDmax} using the last and finest grid in the computation with 20000 Monte Carlo runs. The results are summarized in Table~\ref{table: entropy result}. In this simulation experiment, our $\Gamma$-minimax estimator reduces the risk by a fair percentage compared with the JVHW estimator and achieves lower worst-case Bayes risk.
According to these simulation results, we conclude that incorporating informative prior knowledge into the estimator may result in some improvement in estimating entropy. Thus, for well-posed statistical problems with satisfactory solutions, we expect mild or no substantial improvement and little deterioration from using a Gamma-minimax estimator.

\begin{table}[bt!]
	\centering
	\caption{Risks and Bayes risks of estimators. $R(d,P_0)$: risk of the estimator under the true data-generating mechanism $P_0$. $r(d,\hat{\pi}^*)$: Bayes risk under prior $\hat{\pi}^*$, the computed prior from Algorithm~\ref{algorithm: convenient SGDmax} in the last and finest grid in the computation.}
	\label{table: entropy result}
	\begin{tabular}{l|l|r|r}
		\hline
		Strength of prior & Estimator & $R(d,P_0)$ & $r(d,\hat{\pi}^*)$ \\
		\hline
		strong & JVHW & 0.041 & 0.035 \\
		& $\Gamma$-minimax & 0.036 & 0.021 \\
		\rule{0pt}{4ex} weak & JVHW & 0.041 & 0.028 \\
		& $\Gamma$-minimax & 0.018 & 0.024 \\
		\rule{0pt}{4ex} almost none & JVHW & 0.041 & 0.031 \\
		& $\Gamma$-minimax & 0.025 & 0.016 \\
		\hline
	\end{tabular}
\end{table}

Fig.~\ref{figure: entropy Bayes risk and risk} presents the unbiased estimator of Bayes risks over iterations when computing a $\Gamma_1$-minimax estimator.
With strongly informative prior information present, the Bayes risks appear to fluctuate without an increasing or decreasing trend at the beginning and decrease slowly after several thousand iterations.
With weakly informative or almost no prior information, the Bayes risks also decrease slowly.
A reason may be that the JVHW estimator is already minimax rate optimal \citep{Jiao2015}.
The computed $\Gamma$-minimax estimators also appear to be somewhat similar to the JVHW estimator: in the output layer of the three settings with different prior information, the coefficients for the JVHW estimator are $0.97$, $0.90$ and $0.89$, respectively; the coefficients for the previous hidden layer are $0.17$, $0.17$ and $0.20$, respectively; the intercepts are $0.06$, $0.30$ and $0.30$, respectively.

\begin{figure}[bt!]
	\centering
	\includegraphics[scale=0.4]{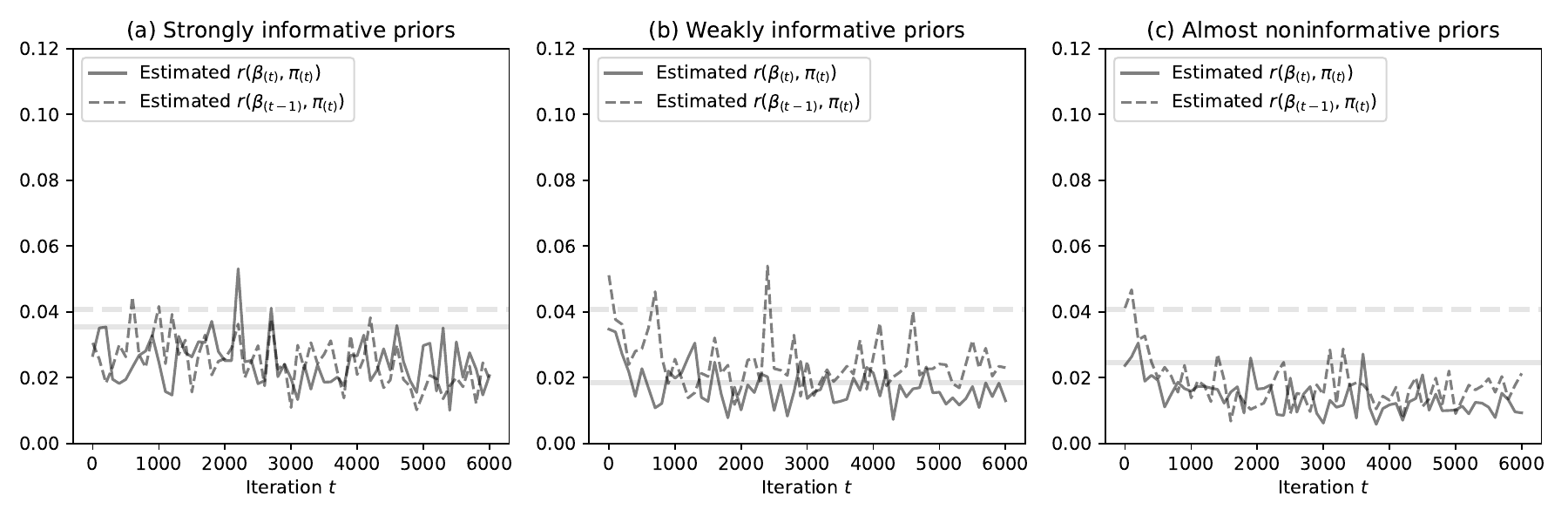}
	\caption{Estimated Bayes risks of the estimator over iterations when computing a $\Gamma_1$-minimax estimator. The lines are unbiased estimates of the current Bayes risks (y-axis) with 30 Monte Carlo runs over iterations (x-axis). The horizontal lines are the risks of the JVHW (dashed) and the computed $\Gamma$-minimax (solid) estimators, respectively, under $P_0$ in the simulation. For clearness of visualization, the estimated Bayes risks are plotted every 100 iterations.}
	\label{figure: entropy Bayes risk and risk}
\end{figure}

We further use the above methods to estimate entropy based on the dataset used as the true population in the simulation. The tuning parameters of the $\Gamma$-minimax estimators are exactly the same as those in the above simulation except that the starting point in Algorithm \ref{algorithm: MCMC-like} has more categories. The estimates are presented in Table~\ref{table: entropy data result}. All methods produce almost identical estimates. Because the sample size is more than ten times the sample size in the simulation and the JVHW estimator is minimax rate optimal \citep{Jiao2015}, we expect the JVHW estimator to have little room for improvement, which explains why the three $\Gamma$-minimax estimators perform similarly to the JVHW estimator. In other words, Gamma-minimax estimators appear to maintain, if not improve, the performance of the original JVHW estimator.

\begin{table}[bt!]
	\centering
	\caption{Estimated entropy based on the sample with size 1088 studied in \cite{Miller1989}. The strength of prior information in $\Gamma$-minimax estimators is shown in brackets.}
	\label{table: entropy data result}
	\begin{tabular}{l|l}
		\hline
		Estimator & Estimated entropy \\ \hline
		JVHW & \hspace{2.5em} 4.709 \\
		$\Gamma$-minimax (strong) & \hspace{2.5em} 4.709 \\
		$\Gamma$-minimax (weak) & \hspace{2.5em} 4.708 \\
		$\Gamma$-minimax (almost none) & \hspace{2.5em} 4.703 \\
		\hline
	\end{tabular}
\end{table}

\section{Discussion} \label{section: discussion}

We propose adversarial meta-learning algorithms to compute a Gamma-minimax estimator with theoretical guarantees under fairly general settings. These algorithms still leave room for improvement. As we discussed in Section~\ref{section: increasing grid}, the stopping criterion we employ does not necessarily indicate that the maximal Bayes risk is close to the true minimax Bayes risk. In future work, it would be interesting to derive a better criterion that necessarily does indicate this near optimality. Our algorithms also require the user to choose increasingly fine approximating grids to the model space. Although we propose a heuristic algorithm for this procedure that performed well in our experiments, at this point, we have not provided optimality guarantees for this scheme. It may also be possible to improve our proposed algorithms to solve intermediate minimax problems in Section~\ref{section: increasing grid} by utilizing recent and ongoing advances from the machine learning literature that can be used to improve the training of generative adversarial networks.

We do not explicitly consider uncertainty quantification such as confidence intervals or credible intervals under a Gamma-minimax framework. Uncertainty quantification is important in practice since it provides more information than a point estimator and can be used for decision-making. In theory, our method may be directly applied if such a problem can be formulated into a Gamma-minimax problem. However, such a formulation remains unclear. The most challenging part is to identify a suitable risk function that correctly balances the level of uncertainty and the size of the output interval/region. Though the risk function used in \cite{Schafer2009} appears to provide one possible starting point, it is not clear how to extend this approach to nonparametric settings.

It is possible to allow the space of estimators $\decisionspace$ to increase as the grid $\modelspace_\ell$ increase. For example, we may specify an increasing sequence of estimator spaces $\{\decisionspace_\ell\}_{\ell=1}^\infty$ whose limit is dense in a general space $\decisionspace_0$; then, in Line~3 of Algorithm~\ref{algorithm: approximation with increasingly fine grid}, we compute a $\Gamma_\ell$-minimax estimator in $\decisionspace_\ell$, namely replace $\decisionspace$ with $\decisionspace_\ell$. This sequence of estimators might converge to a $\Gamma$-minimax estimator in $\decisionspace_0$. One possible choice of $\decisionspace_\ell$ ($\ell>1$) in this approach is a space of statistical knowledge networks with the given estimator being the computed $\Gamma_{\ell-1}$-minimax estimator in $\decisionspace_{\ell-1}$. It is of future interest to investigate the properties of such an approach.

In conclusion, we propose adversarial meta-learning algorithms to compute a Gamma-minimax estimator under general models that can incorporate prior information in the form of generalized moment conditions. They can be useful when a parametric model is undesirable, semi-parametric efficiency theory does not apply, or we wish to utilize prior information to improve estimation.

\section*{Acknowledgements}
Generous support was provided by Amazon through an AWS Machine Learning Research Award, the NIH under award number DP2-LM013340, and the NSF under award number DMS-2210216. The content is solely the responsibility of the authors and does not necessarily represent the official views of Amazon, the NIH, or the NSF.

\newpage

\begin{appendix}
	
	\section{Two counterexamples of Condition~\ref{condition: subset of modelspace}} \label{section: counterexample of subset of modelspace}
	
	We provide two counterexamples of Condition~\ref{condition: subset of modelspace} to illustrate that this condition fails in extremely ill cases.
	
	In the first counterexample, $P \mapsto R(d,P)$ is discontinuous: we set $R(d,P^*)$ to be zero for a fixed $P^* \in \modelspace$ and $R(d,P)$ to be one for all other $P \in \modelspace$. If we choose the grid $\modelspace_\ell$ to be dense in $\modelspace$ but to never contain $P^*$, then Condition~\ref{condition: subset of modelspace} does not hold since $r_{\sup}(d,\tilde{\Gamma}_\ell)=1$ for sufficiently large $\ell$ such that $P^* \in \Omega_\ell$ but $r_{\sup}(d,\tilde{\Gamma}_{i \mid \ell})=0$ for all $i$ and $\ell$. This issue can be resolved by choosing a continuous risk function.
	
	In the second counterexample, $\modelspace_\ell$ does not contain distributions that are consistent with prior information. Suppose that $\Gamma=\{\pi \in \Pi: \int \Phi(P) \pi(\intd P) = 0\}$ where $\Phi(P):=\expect_P[X^2]$. In other words, it is known that the true data-generating mechanism $P_0$ must be a distribution that is a point mass at zero, and thus $\Gamma$ also only contains a point mass at $P_0$. If $\Phi(P) \neq 0$ for every $P \in \cup_{i=1}^\infty \modelspace_i$, then, even if $\bigcup_{\ell=1}^\infty \modelspace_\ell$ is dense in $\modelspace$, $\tilde{\Gamma}_{i \mid \ell} = \emptyset$ and thus Condition~\ref{condition: subset of modelspace} does not hold. This issue can be resolved by rewriting the problem such that these hard constraints on $\modelspace$ are incorporated into the specification of $\modelspace$ rather than $\Gamma$.

	\section{Additional gradient-based algorithms} \label{section: GDmax}
	
	If we can evaluate $R(\beta,P)$ exactly for all $\beta \in \mathcal{H}$ and $P \in \modelspace_\ell$, then the GDmax algorithm (Algorithm~\ref{algorithm: GDmax}) may be used. Note that Line~\ref{step: GDmax maximization} can be formulated into a linear program, which can always be solved in polynomial time with an interior point method \citep[e.g.,][]{Jiang2020} and often be solved in polynomial time with a simplex method \citep{Spielman2004}.
	
	\begin{algorithm}
		\caption{Gradient descent with max-oracle (GDmax) to compute a $\Gamma_\ell$-minimax estimator}
		\label{algorithm: GDmax}
		\begin{algorithmic}[1]
			\State Initialize $\beta_{(0)} \in \real^D$. Set learning rate $\eta>0$ and max-oracle accuracy $\zeta>0$.
			\For{$t=1,2,\ldots$}
			\State Maximization: find $\pi_{(t)} \in \Gamma_\ell$ such that $r(\beta_{(t-1)},\pi_{(t)}) \geq \max_{\pi \in \Gamma_\ell} r(\beta_{(t-1)},\pi) - \zeta$ \label{step: GDmax maximization}
			\State Gradient descent: $\beta_{(t)} \gets \beta_{(t-1)} - \eta \nabla_\beta r(\beta,\pi_{(t)}) |_{\beta = \beta_{(t-1)}}$ \label{step: GDmax GD}
			\EndFor
		\end{algorithmic}
	\end{algorithm}
	
	We have the following result on the validity of GDmax.
	\begin{theorem}[Validity of GDmax (Algorithm~\ref{algorithm: GDmax})] \label{theorem: GDmax convergence}
		Under conditions in Theorem~\ref{theorem: SGDmax convergence}, in Algorithm~\ref{algorithm: GDmax}, with $\eta = \epsilon^2/(L_1 L_2^2)$ and $\zeta = \epsilon^2/(24 L_1)$, $\beta_{(t)}$ is an $\epsilon$-stationary point of $\beta \mapsto r_{\sup}(\beta,\Gamma_\ell)$ for $t = O(L_1 L_2 \Delta/\epsilon^4)$, and is thus close to a local minimum of $\beta \mapsto r_{\sup}(\beta,\Gamma_\ell)$.
	\end{theorem}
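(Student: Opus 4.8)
The plan is to obtain the statement as the deterministic specialization of the GDmax analysis of \cite{Lin2020}, in exact parallel to the way Theorem~\ref{theorem: SGDmax convergence} was deduced from their stochastic (SGDmax) guarantee. First I would assemble the structural facts about the objective $\beta \mapsto r_{\sup}(\beta,\Gamma_\ell)$ needed to invoke that result. Under Condition~\ref{condition: uniform Lipschitz on R} each $\beta \mapsto R(\beta,P)$ is $L_1$-Lipschitz on $\real^D$ uniformly over $P \in \modelspace_\ell$, and under Condition~\ref{condition: uniform Lipschitz on R'} it has an $L_2$-Lipschitz gradient uniformly over $P \in \modelspace_\ell$, hence is $L_2$-smooth and in particular $L_2$-weakly convex. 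Since $r(\beta,\pi) = \int R(\beta,P)\,\pi(\intd P)$ is an average of these functions over $P$, it inherits the same $L_1$-Lipschitz and $L_2$-weak-convexity constants for every $\pi \in \Gamma_\ell$; and since $r_{\sup}(\beta,\Gamma_\ell) = \sup_{\pi \in \Gamma_\ell} r(\beta,\pi)$ is a pointwise supremum of such functions, it is again $L_1$-Lipschitz and $L_2$-weakly convex (enlarging $L_1$ if necessary so that the Moreau envelope $(r_{\sup})_{1/(2L_1)}$ is well defined).

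Next I would verify that the maximization step (Line~\ref{step: GDmax maximization} of Algorithm~\ref{algorithm: GDmax}) is well posed and matches the max-oracle hypothesis used in \cite{Lin2020}. Because $\modelspace_\ell$ is finite, a prior supported on $\modelspace_\ell$ is identified with a vector of probability masses in the simplex of $\real^{|\modelspace_\ell|}$, and the generalized-moment constraints defining $\Gamma$ become finitely many linear inequalities; thus $\Gamma_\ell$ is a (nonempty, by assumption) compact convex polytope and $\pi \mapsto r(\beta,\pi)$ is linear, so $\max_{\pi \in \Gamma_\ell} r(\beta,\pi)$ is attained and is computable by a linear-program solver from the exactly evaluable risks $R(\beta,P)$, $P \in \modelspace_\ell$. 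In particular the exact inner maximum can be returned, so the oracle accuracy $\zeta = \epsilon^2/(24 L_1)$ is certainly achievable; this is also what distinguishes GDmax from SGDmax and is why no gradient-noise term $\sigma^2$ appears.

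With these pieces in hand I would apply the deterministic GDmax convergence theorem of \cite{Lin2020} for nonconvex--concave minimax problems to $r_{\sup}(\cdot,\Gamma_\ell)$ with the prescribed step size $\eta = \epsilon^2/(L_1 L_2^2)$ and oracle tolerance $\zeta = \epsilon^2/(24 L_1)$, concluding that after $t = O(L_1 L_2 \Delta/\epsilon^4)$ iterations $\|\nabla (r_{\sup})_{1/(2L_1)}(\beta_{(t)})\| \le \epsilon$, i.e. $\beta_{(t)}$ is an $\epsilon$-stationary point of $\beta \mapsto r_{\sup}(\beta,\Gamma_\ell)$; combining with Lemma~3.8 of \cite{Lin2020}, as in the discussion preceding Theorem~\ref{theorem: SGDmax convergence}, then gives that $r_{\sup}(\beta_{(t)},\Gamma_\ell)$ is close to a local minimum. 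The main obstacle is not the reduction itself but the careful bookkeeping in the first paragraph: checking that passing to the average over $P$ and then to the supremum over $\Gamma_\ell$ preserves \emph{precisely} the Lipschitz and weak-convexity constants required for the Moreau-envelope argument with parameter $1/(2L_1)$, and that the finite-grid structure is exactly what makes the inner maximization simultaneously concave (indeed linear) and exactly solvable. Once those are pinned down, the claimed rate $O(L_1 L_2 \Delta/\epsilon^4)$ is simply the deterministic counterpart of the stochastic bound $O(L_1(L_2^2+\sigma^2)\Delta/\epsilon^4)$ in Theorem~\ref{theorem: SGDmax convergence}.
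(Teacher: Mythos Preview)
Your proposal is correct and follows essentially the same approach as the paper: both proofs reduce to verifying the hypotheses of \cite{Lin2020} (the paper phrases this as checking their Assumptions~3.1 and 3.6 and then invoking their Theorem~E.3) and the verifications rest on the same ingredients---the uniform Lipschitz conditions~\ref{condition: uniform Lipschitz on R}--\ref{condition: uniform Lipschitz on R'}, linearity/concavity of $\pi\mapsto r(\beta,\pi)$, and the fact that $\Gamma_\ell$ is a compact convex polytope in the simplex. One small quibble: your closing remark that the rate $O(L_1 L_2 \Delta/\epsilon^4)$ is obtained from the SGDmax rate $O(L_1(L_2^2+\sigma^2)\Delta/\epsilon^4)$ by setting $\sigma=0$ is not quite right (that would give $L_2^2$, not $L_2$); the paper simply reads the deterministic rate directly off Theorem~E.3 of \cite{Lin2020} rather than specializing the stochastic bound.
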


	Therefore, we propose a variant (Algorithm~\ref{algorithm: convenient SGDmax}) by replacing this line with Lines~\ref{step: convenient SGDmax stochastic max1}--\ref{step: convenient SGDmax stochastic max2} so that ordinary linear program solvers can be directly applied. The following theorem justifies this variant.
	
	\begin{algorithm}
		\caption{Convenient variant of SGDmax (Algorithm~\ref{algorithm: SGDmax}) to compute a $\Gamma_\ell$-minimax estimator}
		\label{algorithm: convenient SGDmax}
		\begin{algorithmic}[1]
			\State Initialize $\beta_{(0)} \in \real^D$. Set learning rate $\eta>0$ and batch sizes $J$, $J'$.
			\For{$t=1,2,\ldots$}
			\State Generate iid copies $\xi_1,\ldots,\xi_{J'}$ of $\xi$. \label{step: convenient SGDmax stochastic max1}
			\State Stochastic maximization: $\pi_{(t)} \gets \argmax_{\pi \in \Gamma_\ell} \frac{1}{J'} \sum_{j=1}^{J'} \hat{r}(\beta_{(t-1)},\pi,\xi_j)$. \label{step: convenient SGDmax stochastic max2}
			\State Generate iid copies of $\xi_{J'+1},\ldots,\xi_{J'+J}$ of $\xi$.
			\State Stochastic gradient descent: $\beta_{(t)} \gets \beta_{(t-1)} - \frac{\eta}{J} \sum_{j=J'+1}^{J'+J} \nabla_\beta \hat{r}(\beta,\pi_{(t)},\xi_j) |_{\beta = \beta_{(t-1)}}$.
			\EndFor
		\end{algorithmic}
	\end{algorithm}
	
	The validity of this variant of SGDmax is given in Theorem~\ref{theorem: SGDmax variant convergence} below.
	
	\begin{theorem}[Validity of convenient variant of SGDmax (Algorithm~\ref{algorithm: convenient SGDmax})] \label{theorem: SGDmax variant convergence}
		Suppose that $\{\xi \mapsto \hat{r}(\beta,\pi,\xi): \beta \in \real^D, \pi \in \Gamma_\ell\}$ is a $\Xi$-Glivenko-Centelli class \citep{vandervaart1996}. Then, for any $\zeta>0$, there exists a sufficiently large $J'$ such that
		$$\expect[r(\beta_{(t-1)},\pi_{(t)})] \geq \max_{\pi \in \Gamma_\ell} r(\beta_{(t-1)},\pi) - \zeta$$
		for all $t$, where the expectation is taken over $\pi_{(t)}$ and $\beta_{(t-1)}$ is fixed. Therefore, with the chosen parameters in Theorem~\ref{theorem: SGDmax convergence}, we may choose a sufficiently large $J'$ so that $\beta_{(t)}$ is an $\epsilon$-stationary point of $\beta \mapsto r_{\sup}(\beta,\Gamma_\ell)$ in expectation for $t = O(L_1 (L_2^2 + \sigma^2) \Delta/\epsilon^4)$ and is thus close to a local minimum of $\beta \mapsto r_{\sup}(\beta,\Gamma_\ell)$ with high probability.
	\end{theorem}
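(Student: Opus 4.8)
The plan is to verify that, once $J'$ is large enough, the stochastic maximization in Lines~\ref{step: convenient SGDmax stochastic max1}--\ref{step: convenient SGDmax stochastic max2} of Algorithm~\ref{algorithm: convenient SGDmax} meets the max-oracle accuracy requirement imposed on Line~\ref{step: SGDmax stochastic max} of Algorithm~\ref{algorithm: SGDmax}; once this is in hand, Algorithm~\ref{algorithm: convenient SGDmax} is literally a special case of Algorithm~\ref{algorithm: SGDmax}, and Theorem~\ref{theorem: SGDmax convergence} can be invoked verbatim. So the whole argument reduces to the displayed bound $\expect[r(\beta_{(t-1)},\pi_{(t)})]\ge\max_{\pi\in\Gamma_\ell}r(\beta_{(t-1)},\pi)-\zeta$ with $\zeta=\epsilon^2/(24L_1)$, and in fact to producing a bound on the suboptimality of the empirical maximizer $\pi_{(t)}$ that is valid uniformly over the (a priori unknown, possibly unbounded) collection of coefficient vectors $\beta_{(t-1)}$ visited by the algorithm.

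First I would fix an arbitrary $\beta\in\real^D$ and abbreviate $\hat r_{J'}(\beta,\pi):=\frac{1}{J'}\sum_{j=1}^{J'}\hat r(\beta,\pi,\xi_j)$. Because every $\pi\in\Gamma_\ell$ is supported on the finite set $\modelspace_\ell$, we have $\hat r(\beta,\pi,\xi)=\sum_{P\in\modelspace_\ell}\pi(\{P\})\hat R(\beta,P,\xi)$, so unbiasedness of $\hat R(\beta,P,\cdot)$ for $R(\beta,P)$ gives $\expect_\xi[\hat r(\beta,\pi,\xi)]=r(\beta,\pi)$ for every such $\pi$; moreover $\pi\mapsto r(\beta,\pi)$ and $\pi\mapsto\hat r_{J'}(\beta,\pi)$ are affine, hence continuous, on the compact set $\Gamma_\ell$, so both attain their maxima (the latter being exactly the linear program of Line~\ref{step: convenient SGDmax stochastic max2}). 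Let $\pi^\star_\beta$ attain $M(\beta):=\max_{\pi\in\Gamma_\ell}r(\beta,\pi)$ and $\hat\pi_\beta$ attain $\max_{\pi\in\Gamma_\ell}\hat r_{J'}(\beta,\pi)$. The routine empirical-maximizer comparison then gives
\[
M(\beta)=r(\beta,\pi^\star_\beta)\le\hat r_{J'}(\beta,\pi^\star_\beta)+\delta_{J'}\le\hat r_{J'}(\beta,\hat\pi_\beta)+\delta_{J'}\le r(\beta,\hat\pi_\beta)+2\delta_{J'},
\]
where $\delta_{J'}:=\sup_{\beta'\in\real^D,\,\pi\in\Gamma_\ell}\bigl|\hat r_{J'}(\beta',\pi)-r(\beta',\pi)\bigr|$; together with $r(\beta,\hat\pi_\beta)\le M(\beta)$ this yields the pointwise sandwich $0\le M(\beta)-r(\beta,\hat\pi_\beta)\le2\delta_{J'}$, with $\delta_{J'}$ depending on neither $\beta$ nor $t$.

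Next I would bound $\expect[\delta_{J'}]$. Writing $\Xi_{J'}$ for the empirical measure of $\xi_1,\dots,\xi_{J'}$, the identity $\expect_\xi[\hat r(\beta,\pi,\xi)]=r(\beta,\pi)$ shows that $\delta_{J'}=\sup_{g\in\mathcal G}|\Xi_{J'}g-\Xi g|$ for the class $\mathcal G=\{\xi\mapsto\hat r(\beta,\pi,\xi):\beta\in\real^D,\ \pi\in\Gamma_\ell\}$. By hypothesis $\mathcal G$ is $\Xi$-Glivenko--Cantelli, so $\delta_{J'}\to0$ outer almost surely, and --- this is the one place the $L^1$, rather than merely in-probability, form of the Glivenko--Cantelli property is needed --- $\expect^*[\delta_{J'}]\to0$ as $J'\to\infty$ \citep{vandervaart1996}. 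Hence there is a finite $J'$ with $2\expect^*[\delta_{J'}]\le\zeta=\epsilon^2/(24L_1)$; for this $J'$ and every $t$, taking expectations over $\xi_1,\dots,\xi_{J'}$ in the sandwich above with $\beta=\beta_{(t-1)}$ gives $\expect[\max_{\pi\in\Gamma_\ell}r(\beta_{(t-1)},\pi)-r(\beta_{(t-1)},\pi_{(t)})]\le2\expect^*[\delta_{J'}]\le\zeta$, which is exactly the accuracy required in Line~\ref{step: SGDmax stochastic max} of Algorithm~\ref{algorithm: SGDmax}. Applying Theorem~\ref{theorem: SGDmax convergence} with the stated $\eta$, $\zeta$ and $J=1$ then yields $\epsilon$-stationarity of $\beta_{(t)}$ in expectation after $t=O(L_1(L_2^2+\sigma^2)\Delta/\epsilon^4)$ iterations.

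I expect the main obstacle to be precisely the requirement that a single $J'$ work \emph{for all $t$}: the iterates $\beta_{(t-1)}$ are produced adaptively and need not remain in a bounded set, so the per-$\beta$ suboptimality bound $2\delta_{J'}$ must be dominated by something free of $\beta$. This is the reason the Glivenko--Cantelli hypothesis is posed uniformly over all $\beta\in\real^D$ and is used in its $L^1$ form; for a single fixed $\beta$ one could instead use the crude deterministic bound $\max_{P,P'\in\modelspace_\ell}|R(\beta,P)-R(\beta,P')|$, but that constant depends on $\beta$ and does not transfer across iterations. A minor technical point, absorbed by working with outer expectations, is the measurability of the linear-program solution $\pi_{(t)}$ as a function of $(\xi_1,\dots,\xi_{J'})$.
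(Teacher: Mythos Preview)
Your proposal is correct and follows essentially the same empirical-process argument as the paper: bound the suboptimality of the empirical maximizer uniformly by a Glivenko--Cantelli deviation, then pass to expectations. The only cosmetic differences are that the paper bounds via the difference class $\mathcal{F}-\mathcal{F}$ (invoking a GC-preservation result, Corollary~9.27 of \cite{Kosorok2008}) rather than incurring your factor of $2$ on the original class, and it makes the passage from almost-sure to $L^1$ convergence explicit via dominated convergence against an integrable envelope (Problem~1, \S2.4 of \cite{vandervaart1996}) rather than appealing to an ``$L^1$ form'' of the Glivenko--Cantelli property.
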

	
	We prove Theorem~\ref{theorem: SGDmax variant convergence} by showing that $\max_{\pi \in \Gamma_\ell} r(\beta_{(t-1)},\pi) - \expect[r(\beta_{(t-1)},\pi_{(t)})]$ converges to $0$ as $J' \rightarrow \infty$. The proof is essentially an application of empirical process theory to the study of an M-estimator.

	\section{Additional simulation: mean estimation} \label{section: simulation mean}
	
	In this appendix, we illustrate our proposed methods via simulation in a special case of Example~\ref{example: estimation}, namely for estimating the mean of a distribution. We assume that $\modelspace$ consists of all probability distributions defined on the Borel $\sigma$-algebra on $[0,1]$ and we observe $\observation=(X_1,X_2,\ldots,X_n)$, where $X_1,\ldots,X_n \overset{\mathrm{iid}}{\sim} P_0 \in \modelspace$. Here we take $n=10$. The estimand is $\Psi(P_0)=\int x \, P_0(\intd x)$. We use the mean squared error risk introduced in Example~\ref{example: estimation}. Suppose that we represent the prior information by $\Gamma=\{\pi \in \Pi: \int \Psi(P) \, \pi(\intd P) = 0.3 \}$, which corresponds to the set of prior distributions in $\Pi$ that satisfy an equality constraint on the prior mean of $\Psi(P)$.
	
	We apply our method to three spaces of estimators separately. The first space, $\decisionspace_\mathrm{linear}$, is the set of affine transformations of the sample mean, that is, $\decisionspace_\mathrm{linear}=\{d: d(\observation) = \beta_0 + \beta_1 \sum_{i=1}^n X_i / n, \beta_0,\beta_1 \in \real\}$. As shown in Proposition~\ref{proposition: Gamma-minimax estimator of mean} in Appendix~\ref{section: theory of mean estimation}, there is an estimator $d^*$ in $\decisionspace_\mathrm{linear}$ that is $\Gamma$-minimax in the space of all estimators that are square-integrable with respect to all $P \in \modelspace$, so we consider this simple space to better compare our computed estimator with that theoretical $\Gamma$-minimax estimator. When computing a $\Gamma$-minimax estimator in $\decisionspace_\mathrm{linear}$, we initialize the estimator to be the sample mean, that is, we let $\beta_0=0$ and $\beta_1=1$.
	
	The second space, $\decisionspace_\mathrm{skn}$ (statistical knowledge network), is a set of neural networks designed based on statistical knowledge that includes the sample mean as an input. We consider this space to illustrate our proposal in Section~\ref{section: choice of estimator space}. More precisely, we use the architecture in Fig.~\ref{figure: skn mean}, which is similar to the deep set architecture \citep{Zaheer2017,Maron2019} and is a permutation invariant neural network. We use such an architecture to account for the fact that the sample is iid. In this architecture, the sample mean node is used as an augmenting node to an ordinary deep set network and is combined with the output of that ordinary network in the fourth hidden layer to obtain the final output. Note that $\decisionspace_\mathrm{skn}\supset\decisionspace_\mathrm{linear}$. When computing a $\Gamma$-minimax estimator for this class, we also initialize the network to be exactly the sample mean, which is a reasonable choice given that the sample mean is known to be a sensible estimator. In this simulation experiment, we choose the dimensionality of nodes in each hidden layer in Fig.~\ref{figure: skn mean} as follows: each node in the first, second, third and fourth hidden layer represents a vector in $\real^{10}$, $\real^5$, $\real^{10}$ and $\real$, respectively. We do not use larger architectures because usually the sample mean is already a good estimator, and we expect to obtain a useful estimator as a small perturbation of this estimator. We also use the ReLU as the activation function. We did not use ELMs in this and the following simulations because we found that neural networks perform well.
	
	\begin{figure}[bt!]
		\centering
		\resizebox{!}{2.3in}{%
			\begin{tikzpicture}[x=1.5cm, y=1.5cm, >=stealth]
				\foreach \m [count=\y] in {1,2,3,missing,4}
				\node [every neuron/.try, neuron \m/.try] (input-\m) at (0,2.5-\y) {};
				
				\foreach \m [count=\y] in {1,2,3,missing,4}
				\node [every neuron/.try, neuron \m/.try ] (transformhidden1-\m) at (1.5,2.5-\y) {};
				
				\foreach \m [count=\y] in {1,2,3,missing,4}
				\node [every neuron/.try, neuron \m/.try ] (transformhidden2-\m) at (3,2.5-\y) {};
				
				
				\node [every neuron/.try, neuron 1/.try ] (aggregate-1) at (4.5,1) {};
				\node [every neuron/.try, neuron 2/.try ] (aggregate-2) at (4.5,-1) {$\sum$};
				
				\foreach \m [count=\y] in {1}
				\node [every neuron/.try, neuron \m/.try ] (aggregatehidden1-\m) at (6,-1) {};
				
				\foreach \m [count=\y] in {1}
				\node [every neuron/.try, neuron \m/.try ] (aggregatehidden2-\m) at (7.5,-1) {};
				
				\foreach \m [count=\y] in {1}
				\node [every neuron/.try, neuron \m/.try ] (output-\m) at (9,0) {};
				
				\foreach \l [count=\i] in {1,2,3,n}
				\draw [<-] (input-\i) -- ++(-1,0)
				node [above, midway] {$X_\l$};
				
				\node [above] at (aggregate-1.north) {Sample mean};
				
				\foreach \i in {1,...,4}
				\draw [->] (input-\i) -- (transformhidden1-\i);
				
				\foreach \i in {1,...,4}
				\draw [->] (transformhidden1-\i) -- (transformhidden2-\i);
				
				\foreach \i in {1,...,4}
				\draw [->] (transformhidden2-\i) -- (aggregate-2);
				
				\draw [->] (aggregate-1) -- (aggregatehidden1-1);
				\draw [->] (aggregate-2) -- (aggregatehidden1-1);
				
				\draw [->] (aggregate-1) -- (aggregatehidden2-1);
				\draw [->] (aggregatehidden1-1) -- (aggregatehidden2-1);
				
				\draw [->] (aggregate-1) -- (output-1);
				\draw [->] (aggregatehidden2-1) -- (output-1);
				
				\foreach \l [count=\x from 0] in {Input, 1st Hidden, 2nd Hidden, Pooling, 3rd Hidden, 4th Hidden, Ouput}
				\node [align=center, above] at (\x*1.5,2) {\l \\ layer};
			\end{tikzpicture}%
		}
		\caption{Architecture of the permutation invariant neural network estimator of the mean in $\decisionspace_\mathrm{skn}$. $X_i$: observation $i$ in the sample; $\sum$: the node that sums up all ancestor nodes. In the first two hidden layers, all input nodes are transformed by the same function. The arrows from the input nodes to the sample mean estimator are omitted from this graph. Each node in the hidden layers represents a vector.}
		\label{figure: skn mean}
	\end{figure}
	
	The third space, $\decisionspace_\mathrm{nn}$, is a set of neural networks that do not utilize knowledge of the sample mean. We consider this space to illustrate our method without utilizing existing estimators. These estimators are also deep set networks with similar architecture as $\decisionspace_\mathrm{skn}$ in Fig.~\ref{figure: skn mean}. The main difference is that the explicit sample mean node and the fourth hidden layer are removed. When computing a $\Gamma$-minimax estimator in $\decisionspace_\mathrm{nn}$, we also randomly initialize the network, unlike $\decisionspace_\mathrm{linear}$ and $\decisionspace_\mathrm{skn}$, in order not to input statistical knowledge. Because the ReLU activation function is used, $\decisionspace_\mathrm{nn}\supset \decisionspace_\mathrm{linear}$, and we do not expect that optimizing over $\decisionspace_\mathrm{nn}$ should not lead to a $\Gamma$-minimax estimator with worse performance than those in $\decisionspace_\mathrm{linear}$ and $\decisionspace_\mathrm{skn}$.
	
	To construct the grid $\modelspace_\ell$ for this problem, we use a simpler method than Algorithm~\ref{algorithm: MCMC-like}. As indicated by Lemma~\ref{lemma: max variance distribution} in Appendix~\ref{section: theory of mean estimation}, for estimators in $\decisionspace_\mathrm{linear}$, Bernoulli distributions tend to have high risks since all probability weights lie on the boundary of $[0,1]$; in addition, a prior $\pi^*$ for which $d^*$ is Bayes is a Beta prior over Bernoulli distributions. Therefore, we randomly generate 2000 Bernoulli distributions as grid points in $\modelspace_1$. We also include two degenerate distributions in this grid, namely the distribution that places all of its mass at $0$ and that which places all of its mass at $1$. When constructing $\modelspace_\ell$ from $\modelspace_{\ell-1}$, we still add in more complicated distributions to make the grid dense in the limit: we first randomly generate 500 discrete distributions with support being those in $\modelspace_{\ell-1}$; then we randomly generate 10 new support points in $[0,1]$ and 1000 distributions with support points being the union of the new support points and the existing support points in $\modelspace_{\ell-1}$.
	
	When computing the $\Gamma$-minimax estimator, for each grid $\modelspace_\ell$, we compute the $\Gamma_\ell$-minimax estimator for all three estimator spaces with Algorithm~\ref{algorithm: convenient SGDmax}. We set the learning rate $\eta = 0.005$, the batch size $J=50$ and the number of iterations to be 200 for $\Gamma_\ell$ ($\ell>1$). The number of iterations for $\Gamma_1$ is larger because, in our experiments, we saw that a $\Gamma_1$-minimax estimator is already close to a $\Gamma$-minimax estimator, and using a large number of iterations in this step can improve the initial estimator substantially. For $\decisionspace_\mathrm{linear}$ and $\decisionspace_\mathrm{skn}$, the number of iterations for $\Gamma_1$ is 2000; the corresponding number for $\decisionspace_\mathrm{nn}$ is 6000 to account for the lack of human knowledge input. We also use Algorithm~\ref{algorithm: fictitious play} with 10000 iterations to compute a $\Gamma_\ell$-minimax estimator for $\decisionspace_\mathrm{linear}$ for illustration. In this setup, as described in Section~\ref{section: fictitious play}, we take the average of the computed $\Gamma$-minimax stochastic estimator as the final output estimator in $\decisionspace_\mathrm{linear}$. We do not apply Algorithm~\ref{algorithm: fictitious play} to $\decisionspace_\mathrm{skn}$ or $\decisionspace_\mathrm{nn}$ because it is computationally intractable for these estimator spaces.
	
	We set the stopping criterion in Algorithm~\ref{algorithm: approximation with increasingly fine grid} as follows. When Algorithm~\ref{algorithm: convenient SGDmax} is used to compute $\Gamma_\ell$-minimax estimators, we estimate both $r_{\sup}(d^*_{\ell-1},\Gamma_{\ell})$ and $r_{\sup}(d^*_{\ell-1},\Gamma_{\ell-1})$ with 2000 Monte Carlo runs as described in Section~\ref{section: increasing grid}; when Algorithm~\ref{algorithm: fictitious play} is used, $r_{\sup}(d^*_{\ell-1},\Gamma_{\ell})$ and $r_{\sup}(d^*_{\ell-1},\Gamma_{\ell-1})$ are computed exactly because $R(d,P)$ has a closed-form expression for all $d \in \decisionspace_\mathrm{linear}$ and $P \in \modelspace_\ell$. We set the tolerance $\epsilon$ to be equal to $0.0001$ so that we stop Algorithm~\ref{algorithm: approximation with increasingly fine grid} if $r_{\sup}(d^*_{\ell-1},\Gamma_{\ell}) - r_{\sup}(d^*_{\ell-1},\Gamma_{\ell-1}) \leq \epsilon$.
	
	After computation, we report the Bayes risk of the computed and theoretical $\Gamma$-minimax estimators under $\pi^*$, the prior such that $r(d^*,\pi^*) = \inf_{d \in \decisionspace} r_{\sup}(d,\Gamma)$. For the estimators in $\decisionspace_\mathrm{linear}$, we further report their coefficients. We also report two coefficients of the computed estimator in $\decisionspace_\mathrm{skn}$ as follows. Since $\decisionspace_\mathrm{linear} \subseteq \decisionspace_\mathrm{skn}$ and we initialize the estimator to be the sample mean for $\decisionspace_\mathrm{skn}$, we would expect that the bias $\beta_0$ and the weight of the sample mean $\beta_1$ in the output layer for the computed $\Gamma$-minimax estimator in $\decisionspace_\mathrm{skn}$ may correspond to those in $\decisionspace_\mathrm{linear}$. Therefore, we also report these two coefficients $\beta_0$ and $\beta_1$ for $\decisionspace_\mathrm{skn}$. This may not be the case for $\decisionspace_\mathrm{nn}$ because the sample mean is not explicit in its parameterization and all coefficients are randomly initialized, so we do not report any coefficients for $\decisionspace_\mathrm{nn}$.
	
	Table~\ref{table: mean result} presents the computation results. By Theorem~\ref{theorem: Gamma-minimaxity criterion} in Appendix~\ref{section: theory of mean estimation}, these computed estimators are all approximately $\Gamma$-minimax since their Bayes risks for $\pi^*$ are all close to that of a theoretical $\Gamma$-minimax estimator. The coefficients $\beta_0$ and $\beta_1$ of the computed estimators in $\decisionspace_\mathrm{linear}$ and $\decisionspace_\mathrm{skn}$ are also close to a theoretically derived estimator. For the computed estimator in $\decisionspace_\mathrm{skn}$, the weight of the other ancestor node in the output layer (i.e., the node in the 4th hidden layer in Fig.~\ref{figure: skn mean}) is $0.000$. Therefore, our computed $\Gamma$-minimax estimator in $\decisionspace_\mathrm{skn}$ is also close to a theoretically derived $\Gamma$-minimax estimator.
	
	\begin{table}[bt!]
		\centering
		\caption{Coefficients and Bayes risks of estimators of the mean. Unrestricted space: the space of all estimators that are square-integrable with respect to all $P \in \modelspace$.}
		\label{table: mean result}
		\begin{tabular}{l|l|r|r|r}
			\hline
			Estimator space & Method to obtain $d^*$ & $\beta_0$ & $\beta_1$ & $r(d,\pi^*)$ \\
			\hline
			Unrestricted space & Theoretical derivation & $0.072$ & $0.760$ & $0.012$ \\
			$\decisionspace_\mathrm{linear}$ & Algorithms~\ref{algorithm: approximation with increasingly fine grid} \& \ref{algorithm: convenient SGDmax} & $0.072$ & $0.763$ & $0.012$ \\
			$\decisionspace_\mathrm{skn}$ & Algorithms~\ref{algorithm: approximation with increasingly fine grid} \& \ref{algorithm: convenient SGDmax} & $0.071$ & $0.767$ & $0.012$ \\
			$\decisionspace_\mathrm{nn}$ & Algorithms~\ref{algorithm: approximation with increasingly fine grid} \& \ref{algorithm: convenient SGDmax} & --- & --- & $0.012$ \\
			$\decisionspace_\mathrm{linear}$ & Algorithms~\ref{algorithm: approximation with increasingly fine grid} \& \ref{algorithm: fictitious play} & $0.072$ & $0.760$ & $0.012$ \\
			\hline
		\end{tabular}
	\end{table}
	
	In our experiments, Algorithm~\ref{algorithm: approximation with increasingly fine grid} converged after computing a $\Gamma_1$-minimax estimator except when using Algorithm~\ref{algorithm: convenient SGDmax} for $\decisionspace_\mathrm{linear}$. Even in this exceptional case, the computed $\Gamma_1$-minimax estimator is still approximately $\Gamma$-minimax. We think the algorithm does not stop then in these cases because of Monte Carlo errors when computing $r_{\sup}(d^*_{\ell-1},\Gamma_\ell)$ and $r_{\sup}(d^*_{\ell-1},\Gamma_{\ell-1})$.
	
	Fig.~\ref{figure: mean parameter Bayes risk for least favorable prior} presents the Bayes risks (or its unbiased estimates) over iterations when computing a $\Gamma_1$-minimax estimator. In all cases using Algorithm~\ref{algorithm: convenient SGDmax}, the Bayes risks appear to decrease and converge. When using Algorithm~\ref{algorithm: fictitious play}, the upper and lower bounds both converge to the same limit. The limiting values of the Bayes risks in all cases are close to $r(d^*,\pi^*)$ because $\Gamma_1$ can approximate $\pi^*$ well.

		\section{Sensitivity analysis for tuning parameter selection} \label{section: simulation n new species sensitivity}
		
		For the simulation in Section~\ref{section: simulation n new species} with strongly informative prior information, we conduct three simulations to investigate the sensitivity of our proposed method to the selection of tuning parameters. In each simulation below, we vary one set of tuning parameters and rerun the algorithm to obtain an estimator. In the first simulation, we vary the starting point of Algorithm~\ref{algorithm: MCMC-like} to construct the first grid $\modelspace_1$. The new starting point is a distribution with 173 categories and $\Phi(P_{(0)}) = 61$, and so this starting point is qualitatively different from the one chosen in the original simulation. In the second simulation, we vary the grid sizes: There are 500 grid points in $\modelspace_1$ and we add 500 grid points each time we enlarge the grid. In the third simulation, we chose a wider and deeper statistical knowledge network (see Fig.~\ref{figure: nn n new species2}): Compared to the original simulation, we add one more hidden layer and increased the number of hidden nodes in the first two hidden layers to 100. As shown in Table~\ref{table: new species sensitivity result}, the results in these sensitivity simulations appear similar to that in Section~\ref{section: simulation n new species} within the variation due to randomness in MCMC (Algorithm~\ref{algorithm: MCMC-like}) and SGDmax (Algorithm~\ref{algorithm: convenient SGDmax}).

		\begin{figure}[bt!]
			\centering
			\resizebox{!}{2.3in}{%
				\begin{tikzpicture}[x=1.5cm, y=1.5cm, >=stealth]
					\foreach \m [count=\y] in {1,missing,2}
					\node [every neuron/.try, neuron \m/.try] (input-\m) at (0,0.5-\y) {};
					
					\node [every neuron/.try, neuron 1/.try] (OSW) at (0,2) {};
					\node [every neuron/.try, neuron 1/.try] (SCL) at (0,1) {};
					
					\foreach \m [count=\y] in {1,missing,2}
					\node [every neuron/.try, neuron \m/.try ] (hidden1-\m) at (2,0.5-\y) {};
					
					\foreach \m [count=\y] in {1}
					\node [every neuron/.try, neuron \m/.try ] (hidden2-\m) at (4,-1) {};
					
					\foreach \m [count=\y] in {1}
					\node [every neuron/.try, neuron \m/.try ] (hidden3-\m) at (6,-0.5) {};
					
					\foreach \m [count=\y] in {1}
					\node [every neuron/.try, neuron \m/.try ] (output-\m) at (8,0) {};
					
					\foreach \l [count=\i] in {1,n}
					\draw [<-] (input-\i) -- ++(-1,0)
					node [above, midway] {$X_\l$};
					
					\node [left] at (OSW.west) {OSW};
					\node [left] at (SCL.west) {SCL};

					\foreach \i in {1,...,2}
					\foreach \j in {1,...,2}
					\draw [->] (input-\i) -- (hidden1-\j);
					
					\foreach \j in {1,...,2}
					\draw [->] (OSW) -- (hidden1-\j);
					
					\foreach \j in {1,...,2}
					\draw [->] (SCL) -- (hidden1-\j);
					
					\foreach \i in {1,...,2}
					\foreach \j in {1}
					\draw [->] (hidden1-\i) -- (hidden2-\j);
					
					\foreach \i in {1}
					\foreach \j in {1}
					\draw [->] (hidden2-\i) -- (hidden3-\j);
					
					\draw [->] (OSW) -- (hidden2-1);
					\draw [->] (SCL) -- (hidden2-1);
					
					\draw [->] (OSW) -- (hidden3-1);
					\draw [->] (SCL) -- (hidden3-1);
					
					\draw [->] (OSW) -- (output-1);
					\draw [->] (SCL) -- (output-1);
					\draw [->] (hidden3-1) -- (output-1);
					
					\foreach \l [count=\x from 0] in {Input, 1st Hidden, 2nd Hidden, 3rd Hidden, Ouput}
					\node [align=center, above] at (\x*2,2.5) {\l \\ layer};
				\end{tikzpicture}%
			}
			\caption{Architecture of the deeper and wider neural network estimator of the expected number of new categories.}
			\label{figure: nn n new species2}
		\end{figure}

		\begin{table}[bt!]
			\centering
			\caption{Table similar to Table~\ref{table: new species result} for sensitivity analysis with strongly informative prior information.}
			\label{table: new species sensitivity result}
			\begin{tabular}{l|r|r}
				\hline
				Varied tuning parameter & $R(d,P_0)$ & $r(d,\hat{\pi}^*)$ \\
				\hline
				Initial distribution in MCMC & 19 & 44 \\
				Grid size & 15 & 34 \\
				Statistical knowledge network structure & 17 & 38 \\
				\hline
			\end{tabular}
		\end{table}

	\section{Proofs} \label{section: proof}
	
	\subsection{Proof of Theorem~\ref{theorem: approximate Gamma-minimax on a grid} and Corollary~\ref{corollary: convexity; approximate Gamma-minimax on a grid}} \label{section: Proof of Theorem approximate Gamma-minimax on a grid and Corollary convexity; approximate Gamma-minimax on a grid}
	
	\begin{lemma} \label{lemma: approximate r_sup on subset}
		If $\{\Omega_\ell\}_{\ell=1}^\infty$ is an increasing sequence of subsets of $\modelspace$ such that $\bigcup_{\ell=1}^\infty \Omega_\ell = \modelspace$, then, for any $d \in \decisionspace$, $r_{\sup}(d,\tilde{\Gamma}_\ell) \nearrow r_{\sup}(d,\Gamma)$ ($\ell \rightarrow \infty$).
	\end{lemma}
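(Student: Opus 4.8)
The plan is to show first that $\{r_{\sup}(d,\tilde{\Gamma}_\ell)\}_{\ell=1}^\infty$ is nondecreasing and bounded above by $r_{\sup}(d,\Gamma)$, hence convergent to a limit $L \le r_{\sup}(d,\Gamma)$, and then to establish the matching lower bound $L \ge r_{\sup}(d,\Gamma)$ by a discretization argument based on \cite{Pinelis2016}. The monotonicity is immediate: since $\{\Omega_\ell\}_{\ell=1}^\infty$ is increasing, every prior with support in $\Omega_\ell$ also has support in $\Omega_{\ell+1}$, so $\tilde{\Gamma}_\ell \subseteq \tilde{\Gamma}_{\ell+1} \subseteq \Gamma$; taking the supremum of $\pi \mapsto r(d,\pi)$ over these nested sets gives $r_{\sup}(d,\tilde{\Gamma}_\ell) \le r_{\sup}(d,\tilde{\Gamma}_{\ell+1}) \le r_{\sup}(d,\Gamma)$ for all $\ell$. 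As $r_{\sup}(d,\Gamma)$ is finite by assumption, the sequence is nondecreasing and bounded above, hence converges to some $L \le r_{\sup}(d,\Gamma)$.

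The substance is the reverse inequality. Fix $\epsilon > 0$. The functions $\Phi_1,\ldots,\Phi_K$ and $P \mapsto R(d,P)$ are measurable and, for every $\pi \in \Gamma$, integrable with respect to $\pi$ (the $\Phi_k$ by the definition of $\Gamma$, and $R(d,\cdot)$ by the standing integrability assumption on the risk). Since $\Gamma$ is cut out of $\Pi$ by the finitely many linear inequality constraints $\int \Phi_k(P)\,\pi(\intd P) \le c_k$ and $r(d,\cdot)$ is a linear functional, the result of \cite{Pinelis2016} on extreme points of moment sets applies and yields that $r_{\sup}(d,\Gamma)$ is approximated arbitrarily well by discrete priors in $\Gamma$; in particular there is a discrete $\pi_\epsilon \in \Gamma$ with finite support $\{P_1,\ldots,P_m\} \subseteq \modelspace$ such that $r(d,\pi_\epsilon) > r_{\sup}(d,\Gamma) - \epsilon$. (I would double-check which form of this result to cite: a Richter--Rogosinski / Tchakaloff-type argument applied to the map $P \mapsto (\Phi_1(P),\ldots,\Phi_K(P),R(d,P))$ would in fact give, for any prescribed $\pi \in \Gamma$, a discrete $\pi' \in \Gamma$ with at most $K+2$ atoms and $r(d,\pi') = r(d,\pi)$ exactly, which is more than enough; but only the approximate statement is needed below.)

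It then remains to absorb the finitely many support points into the grid. Since $\bigcup_{\ell=1}^\infty \Omega_\ell = \modelspace$ and $\{P_1,\ldots,P_m\}$ is finite, each $P_j$ lies in some $\Omega_{\ell_j}$, and because the $\Omega_\ell$ are increasing, all of $P_1,\ldots,P_m$ lie in $\Omega_{\ell^*}$ with $\ell^* := \max_j \ell_j$. Hence $\pi_\epsilon$ has support contained in $\Omega_{\ell^*}$, so $\pi_\epsilon \in \tilde{\Gamma}_{\ell^*}$, and therefore $L \ge r_{\sup}(d,\tilde{\Gamma}_{\ell^*}) \ge r(d,\pi_\epsilon) > r_{\sup}(d,\Gamma) - \epsilon$. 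Letting $\epsilon \downarrow 0$ gives $L \ge r_{\sup}(d,\Gamma)$, so $L = r_{\sup}(d,\Gamma)$; combined with the monotonicity from the first paragraph, this is exactly the claim $r_{\sup}(d,\tilde{\Gamma}_\ell) \nearrow r_{\sup}(d,\Gamma)$.

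I expect the one genuine obstacle to be the invocation of \cite{Pinelis2016}: one must check that its hypotheses hold in the present generality (the measurable space $\modelspace$ need not be compact, metrizable, or Euclidean, and $R(d,\cdot)$ need not be bounded) and extract from it precisely the statement that the supremum of the linear functional $r(d,\cdot)$ over the moment-constrained set $\Gamma$ is attained, or at least approached, along finitely supported priors that themselves lie in $\Gamma$. Everything else --- nestedness of the $\tilde{\Gamma}_\ell$, the absorption of finitely many support points into the increasing exhaustion $\{\Omega_\ell\}$, and the $\epsilon$-bookkeeping --- is routine.
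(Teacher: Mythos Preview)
Your proposal is correct and follows essentially the same approach as the paper's proof: both argue monotonicity from the nesting $\tilde{\Gamma}_\ell \subseteq \tilde{\Gamma}_{\ell+1} \subseteq \Gamma$, then invoke Corollary~5 of \cite{Pinelis2016} to obtain a finitely supported $\nu \in \Gamma$ with $r(d,\nu) \geq r_{\sup}(d,\Gamma) - \epsilon$, and finally absorb the finite support into some $\Omega_{\ell^*}$ using the exhaustion $\bigcup_\ell \Omega_\ell = \modelspace$. Your parenthetical remark about checking the hypotheses of \cite{Pinelis2016} is well placed, as that citation is indeed the only nontrivial ingredient.
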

	
	\begin{proof}[Proof of Lemma~\ref{lemma: approximate r_sup on subset}]
		Since $\tilde{\Gamma}_\ell \subseteq \tilde{\Gamma}_{\ell+1} \subseteq \Gamma$, it holds that $r_{\sup}(d,\tilde{\Gamma}_\ell) \leq r_{\sup}(d,\tilde{\Gamma}_{\ell+1}) \leq r_{\sup}(d,\Gamma)$, and so we only need to lower bound $r_{\sup}(d,\tilde{\Gamma}_\ell)$. Fix $\epsilon > 0$. By Corollary~5 of \cite{Pinelis2016}, $r_{\sup}(d,\Gamma)$ can be approximated by $r(d,\nu)$ arbitrarily well for priors $\nu \in \Gamma$ with a finite support; that is, there exists $\nu \in \Gamma$ with finite support such that $r(d,\nu) \geq r_{\sup}(d,\Gamma) - \epsilon$. For sufficiently large $\ell$, $\Omega_\ell$ contains all support points of $\nu$ and hence $r_{\sup}(d,\tilde{\Gamma}_\ell) \geq r(d,\nu) \geq r_{\sup}(d,\Gamma) - \epsilon$. The desired result follows.
	\end{proof}
	
	\begin{lemma} \label{lemma: continuity of r_sup}
		Under Condition~\ref{condition: conditions on risk}, for any $\Gamma' \subseteq \Gamma$ and $\epsilon>0$, there exists $\delta>0$ such that $r_{\sup}(d^*,\Gamma') - r_{\sup}(d,\Gamma') \leq \epsilon$ for all $d \in \decisionspace$ such that $\varrho(d,d^*) \leq \delta$.
	\end{lemma}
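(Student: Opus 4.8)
The goal of Lemma~\ref{lemma: continuity of r_sup} is to show that $d \mapsto r_{\sup}(d,\Gamma')$ cannot drop by more than $\epsilon$ as $d$ moves from $d^*$ into a small $\varrho$-ball; equivalently, that this map is lower semicontinuous at $d^*$. Since $r_{\sup}(d,\Gamma') = \sup_{\pi\in\Gamma'} r(d,\pi)$, a lower bound on $r_{\sup}(d,\Gamma')$ is obtained by evaluating $r(d,\cdot)$ at any single fixed prior in $\Gamma'$. So the plan is: (i) pick a prior $\nu\in\Gamma'$ that nearly attains $r_{\sup}(d^*,\Gamma')$, and (ii) show $r(d,\nu)$ stays close to $r(d^*,\nu)$ for $d$ near $d^*$, using Condition~\ref{condition: conditions on risk}. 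Step~(ii) is immediate once $\nu$ has finite support, because then $r(d,\nu)$ is a convex combination of finitely many maps $d\mapsto R(d,P_k)$, each of which is continuous at $d^*$ by Condition~\ref{condition: conditions on risk}.

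Concretely, I would first dispose of the trivial case $\Gamma'=\emptyset$ and otherwise use $r_{\sup}(d^*,\Gamma') \le r_{\sup}(d^*,\Gamma) < \infty$, so that the supremum defining $r_{\sup}(d^*,\Gamma')$ is a finite real number. As in the proof of Lemma~\ref{lemma: approximate r_sup on subset}, I would invoke Corollary~5 of \cite{Pinelis2016} to obtain a finitely supported prior $\nu = \sum_{k=1}^{K} w_k\,\delta(P_k)\in\Gamma'$ with $r(d^*,\nu) \ge r_{\sup}(d^*,\Gamma') - \epsilon/2$. Then $|r(d,\nu) - r(d^*,\nu)| \le \sum_{k=1}^{K} w_k\,|R(d,P_k) - R(d^*,P_k)| \le \max_{1\le k\le K} |R(d,P_k) - R(d^*,P_k)|$, since the $w_k$ sum to one. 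By Condition~\ref{condition: conditions on risk}, for each $k$ there is $\delta_k>0$ with $|R(d,P_k)-R(d^*,P_k)|\le\epsilon/2$ whenever $\varrho(d,d^*)\le\delta_k$; taking $\delta := \min_{1\le k\le K} \delta_k$ gives $r(d,\nu)\ge r(d^*,\nu) - \epsilon/2 \ge r_{\sup}(d^*,\Gamma') - \epsilon$ for every $d$ with $\varrho(d,d^*)\le\delta$, whence $r_{\sup}(d,\Gamma')\ge r(d,\nu)\ge r_{\sup}(d^*,\Gamma')-\epsilon$, as required.

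The main obstacle is exactly the reason the finite-support reduction is needed. The naive route---fix an arbitrary near-maximizer $\pi\in\Gamma'$ and argue $r(d,\pi)=\int R(d,P)\,\pi(\intd P)\to\int R(d^*,P)\,\pi(\intd P)=r(d^*,\pi)$---requires interchanging $\lim_{d\to d^*}$ with the integral over $\modelspace$, but Condition~\ref{condition: conditions on risk} supplies only pointwise-in-$P$ continuity of $d\mapsto R(d,P)$ at $d^*$, with no dominating function, so dominated convergence is not available in general. Collapsing the integral to a finite sum via Pinelis's approximation sidesteps this difficulty; the only property of $\Gamma'$ used is that $r_{\sup}(d^*,\Gamma')$ is approximated by finitely supported priors in $\Gamma'$, which holds for the moment-constrained families $\Gamma$, $\Gamma_\ell$, $\tilde{\Gamma}_\ell$ and $\tilde{\Gamma}_{i\mid\ell}$ to which the lemma is subsequently applied.
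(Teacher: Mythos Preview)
Your proposal is correct and follows essentially the same approach as the paper's proof: invoke Corollary~5 of \cite{Pinelis2016} to obtain a finitely supported $\nu\in\Gamma'$ with $r(d^*,\nu)\ge r_{\sup}(d^*,\Gamma')-\epsilon/2$, use Condition~\ref{condition: conditions on risk} at the finitely many support points to get $|r(d,\nu)-r(d^*,\nu)|\le\epsilon/2$ for $d$ close to $d^*$, and conclude via $r_{\sup}(d,\Gamma')\ge r(d,\nu)$. Your version is slightly more explicit (writing out $\delta=\min_k\delta_k$ and handling the vacuous case $\Gamma'=\emptyset$), but the argument is the same.
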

	\begin{proof}[Proof of Lemma~\ref{lemma: continuity of r_sup}]
		By Corollary~5 of \cite{Pinelis2016}, there exists $\nu \in \Gamma'$ with a finite support such that $r_{\sup}(d^*,\Gamma') \leq r(d^*,\nu) + \epsilon/2$. By Condition~\ref{condition: conditions on risk} and the fact that $\nu$ has a finite support, there exists $\delta>0$ such that, for any $d \in \decisionspace$ such that $\varrho(d,d^*) \leq \delta$, $|r(d,\nu)-r(d^*,\nu)| \leq \epsilon/2$. Since $\nu \in \Gamma'$, we have that $r_{\sup}(d,\Gamma') \geq r(d,\nu)$ and thus $r_{\sup}(d^*,\Gamma') - r_{\sup}(d,\Gamma') \leq r(d^*,\nu) + \epsilon/2 - r(d,\nu) \leq \epsilon$ for any $d \in \decisionspace$ such that $\varrho(d,d^*) \leq \delta$.
	\end{proof}
	
	\begin{lemma} \label{lemma: subset of modelspace, grid approximate subset sufficient}
		Under Condition~\ref{condition: subset of modelspace}, it holds that $\lim_{i \rightarrow \infty} r_{\sup}(d,\tilde{\Gamma}_{i|\ell}) = r_{\sup}(d,\tilde{\Gamma}_\ell)$.
	\end{lemma}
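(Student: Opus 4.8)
The plan is to prove the two one-sided bounds separately, exploiting that the sequence $\{r_{\sup}(d,\tilde{\Gamma}_{i|\ell})\}_{i=1}^\infty$ is in fact monotone. Since $\{\modelspace_i\}_{i=1}^\infty$ is increasing, $\modelspace_i \cap \Omega_\ell \subseteq \modelspace_{i+1}\cap \Omega_\ell \subseteq \Omega_\ell$, and hence $\tilde{\Gamma}_{i|\ell} \subseteq \tilde{\Gamma}_{i+1|\ell} \subseteq \tilde{\Gamma}_\ell$. Taking the supremum of $r(d,\cdot)$ over this nested family shows that $i \mapsto r_{\sup}(d,\tilde{\Gamma}_{i|\ell})$ is nondecreasing and bounded above by $r_{\sup}(d,\tilde{\Gamma}_\ell)$; in particular $\lim_{i\to\infty} r_{\sup}(d,\tilde{\Gamma}_{i|\ell})$ exists and does not exceed $r_{\sup}(d,\tilde{\Gamma}_\ell)$. (If $\tilde{\Gamma}_\ell=\emptyset$, then $\tilde{\Gamma}_{i|\ell}=\emptyset$ for every $i$ and the identity is trivial, so assume $\tilde{\Gamma}_\ell \neq \emptyset$.)

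For the reverse inequality, fix $\epsilon>0$. First I would reduce to a finitely supported prior: viewing $\tilde{\Gamma}_\ell$ as the set of generalized-moment-constrained priors over the submodel $\Omega_\ell$, Corollary~5 of \cite{Pinelis2016} applies exactly as in the proof of Lemma~\ref{lemma: approximate r_sup on subset} and furnishes a prior $\nu \in \tilde{\Gamma}_\ell$ with finite support such that $r(d,\nu) \ge r_{\sup}(d,\tilde{\Gamma}_\ell) - \epsilon/2$. Now invoke part~\ref{condition: subset of modelspace, grid approximate subset} of Condition~\ref{condition: subset of modelspace} with this $\nu$: there is a sequence $\pi_i \in \tilde{\Gamma}_{i|\ell}$ with $r(d,\pi_i) \to r(d,\nu)$, so for all sufficiently large $i$, $r(d,\pi_i) \ge r(d,\nu) - \epsilon/2 \ge r_{\sup}(d,\tilde{\Gamma}_\ell) - \epsilon$. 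Since each $\pi_i \in \tilde{\Gamma}_{i|\ell}$, it follows that $r_{\sup}(d,\tilde{\Gamma}_{i|\ell}) \ge r_{\sup}(d,\tilde{\Gamma}_\ell) - \epsilon$ for all large $i$, whence $\lim_{i\to\infty} r_{\sup}(d,\tilde{\Gamma}_{i|\ell}) \ge r_{\sup}(d,\tilde{\Gamma}_\ell) - \epsilon$. Letting $\epsilon\downarrow 0$ and combining with the first paragraph yields the claimed equality.

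The delicate point, and the one I would be most careful about, is precisely this finite-support reduction: part~\ref{condition: subset of modelspace, grid approximate subset} of Condition~\ref{condition: subset of modelspace} supplies the approximating sequence $\pi_i$ only for priors $\pi$ of finite support, so it is essential that $r_{\sup}(d,\tilde{\Gamma}_\ell)$ itself be approximable by finitely supported members of $\tilde{\Gamma}_\ell$ — which is exactly what the Pinelis result gives, provided its hypotheses (measurability of $\Omega_\ell$, together with measurability and $\pi$-integrability of $P\mapsto R(d,P)$ and of each $\Phi_k$, all of which are standing assumptions on $\modelspace$) are inherited by the restriction to $\Omega_\ell$; restrictions of measurable functions to (Borel) subsets remain measurable, so this inheritance is routine. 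Everything else is bookkeeping with nested suprema and the definition of convergence.
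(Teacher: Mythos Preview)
Your proof is correct and follows essentially the same approach as the paper: both use Corollary~5 of \cite{Pinelis2016} to reduce $r_{\sup}(d,\tilde{\Gamma}_\ell)$ to a finitely supported prior, invoke part~\ref{condition: subset of modelspace, grid approximate subset} of Condition~\ref{condition: subset of modelspace} to approximate it by priors in $\tilde{\Gamma}_{i|\ell}$, and use the inclusion $\tilde{\Gamma}_{i|\ell}\subseteq\tilde{\Gamma}_\ell$ for the other direction. Your version is slightly more expansive in spelling out the monotonicity and the empty-set edge case, but the argument is the same.
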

	
	\begin{proof}[Proof of Lemma~\ref{lemma: subset of modelspace, grid approximate subset sufficient}]
		Let $d \in \decisionspace$, $\ell$ and $\epsilon>0$ be fixed. By
		Corollary~5 of \cite{Pinelis2016},
		$r_{\sup}(d,\tilde{\Gamma}_\ell) \leq r(d,\pi) + \epsilon/2$ for some $\pi \in \tilde{\Gamma}_\ell$ with a finite support. Under Condition~\ref{condition: subset of modelspace, grid approximate subset}, there exists a sequence $\pi_i \in \tilde{\Gamma}_{i \mid \ell}$ such that, for all sufficiently large $i$, $r(d,\pi_i) \geq r(d,\pi) - \epsilon/2$. For such $i$, $r_{\sup}(d,\tilde{\Gamma}_\ell) \leq r(d,\pi_i) + \epsilon$. Since $r_{\sup}(d,\tilde{\Gamma}_\ell) \geq r_{\sup}(d,\tilde{\Gamma}_{i \mid \ell}) \geq r(d,\pi_i)$, we have that $r(d,\pi_i) \leq r_{\sup}(d,\tilde{\Gamma}_{i \mid \ell}) \leq r_{\sup}(d,\tilde{\Gamma}_\ell) \leq r(d,\pi_i) + \epsilon$ for all sufficiently large $i$, and thus we have proved Lemma~\ref{lemma: subset of modelspace, grid approximate subset sufficient}.
	\end{proof}
	
	\begin{proof}[Proof of Theorem~\ref{theorem: approximate Gamma-minimax on a grid}]
		Let $\epsilon>0$. There exists $d' \in \decisionspace$ such that
		$$r_{\sup}(d',\Gamma) \leq \inf_{d \in \decisionspace} r_{\sup}(d,\Gamma) + \epsilon.$$
		Moreover, there exists $\pi_\ell \in \Gamma_\ell$ such that
		$$r_{\sup}(d',\Gamma_\ell) \leq r(d',\pi_\ell) + \epsilon.$$
		Using the fact that $d^*_\ell$ is $\Gamma_\ell$-minimax and the definition of $r_{\sup}$, it holds that
		\begin{align*}
			r_{\sup}(d^*_\ell,\Gamma_\ell) &\leq r_{\sup}(d',\Gamma_\ell) \leq r(d',\pi_\ell) + \epsilon \\
			&\leq r_{\sup}(d',\Gamma) + \epsilon \leq \inf_{d \in \decisionspace} r_{\sup}(d,\Gamma) + 2\epsilon.
		\end{align*}
		Since this inequality holds for any $\epsilon>0$, we have that $r_{\sup}(d^*_\ell,\Gamma_\ell) \leq \inf_{d \in \decisionspace} r_{\sup}(d,\Gamma)$. An almost identical argument shows that the sequence $\{r_{\sup}(d^*_\ell,\Gamma_\ell)\}_{\ell=1}^\infty$ is nondecreasing. Therefore, this sequence converges to some limit $\mathcal{R} \leq \inf_{d \in \decisionspace} r_{\sup}(d,\Gamma) \leq r_{\sup}(d^*,\Gamma)$.
		
		We next prove that $r_{\sup}(d^*,\Gamma) \leq \mathcal{R}$. Let $\epsilon>0$. Without loss of generality, we may assume that $\modelspace_\ell \subseteq \Omega_\ell$ for all $\ell=1,2,\ldots$ in Condition~\ref{condition: subset of modelspace}. (Otherwise, we may instead consider the sequence $\{\Omega_{\tilde{\ell}}\}_{\tilde{\ell}=1}^\infty$ where $\Omega_{\tilde{\ell}} = \bigcap_{\ell': \Omega_{\ell'} \supseteq \modelspace_\ell} \Omega_{\ell'}$. Note that Condition~\ref{condition: subset of modelspace} also holds for $\{\Omega_{\tilde{\ell}}\}_{\tilde{\ell}=1}^\infty$.) By Lemma~\ref{lemma: approximate r_sup on subset}, there exists $\ell_0$ such that $r_{\sup}(d^*,\tilde{\Gamma}_{\ell_0}) \geq r_{\sup}(d^*,\Gamma) - \epsilon/3$. By Condition~\ref{condition: subset of modelspace}, there exists $i_1$ such that $r_{\sup}(d^*,\tilde{\Gamma}_{i_1|\ell_0}) \geq r_{\sup}(d^*,\tilde{\Gamma}_{\ell_0}) - \epsilon/3$. Without loss of generality, suppose that $d^*_\ell \rightarrow d^*$ (otherwise, take a convergent subsequence to this limit point). This then implies that there exists $i_2 > i_1$ such that $\varrho(d^*_{i_2},d^*)$ is sufficiently small, such that, by Lemma~\ref{lemma: continuity of r_sup}, $r_{\sup}(d^*_{i_2},\tilde{\Gamma}_{i_1|\ell_0}) \geq r_{\sup}(d^*,\tilde{\Gamma}_{i_1|\ell_0}) - \epsilon/3$. Moreover, since $\tilde{\Gamma}_{i_1|\ell_0} \subseteq \tilde{\Gamma}_{i_1} \subseteq \tilde{\Gamma}_{i_2}$, it holds that $r_{\sup}(d^*_{i_2},\tilde{\Gamma}_{i_2}) \geq r_{\sup}(d^*_{i_2},\tilde{\Gamma}_{i_1|\ell_0})$. Therefore, $r_{\sup}(d^*_{i_2},\tilde{\Gamma}_{i_2}) \geq r_{\sup}(d^*,\Gamma) - \epsilon$. Since the sequence $\{r_{\sup}(d^*_\ell,\Gamma_\ell)\}_{\ell=1}^\infty$ is nondecreasing, it holds that $r_{\sup}(d^*_{\ell},\Gamma_{\ell}) \geq r_{\sup}(d^*,\Gamma) - \epsilon$ for all $\ell \geq i_2$. Since $\epsilon$ is arbitrary, we have that $\liminf_{\ell \rightarrow \infty} r_{\sup}(d^*_{\ell},\Gamma_{\ell}) \geq r_{\sup}(d^*,\Gamma)$, and hence $\mathcal{R} \geq r_{\sup}(d^*,\Gamma)$.
		
		Combining the results from the preceding two paragraphs, $\mathcal{R}=\inf_{d \in \decisionspace} r_{\sup}(d,\Gamma)=r_{\sup}(d^*,\Gamma)$. Consequently, $d^*$ is $\Gamma$-minimax. Moreover, as $\{r_{\sup}(d^*_{\ell},\Gamma_{\ell})\}_{\ell=1}^\infty$ increases to $\mathcal{R}$, this sequence also increases to $r_{\sup}(d^*,\Gamma)$. This concludes the proof.
	\end{proof}
	
	\begin{proof}[Proof of Corollary~\ref{corollary: convexity; approximate Gamma-minimax on a grid}]
		We first establish the strict convexity of $d \mapsto r(d,\pi)$ for any $\pi \in \Gamma$. We then establish the strict convexity of $d \mapsto r_{\sup}(d,\Gamma)$. We then establish that there is a unique minimizer of $d \mapsto r_{\sup}(d,\Gamma)$ and show that the desired result follows from Theorem~\ref{theorem: approximate Gamma-minimax on a grid}.
		
		Let $d_1,d_2 \in \decisionspace$ and $c \in (0,1)$ be arbitrary, then by the convexity of $\decisionspace$ and the strict convexity of $d \mapsto R(d,P)$ for each $P \in \modelspace$,
		\begin{align*}
			r(c d_1 + (1-c) d_2, \pi) &= \int R(c d_1 + (1-c) d_2, P) \pi(\intd P) \\
			&< \int \{c R(d_1, P) + (1-c) R(d_2, P)\} \pi(\intd P) \\
			&= c r(d_1,\pi) + (1-c) r(d_2,\pi).
		\end{align*}
		Therefore, $d \mapsto r(d,\pi)$ is strictly convex for any $\pi \in \Gamma$.
		
		Let $d_1,d_2 \in \decisionspace$ be distinct and $c \in (0,1)$ be arbitrary. Since $r_{\sup}(d,\Gamma)$ is attainable for any $d \in \decisionspace$, there exists $\tilde{\pi} \in \Gamma$ such that
		\begin{align*}
			r_{\sup}(c d_1+ (1-c) d_2, \Gamma) &= r(c d_1+ (1-c) d_2, \tilde{\pi}) \\
			&< c r(d_1,\tilde{\pi}) + (1-c) r(d_2, \tilde{\pi}) \\
			&\leq c r_{\sup}(d_1,\Gamma) + (1-c) r_{\sup}(d_2,\Gamma).
		\end{align*}
		Thus, $d \mapsto r_{\sup}(d,\Gamma)$ is strictly convex.
		
		As $d \mapsto r_{\sup}(d,\Gamma)$ is strictly convex and $\decisionspace$ is convex, this function achieves exactly one minimum on $\decisionspace$. By Theorem~\ref{theorem: approximate Gamma-minimax on a grid}, any limit point $d^*$ of $\{d^*_\ell\}_{\ell=1}^\infty$ is a minimizer of $d \mapsto r_{\sup}(d,\Gamma)$, and so the sequence has a limit point, which is also the unique $\Gamma$-minimax estimator.
	\end{proof}

	\subsection{Proof of Theorems~\ref{theorem: SGDmax convergence} \& \ref{theorem: GDmax convergence}}
	
	We prove Theorems~\ref{theorem: SGDmax convergence} and \ref{theorem: GDmax convergence} by checking that Assumptions~3.1 and 3.6 in \cite{Lin2020} are satisfied and using Theorem~E.3 and E.4 in \cite{Lin2020}, respectively. Since Assumption~3.1 is satisfied by our construction of $\hat{R}$, we focus on Assumption~3.6 for the rest of this section.
	
	Let $\modelspace_\ell = \{P_1, P_2, \ldots, P_\Lambda\} \subseteq \modelspace$. For any $\pi \in \Gamma_\ell$, let $\pi_\lambda$ denote the probability weight of $\pi$ on $P_\lambda$ ($\lambda = 1,\ldots,\Lambda$). For the rest of this section, we also use $\pi$ to denote the vector $(\pi_1,\ldots,\pi_\Lambda)$. We also use $\lesssim$ to denote less than equal to up to a universal positive constant that may depend on $\ell$. Then, straightforward calculations imply that $\nabla_\beta r(\beta,\pi) = \sum_{\lambda=1}^\Lambda \pi_\lambda \nabla_\beta R(\beta,P_\lambda)$ and $\nabla_\pi r(\beta,\pi) = (R(\beta,P_1),\ldots,R(\beta,P_\Lambda))^\top$.
	
	For each $\ell=1,2,\ldots$, for any $\beta^1,\beta^2 \in \mathcal{H}$ and $\pi^1,\pi^2 \in \Gamma_\ell$, by Conditions~\ref{condition: uniform Lipschitz on R} and \ref{condition: uniform Lipschitz on R'},
	\begin{align*}
		& \left\| \left. \nabla_\beta r(\beta,\pi) \right|_{\beta=\beta^1, \pi=\pi^1} - \left. \nabla_\beta r(\beta,\pi) \right|_{\beta=\beta^2, \pi=\pi^2} \right\| \\
		&= \left\| \sum_{\lambda=1}^\Lambda \left\{ \pi^1_\lambda \left. \nabla_\beta R(\beta,P_\lambda) \right|_{\beta=\beta_1} - \pi^2_\lambda \left. \nabla_\beta R(\beta,P_\lambda) \right|_{\beta=\beta_2} \right\} \right\| \\
		&\leq \sum_{\lambda=1}^\Lambda \pi^1_\lambda \left\| \left. \nabla_\beta R(\beta,P_\lambda) \right|_{\beta=\beta_1} - \left. \nabla_\beta R(\beta,P_\lambda) \right|_{\beta=\beta_2} \right\| + \left\| \sum_{\lambda=1}^\Lambda (\pi^1_\lambda - \pi^2_\lambda) \left. \nabla_\beta R(\beta,P_\lambda) \right|_{\beta=\beta_2} \right\| \\
		&\lesssim \| \beta^1 - \beta^2 \| + \| \pi^1 - \pi^2 \| \\
		&\lesssim \|(\beta^1,\pi^1) - (\beta^2,\pi^2)\|,
	\end{align*}
	and similarly for $\nabla_\pi r(\beta,\pi)$,
	\begin{align*}
		& \left\| \left. \nabla_\pi r(\beta,\pi) \right|_{\beta=\beta^1, \pi=\pi^1} - \left. \nabla_\pi r(\beta,\pi) \right|_{\beta=\beta^2, \pi=\pi^2} \right\| \\
		&= \left\| \left( R(\beta^1,P_1) - R(\beta^2,P_1), R(\beta^1,P_2) - R(\beta^2,P_2), \ldots, R(\beta^1,P_\Lambda) - R(\beta^2,P_\Lambda) \right)^\top \right\| \\
		&\lesssim \| \beta^1 - \beta^2 \| \leq \|(\beta^1,\pi^1) - (\beta^2,\pi^2)\|.
	\end{align*}
	This implies that for each $\ell$, the gradient of $r(\beta,\pi)$ ($\beta \in \mathcal{H}$, $\pi \in\Gamma_\ell$) is Lipschitz continuous.
	
	For each $\ell=1,2,\ldots$, for any $\beta^1,\beta^2 \in \mathcal{H}$ and $\pi \in \Gamma_\ell$, Condition~\ref{condition: uniform Lipschitz on R} implies that
	\begin{align*}
		\left| r(\beta^1,\pi) - r(\beta^2,\pi) \right| &= \left| \sum_{\lambda=1}^\Lambda \pi_\lambda \left[ R(\beta^1,P_\lambda) - R(\beta^2,P_\lambda) \right] \right| \\
		&\leq \sum_{\lambda=1}^\Lambda \pi_\lambda \left| R(\beta^1,P_\lambda) - R(\beta^2,P_\lambda) \right| \lesssim \| \beta^1 - \beta^2 \|.
	\end{align*}
	Therefore, $\beta \mapsto r(\beta,\pi)$ is Lipschitz continuous with a universal Lipschitz constant independent of $\pi \in \Gamma_\ell$.
	
	Finally, it is straightforward to check that (i) $\pi \mapsto r(\beta,\pi)$ is concave for any $\beta \in \mathcal{H}$, and (ii) $\Gamma_\ell$ is parameterized by a convex subset of a simplex in a Euclidean space, which is a convex and bounded set. These results show that Assumption~3.6 in \cite{Lin2020} is satisfied for Algorithm~\ref{algorithm: GDmax} and \ref{algorithm: SGDmax}.

	\subsection{Proof of Theorem~\ref{theorem: SGDmax variant convergence}}
	
	\begin{proof}[Proof of Theorem~\ref{theorem: SGDmax variant convergence}]
		Let $\pi_{(t),0}$ denote a maximizer of $\pi \mapsto r(\beta_{(t-1)},\pi)$. It holds that
		\begin{align*}
			0 &\leq r(\beta_{(t-1)},\pi_{(t),0}) - r(\beta_{(t-1)},\pi_{(t)}) \\
			&\leq \frac{1}{J'} \sum_{j=1}^{J'} \hat{r}(\beta_{(t-1)},\pi_{(t)},\xi_j) - \frac{1}{J'} \sum_{j=1}^{J'} \hat{r}(\beta_{(t-1)},\pi_{(t),0},\xi_j) \\
			&\quad+ r(\beta_{(t-1)},\pi_{(t),0}) - r(\beta_{(t-1)},\pi_{(t)}) \\
			&= \frac{1}{J'} \sum_{j=1}^{J'} \Bigg\{ \left[ \hat{r}(\beta_{(t-1)},\pi_{(t)},\xi_j) - \hat{r}(\beta_{(t-1)},\pi_{(t),0},\xi_j) \right] \\
			&\quad- \expect \left[ \hat{r}(\beta_{(t-1)},\pi_{(t)},\xi) - \hat{r}(\beta_{(t-1)},\pi_{(t),0},\xi) \right] \Bigg\} \\
			&\leq \sup_{\beta \in \real^D, \pi_1,\pi_2 \in \Gamma_\ell} \Bigg| \frac{1}{J'} \sum_{j=1}^{J'} \Bigg\{ \left[ \hat{r}(\beta,\pi_1,\xi_j) - \hat{r}(\beta,\pi_2,\xi_j) \right] \\
			&\quad- \expect \left[ \hat{r}(\beta,\pi_1,\xi) - \hat{r}(\beta,\pi_2,\xi) \right] \Bigg\} \Bigg|.
		\end{align*}
		Note that the right hand side does not depend on $t$. Therefore,
		\begin{align*}
			0 &\leq \sup_{t} \left\{ r(\beta_{(t-1)},\pi_{(t),0}) - \expect[r(\beta_{(t-1)},\pi_{(t)})] \right\} \\
			&\leq \expect^* \sup_{\beta \in \real^D, \pi_1,\pi_2 \in \Gamma_\ell} \Bigg| \frac{1}{J'} \sum_{j=1}^{J'} \Bigg\{ \left[ \hat{r}(\beta,\pi_1,\xi_j) - \hat{r}(\beta,\pi_2,\xi_j) \right] \\
			&\quad- \expect \left[ \hat{r}(\beta,\pi_1,\xi) - \hat{r}(\beta,\pi_2,\xi) \right] \Bigg\} \Bigg|,
		\end{align*}
		where $\expect^*$ stands for outer expectation. We may apply Corollary~9.27 in \cite{Kosorok2008} to $\mathcal{F} := \{\xi \mapsto \hat{r}(\beta,\pi,\xi): \beta \in \real^D, \pi \in \Gamma_\ell\}$ and show that $\mathcal{F} - \mathcal{F} := \{f_1-f_2: f_1,f_2 \in \mathcal{F}\} \supseteq \{\xi \mapsto \hat{r}(\beta,\pi_1,\xi) - \hat{r}(\beta,\pi_2,\xi): \beta \in \real^D, \pi_1,\pi_2 \in \Gamma_\ell\}$ is a $\Xi$-Glivenko-Cantelli class. Therefore,
		\begin{align*}
			& \Bigg \{ \sup_{\beta \in \real^D, \pi_1,\pi_2 \in \Gamma_\ell} \Bigg| \frac{1}{J'} \sum_{j=1}^{J'} \Bigg\{ \left[ \hat{r}(\beta,\pi_1,\xi_j) - \hat{r}(\beta,\pi_2,\xi_j) \right] \\
			&\quad- \expect \left[ \hat{r}(\beta,\pi_1,\xi) - \hat{r}(\beta,\pi_2,\xi) \right] \Bigg\} \Bigg| \Bigg\}^* \\
			&\leq \left\{ \sup_{f \in \mathcal{F}-\mathcal{F}} \left| \frac{1}{J'} \sum_{j=1}^{J'} \left\{ f(\xi_j) - \expect[f(\xi)] \right\} \right| \right\}^* \overset{a.s.}{\rightarrow} 0,
		\end{align*}
		as $J' \rightarrow \infty$. Here, $X^*$ stands for the minimal measurable majorant with respect to $\Xi$ of a (possibly non-measurable) mapping $X$ \citep{vandervaart1996}.
		
		By Problem~1 of Section~2.4 in \cite{vandervaart1996}, there exists a random variable $F$ such that $F \geq \sup_{f \in \mathcal{F}-\mathcal{F}} |f(\xi) - \expect[f(\xi')]|$ $\Xi$-almost surely and $\expect[F] < \infty$. Then,
		\begin{align*}
			\sup_{f \in \mathcal{F}-\mathcal{F}} \left| \frac{1}{J'} \sum_{j=1}^{J'} \left\{ f(\xi_j) - \expect[f(\xi_j)] \right\} \right| \leq F
		\end{align*}
		$\Xi$-almost surely. By dominated convergence theorem,
		\begin{align*}
			& \expect^* \sup_{\beta \in \real^D, \pi_1,\pi_2 \in \Gamma_\ell} \Bigg| \frac{1}{J'} \sum_{j=1}^{J'} \Bigg\{ \left[ \hat{r}(\beta,\pi_1,\xi_j) - \hat{r}(\beta,\pi_2,\xi_j) \right] \\
			&\quad- \expect \left[ \hat{r}(\beta,\pi_1,\xi_j) - \hat{r}(\beta,\pi_2,\xi_j) \right] \Bigg\} \Bigg| \rightarrow 0
		\end{align*}
		as $J' \rightarrow \infty$, and so does $\sup_{t} \left\{ r(\beta_{(t-1)},\pi_{(t),0}) - \expect[r(\beta_{(t-1)},\pi_{(t)})] \right\}$. Thus, for any $\zeta>0$, there exists a sufficiently large $J'$ such that $\expect[r(\beta_{(t-1)},\pi_{(t)})] \geq r(\beta_{(t-1)},\pi_{(t),0}) - \zeta$ for all $t$.
	\end{proof}

	\subsection{Proof of Theorem~\ref{theorem: fictitious play convergence}}
	
	Our proof of Theorem~\ref{theorem: fictitious play convergence} builds on that of \cite{Robinson1951}. Major modifications are needed to allow for more general definitions that can accommodate for potentially infinite spaces of pure strategies and a more careful control on a bound on $r( \overline{d}(\varpi_{(t-1)}),\pi^\dagger_{(t)}) - r( d^\dagger_{(t)},\pi_{(t-1)})$ towards the end of the proof.
	
	In this appendix, we slightly abuse the notation and use $\decisionspace$ to denote the compact set $\bar{\decisionspace}$ that contains all $d^\dagger_{(t)}$ ($t=1,2,\ldots$). We first introduce the notion of cumulative Bayes risk functions. Under Algorithm~\ref{algorithm: fictitious play}, we let $U_0: \decisionspace \rightarrow \real$ and $V_0: \Gamma_\ell \rightarrow \real$ be any two continuous functions such that
	\begin{equation}
		\label{equation: fictitious play U0 V0}
		\min_{d \in \decisionspace} U_0(d) = \max_{\pi \in \Gamma_\ell} V_0(\pi)
	\end{equation}
	and recursively define
	\begin{equation}
		\label{equation: fictitious play def U V}
		U_{t+1}(d) := U_{t}(d) + r(d,\pi^\dagger_{(t)}), \quad V_{t+1}(\pi) := V_{t}(\pi) + r(d^\dagger_{(t)},\pi)
	\end{equation}
	for $d \in \decisionspace$ and $\pi \in \Gamma_\ell$. Here, we let $\pi^\dagger_{(t)} \in \argmax_{\pi \in \Gamma_\ell} V_{t-1}(\pi)$ and $d^\dagger_{(t)} \in \argmin_{d \in \decisionspace} U_{t-1}(d)$. Note that the choices of $\pi_{(t)}^\dagger$ and $d_{(t)}$ in Algorithm~\ref{algorithm: fictitious play} corresponds to setting $U_0 \equiv 0$ and $V_0 \equiv 0$, in which case $U_{t}(d) = t \cdot r(d,\pi_{(t)})$ and $V_{t}(\pi) = t \cdot r(\overline{d}(\varpi_{(t)}),\pi)$. In general,
	\begin{equation}
		\label{equation: fictitious play Ut Vt as t r}
		U_{t}(d) = U_0(d) + t \cdot r(d,\pi_{(t)}), \quad V_{t}(\pi) = V_0(\pi) + t \cdot r(\overline{d}(\varpi_{(t)}),\pi)
	\end{equation}
	for some $\pi_{(t)} \in \Gamma$ and $\overline{d}(\varpi_{(t)}) \in \overline{\decisionspace}$. Later in this section, we will also make use of $U_t$ and $V_t$ with other initializations $U_0$ and $V_0$.
	
	To make notations concise, we define $\min_{d \in \decisionspace'} U_{t} := \min_{d \in \decisionspace'} U_{t}(d)$ for any $\decisionspace' \subseteq \decisionspace$, and define $\max_{\decisionspace'} U_{t}$, $\min_{\Pi'}{V}_{t}$ and $\max_{\Pi'} V_{t}$ ($\Pi' \subseteq \Gamma_\ell$) similarly. We also drop the subscript denoting the set when the set is the whole space we consider, that is, $\decisionspace$ or $\Gamma_\ell$. Note that for any $t_1,t_2=1,2,\ldots$, under the setting of Algorithm~\ref{algorithm: fictitious play} and \eqref{equation: fictitious play minimax theorem}, it holds that
	\begin{align*}
		&\quad \min U_{t_1}/t_1 = \min_{\overline{d} \in \overline{\decisionspace}} r(\overline{d},\pi_{(t_1)}) \\
		&\leq \max_{\pi \in \Gamma_{\ell}} \min_{\overline{d} \in \overline{\decisionspace}} r(\overline{d},\pi) = r(\overline{d}(\varpi_\ell^*),\pi^*_\ell) = \min_{\overline{d} \in \overline{\decisionspace}} \max_{\pi \in \Gamma_{\ell}} r(\overline{d},\pi) \\
		&\leq \max_{\pi \in \Gamma_{\ell}} r(\overline{d}(\varpi_{(t_2)}),\pi) = \max V_{t_2}/t_2
	\end{align*}
	Therefore, to prove the first result in Theorem~\ref{theorem: fictitious play convergence}, it suffices to show that $\limsup_{t \rightarrow \infty} (\max V_{t} - \min U_{t})/t \leq 0$.
	
	We next introduce additional definitions related to iterations. We say that $\pi \in \Gamma_\ell$ is eligible in the interval $[t_1,t_2]$ if there exists $t \in [t_1,t_2]$ such that $V_{t}(\pi) = \max V_{t}$; we say that $d \in \decisionspace$ is eligible in the interval $[t_1,t_2]$ if there exists $t \in [t_1,t_2]$ such that $U_{t}(d) = \min U_{t}$. We also define eligibility for sets. We say that $\Pi' \subseteq \Gamma_\ell$ is eligible in the interval $[t_1,t_2]$ if there exists $\pi \in \Pi'$ that is eligible in that interval; we say that $\decisionspace' \subseteq \decisionspace$ is eligible in the interval $[t_1,t_2]$ if there exists $d \in \decisionspace'$ that is eligible in the interval $[t_1,t_2]$. In addition, for any $\decisionspace' \subseteq \decisionspace$, we define maximum variation $\MV_{t}(\decisionspace') := \sup_{d \in \decisionspace'} U_{t}(d) - \inf_{d \in \decisionspace'} U_{t}(d)$ and $\MV_{t}(\Pi')$ similarly for any $\Pi' \subset \Gamma_\ell$. By Condition~\ref{condition: conditions on risk}, there exists $B \in (0,\infty)$ such that $R \in [-B,B]$. Note that by Condition~\ref{condition: limit point} and Lemma~\ref{lemma: continuity of r_sup}, given an arbitrary desired approximation accuracy $\epsilon>0$, $\decisionspace$ can be covered by finitely many compact subsets with the maximum variation of each subset bounded by $\epsilon t$ for all $t$; by Condition~\ref{condition: conditions on risk}, since $\Gamma_\ell$ is parameterized by a compact subset of a simplex in a Euclidean space, $\Gamma_\ell$ can also be covered by finitely many compact subsets with the maximum variation of each subset bounded by $\epsilon t$ for all $t$. These covers can be viewed as discrete finite approximations to $\decisionspace$ and $\Gamma_\ell$, respectively.
	
	All of the above definitions are associated with the space of estimators $\decisionspace$ and the set of priors $\Gamma_\ell$. We call $\{(U_t,V_t)\}_t$ a pair of cumulative Bayes risk functions constructed from the pair $(\decisionspace,\Gamma_\ell)$ of the space of estimators and the set of priors, and will consider pairs of cumulative Bayes risk functions constructed from other pairs $(\decisionspace',\Pi')$ of the space of estimators and the set of priors in the subsequent proof. We can define the above quantities similarly for such cases.
	
	The following lemma gives an upper bound on the maximum variation of $U_{s+t}$ and $V_{s+t}$ over the corresponding entire space from which they are constructed after $t$ iterations from $s$ when essentially all parts of these spaces are eligible in $[s,s+t]$.
	
	\begin{lemma} \label{lemma: fictitious play lemma 1}
		Suppose that $\{(U_t,V_t)\}_t$ is a pair of cumulative Bayes risk functions constructed from $(\decisionspace',\Pi')$. Suppose that $\decisionspace' = \bigcup_{i=1}^I \decisionspace_i$ and $\Pi' = \bigcup_{j=1}^J \Pi_j$ where
		$$\sup_{i,t} \MV_{t}(\decisionspace_i)/t \leq A, \quad \sup_{j,t} \MV_{t}(\Pi_j)/t \leq A$$
		for $A<\infty$. If all $\decisionspace_i$ and $\Pi_j$ are eligible in $[s,s+t]$, then $\max_{\decisionspace'} U_{s+t} - \min_{\decisionspace'} U_{s+t} \leq (2B+A) t$ and $\max_{\Pi'} V_{s+t} - \min_{\Pi'} V_{s+t} \leq (2B+A) t$.
	\end{lemma}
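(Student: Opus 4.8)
The plan is to adapt the proof of Lemma~1 in \cite{Robinson1951}, which establishes the analogous statement when each player has only finitely many pure strategies, to the present setting in which the finitely many ``pure strategies'' are the blocks $\decisionspace_1,\dots,\decisionspace_I$ and $\Pi_1,\dots,\Pi_J$ and the risks lie in $[-B,B]$ rather than in $[0,1]$. The two displayed inequalities follow from symmetric arguments with the roles of minimization and maximization interchanged, so I would write out only the bound on $\max_{\Pi'}V_{s+t}-\min_{\Pi'}V_{s+t}$ and indicate at the end how the bound on $U$ is obtained by swapping $\max$ and $\min$. Before the main argument I would record two elementary consequences of $r\in[-B,B]$ (which follows from $R\in[-B,B]$): first, for every fixed $\pi$ the recursion \eqref{equation: fictitious play def U V} gives $|V_{b}(\pi)-V_{a}(\pi)|\le B(b-a)$ whenever $a\le b$; second, and for the same reason, the quantity $\max_{\Pi'}V_{t}$ changes by at most $B$ between consecutive steps.

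Next I would pick $\pi^{+}\in\Pi'$ attaining $\max_{\Pi'}V_{s+t}$ and $\pi^{-}\in\Pi'$ attaining $\min_{\Pi'}V_{s+t}$, and a block $\Pi_{j_{0}}$ with $\pi^{-}\in\Pi_{j_{0}}$. Since \emph{all} blocks --- in particular $\Pi_{j_{0}}$ --- are eligible in $[s,s+t]$, there are $\tau\in[s,s+t]$ and $\hat{\pi}\in\Pi_{j_{0}}$ with $V_{\tau}(\hat{\pi})=\max_{\Pi'}V_{\tau}$; this is the only use of the ``all blocks eligible'' hypothesis, namely to ensure that whichever block happens to contain the current minimizer $\pi^{-}$ is eligible. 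Chaining the two elementary bounds around the eligibility time $\tau$ gives
\begin{align*}
	\max_{\Pi'}V_{s+t}=V_{s+t}(\pi^{+})&\le V_{\tau}(\pi^{+})+B(s+t-\tau)\le\max_{\Pi'}V_{\tau}+B(s+t-\tau)\\
	&=V_{\tau}(\hat{\pi})+B(s+t-\tau)\le V_{s+t}(\hat{\pi})+2B(s+t-\tau)\le V_{s+t}(\hat{\pi})+2Bt,
\end{align*}
the final step using $\tau\ge s$. It then remains only to bound $V_{s+t}(\hat{\pi})-V_{s+t}(\pi^{-})$, a difference of cumulative Bayes risk functions evaluated at two priors lying in the \emph{same} block $\Pi_{j_{0}}$.

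Handling this last term is where the argument must go beyond the finite-strategy case of \cite{Robinson1951}, and I expect it to be the main obstacle. The crude estimate $V_{s+t}(\hat{\pi})-V_{s+t}(\pi^{-})\le\MV_{s+t}(\Pi_{j_{0}})\le A(s+t)$ is too weak; instead the within-block variation should be charged only to the portion of the trajectory that actually matters. The ingredients I would use are: (i) at the eligibility time $\tau$ every $\pi\in\Pi_{j_{0}}$ satisfies $V_{\tau}(\pi)\ge\max_{\Pi'}V_{\tau}-\MV_{\tau}(\Pi_{j_{0}})$, so the whole block is close to the running maximum at time $\tau$; and (ii) the increments $r(d^{\dagger}_{(k)},\cdot)$ over $\Pi_{j_{0}}$, for $k$ between $\tau$ and $s+t$, are exactly what the hypothesis $\sup_{j,t}\MV_{t}(\Pi_{j})/t\le A$ is there to control. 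Combining (i) and (ii) carefully --- turning a maximum-variation estimate that is only valid at the eligibility time into one proportional to the window length $t$, uniformly over which block contains $\pi^{-}$ --- should yield $V_{s+t}(\hat{\pi})-V_{s+t}(\pi^{-})\le A t$, and substituting this into the display above gives $\max_{\Pi'}V_{s+t}-\min_{\Pi'}V_{s+t}\le(2B+A)t$. The bound $\max_{\decisionspace'}U_{s+t}-\min_{\decisionspace'}U_{s+t}\le(2B+A)t$ then follows by repeating the argument with the roles of $\min$ and $\max$ (and of eligibility as ``being a running minimizer'' versus ``being a running maximizer'') exchanged. This careful bookkeeping of the maximum-variation terms is precisely the kind of more delicate control over the per-step increase of the relevant bound that the main text flags as necessary to make fictitious play converge when the strategy spaces are infinite.
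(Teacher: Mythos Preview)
Your overall strategy matches the paper's: locate the block containing the ``wrong'' extremizer (for $V$, the current minimizer $\pi^{-}$), invoke eligibility of that block at some time $\tau\in[s,s+t]$, and combine the $B$-Lipschitz step bound with a within-block maximum-variation estimate. The difficulty you flag, however, is created by an unnecessary step in your own chain. By pushing $V_{\tau}(\hat{\pi})$ forward to $V_{s+t}(\hat{\pi})$, you are left needing to bound a within-block difference at the \emph{final} time $s+t$, where the only available control is $\MV_{s+t}(\Pi_{j_0})\le A(s+t)$. Your ingredients (i) and (ii) do not convert this into $At$: the hypothesis $\sup_{j,t}\MV_t(\Pi_j)/t\le A$ bounds cumulative variation from time $0$, not over the window $[\tau,s+t]$, and gives no per-step control on the increments $r(d^{\dagger}_{(k)},\hat\pi)-r(d^{\dagger}_{(k)},\pi^{-})$.

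The paper avoids this detour by applying the maximum-variation bound at the eligibility time rather than at $s+t$. Concretely, stop your chain one step earlier, at $\max_{\Pi'}V_{s+t}\le V_{\tau}(\hat{\pi})+B(s+t-\tau)$, and bound the minimum separately via
\[
\min_{\Pi'}V_{s+t}=V_{s+t}(\pi^{-})\ \ge\ V_{\tau}(\pi^{-})-B(s+t-\tau)\ \ge\ V_{\tau}(\hat{\pi})-\MV_{\tau}(\Pi_{j_0})-B(s+t-\tau),
\]
where the last inequality is precisely your ingredient (i) applied to $\pi^{-}\in\Pi_{j_0}$. Subtracting gives $\max_{\Pi'}V_{s+t}-\min_{\Pi'}V_{s+t}\le\MV_{\tau}(\Pi_{j_0})+2B(s+t-\tau)$, after which the hypothesis bounds $\MV_{\tau}(\Pi_{j_0})$ directly and $s+t-\tau\le t$ handles the second term. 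This is exactly the paper's argument (written there for $U$, with the block containing the maximizer $\tilde d$ and the eligibility time $\tilde t$): it uses the within-block variation at the moment the block touches the running extremum, not afterward. Your ingredient (i) was already the entire key; the only change needed is to compare $\pi^{-}$ to $\hat{\pi}$ at time $\tau$ instead of at time $s+t$.
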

	\begin{proof}[Proof of Lemma~\ref{lemma: fictitious play lemma 1}]
		Without loss of generality, assume that $\tilde{d} \in (\argmax_{d \in \decisionspace'} U_{s+t}) \bigcap \decisionspace_1$. Since $\decisionspace_1$ is eligible in $[s,t]$, there exists $\tilde{t} \in [s,s+t]$ such that $(\argmin_{d \in \decisionspace'} U_{\tilde{t}}) \bigcap \decisionspace_1 \neq \emptyset$. By the recursive definition of the sequence $\{U_{t}\}_t$ in \eqref{equation: fictitious play def U V}, the bound on the risk, and the assumption that $\sup_{i,t} \MV_t(\decisionspace_i)/t \leq A$, we have that $\max_{\decisionspace'} U_{s+t} = U_{s+t}(\tilde{d}) \leq U_{\tilde{t}}(\tilde{d}) + B(s+t-\tilde{t}) \leq \min_{\decisionspace'} U_{\tilde{t}} + At + B(s+t-\tilde{t}) \leq \min_{\decisionspace'} U_{\tilde{t}} + (A+B) t$. Letting $\tilde{d}' \in \argmin_{d \in \decisionspace'} U_{s+t}$, by the bound on the risk, we can derive that $\min_{\decisionspace'} U_{s+t} = U_{s+t}(\tilde{d}') \geq U_{\tilde{t}}(\tilde{d}') - B(s+t-\tilde{t}) \geq \min_{\decisionspace'} U_{\tilde{t}} - Bt$. Combine these two inequalities and we have that $\max_{\decisionspace'} U_{s+t} - \min_{\decisionspace'} U_{s+t} \leq (2B+A) t$. An identical argument applied to the sequence $\{V_t\}_t$ shows that $\max_{\Pi'} V_{s+t} - \min_{\Pi'} V_{s+t} \leq (2B+A) t$.
	\end{proof}
	
	The next lemma builds on the previous lemma and provides an upper bound on $\max V_{s+t} - \min U_{s+t}$ under the same conditions.
	
	\begin{lemma} \label{lemma: fictitious play lemma 2}
		Under the same setup and conditions as in Lemma~\ref{lemma: fictitious play lemma 1}, $\max_{\Pi'} V_{s+t} - \min_{\decisionspace'} U_{s+t} \leq (4B+2A)t$.
	\end{lemma}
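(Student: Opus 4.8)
The plan is to bound the target quantity by two applications of Lemma~\ref{lemma: fictitious play lemma 1} together with one elementary inequality relating the two cumulative Bayes risk functions that holds at \emph{every} iteration and needs no eligibility hypothesis. First I would invoke Lemma~\ref{lemma: fictitious play lemma 1}, which under the present hypotheses supplies both $\max_{\decisionspace'} U_{s+t} - \min_{\decisionspace'} U_{s+t} \le (2B+A)t$ and $\max_{\Pi'} V_{s+t} - \min_{\Pi'} V_{s+t} \le (2B+A)t$. I would then decompose
\[
\max_{\Pi'} V_{s+t} - \min_{\decisionspace'} U_{s+t}
= \bigl(\max_{\Pi'} V_{s+t} - \min_{\Pi'} V_{s+t}\bigr)
+ \bigl(\min_{\Pi'} V_{s+t} - \max_{\decisionspace'} U_{s+t}\bigr)
+ \bigl(\max_{\decisionspace'} U_{s+t} - \min_{\decisionspace'} U_{s+t}\bigr),
\]
so that the outer two summands are each at most $(2B+A)t$ and the whole proof reduces to showing that the middle summand is nonpositive, i.e.\ $\min_{\Pi'} V_{s+t} \le \max_{\decisionspace'} U_{s+t}$.

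The heart of the argument is thus the claim that $\min_{\Pi'} V_\tau \le \max_{\decisionspace'} U_\tau$ for every $\tau\ge 0$. I would prove it by exploiting the representation \eqref{equation: fictitious play Ut Vt as t r}: writing $\pi_{(\tau)} := \tfrac1\tau\sum_{i=1}^\tau \pi^\dagger_{(i)}$ and $\varpi_{(\tau)} := \tfrac1\tau\sum_{i=1}^\tau \delta(d^\dagger_{(i)})$, one has $U_\tau(\cdot) = U_0(\cdot) + \tau\, r(\cdot,\pi_{(\tau)})$ and $V_\tau(\cdot) = V_0(\cdot) + \tau\, r(\overline d(\varpi_{(\tau)}),\cdot)$, where every $\pi^\dagger_{(i)}$ lies in $\Pi'$ and every $d^\dagger_{(i)}$ lies in $\decisionspace'$. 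A short ``ping-pong'' of averaging inequalities then does the job: since each $\pi^\dagger_{(i)}\in\Pi'$,
\[
\min_{\Pi'} V_\tau \;\le\; \frac1\tau\sum_{i=1}^\tau V_\tau(\pi^\dagger_{(i)}) \;=\; \frac1\tau\sum_{i=1}^\tau V_0(\pi^\dagger_{(i)}) \,+\, \tau\, r\bigl(\overline d(\varpi_{(\tau)}),\pi_{(\tau)}\bigr),
\]
where linearity of $r$ in the prior collapses $\sum_i r(\overline d(\varpi_{(\tau)}),\pi^\dagger_{(i)})$ into $\tau\, r(\overline d(\varpi_{(\tau)}),\pi_{(\tau)})$. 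Next, $\tfrac1\tau\sum_i V_0(\pi^\dagger_{(i)}) \le \max_{\Pi'} V_0 = \min_{\decisionspace'} U_0$, and $r(\overline d(\varpi_{(\tau)}),\pi_{(\tau)}) = \int r(d,\pi_{(\tau)})\,\varpi_{(\tau)}(\intd d)$ is an average over $\operatorname{supp}\varpi_{(\tau)}\subseteq\decisionspace'$, hence at most $\max_{d\in\decisionspace'} r(d,\pi_{(\tau)})$. Picking $d^\star\in\decisionspace'$ attaining that maximum gives $\min_{\decisionspace'} U_0 + \tau\max_{\decisionspace'} r(\cdot,\pi_{(\tau)}) \le U_0(d^\star) + \tau\, r(d^\star,\pi_{(\tau)}) = U_\tau(d^\star) \le \max_{\decisionspace'} U_\tau$. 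Chaining these yields $\min_{\Pi'} V_\tau \le \max_{\decisionspace'} U_\tau$, and taking $\tau = s+t$ in the decomposition finishes the proof.

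The step I expect to demand the most care is recognizing that the middle cross-term is nonpositive and then getting the directions of the averaging inequalities exactly right: the symmetric statement $\min_{\decisionspace'} U_\tau \le \max_{\Pi'} V_\tau$ is the harmless one, whereas the inequality actually needed here is the reverse-looking $\min_{\Pi'} V_\tau \le \max_{\decisionspace'} U_\tau$, and it is essential throughout to track which of $\pi^\dagger_{(i)}, d^\dagger_{(i)}$ belongs to which set and that $\pi_{(\tau)},\varpi_{(\tau)}$ are precisely the empirical mixtures appearing in \eqref{equation: fictitious play Ut Vt as t r}; no eligibility is used for this fact, which enters only through Lemma~\ref{lemma: fictitious play lemma 1}. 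A secondary point, inherited from Lemma~\ref{lemma: fictitious play lemma 1}, is that the maximum-variation contributions accrued over the window $[s,s+t]$ must be charged to $t$ and not to $s+t$; this is what makes the bound $(4B+2A)t$ uniform in $s$, which the subsequent induction over finite covers will rely on.
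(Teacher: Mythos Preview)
Your proposal is correct and follows essentially the same route as the paper: both reduce the claim to the two $(2B+A)t$ bounds from Lemma~\ref{lemma: fictitious play lemma 1} plus the cross-inequality $\min_{\Pi'} V_\tau \le \max_{\decisionspace'} U_\tau$, and both establish the latter via the representation \eqref{equation: fictitious play Ut Vt as t r} together with the initialization condition \eqref{equation: fictitious play U0 V0}. The only cosmetic difference is that the paper evaluates $V_\tau$ and $U_\tau$ directly at the mixture prior $\pi_{(\tau)}$ and stochastic estimator $\overline d(\varpi_{(\tau)})$ (implicitly extending $U_0,V_0$ by averaging), whereas you spell out the averaging over the $\pi^\dagger_{(i)}$ and $d^\dagger_{(i)}$ explicitly and land on a deterministic maximizer $d^\star$; your version is arguably cleaner since it never needs $\pi_{(\tau)}\in\Pi'$.
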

	\begin{proof}[Proof of Lemma~\ref{lemma: fictitious play lemma 2}]
		Summing the two inequalities in Lemma~\ref{lemma: fictitious play lemma 1} and rearranging the terms, we have that $\max_{\Pi'} V_{s+t} - \min_{\decisionspace'} U_{s+t} \leq (4B+2A)t + \min_{\Pi'} V_{s+t} - \max_{\decisionspace'} U_{s+t}$. It therefore suffices to show that $\min_{\Pi'} V_{s+t} \leq \max_{\decisionspace'} U_{s+t}$.
		
		Let $\tau:=s+t$. There exists $\pi' \in \Pi'$ and a stochastic strategy $\overline{d}' \in \decisionspace'$ such that $U_\tau(d) = U_0(d) + \tau \cdot r(d,\pi')$ and $V_\tau(\pi) = V_0(\pi) + \tau \cdot r(\overline{d}',\pi)$ for all $d \in \decisionspace'$ and all $\pi \in \Pi'$. Therefore, for this choice of $\pi'$ and $\overline{d}'$, using \eqref{equation: fictitious play U0 V0}, $\min_{\Pi'} V_{\tau} \leq V_{\tau}(\pi') = V_0(\pi') + \tau \cdot r(\overline{d}',\pi') \leq \max_{\Pi'} V_0 + \tau \cdot r(\overline{d}',\pi') = \min_{\decisionspace'} U_0 + \tau \cdot r(\overline{d}',\pi') \leq U_0(\overline{d}') + \tau \cdot r(\overline{d}',\pi') = U_{\tau}(\overline{d}') \leq \max_{\decisionspace'} U_{\tau}$.
	\end{proof}
	
	\begin{proof}[Proof of Theorem~\ref{theorem: fictitious play convergence}]
		It suffices to show that $\limsup_{t \rightarrow \infty} (\max V_{t} - \min U_{t})/t \leq 0$ by letting $U_0 \equiv 0$ and $V_0 \equiv 0$, which corresponds to Algorithm~\ref{algorithm: fictitious play}. Let $\epsilon>0$. Note that $r$ is Lipschitz continuous by Lemma~\ref{lemma: continuity of r_sup} and the fact that $r(d,\pi)$ is an average of bounded risks with weights $\pi$. Furthermore, $\decisionspace$ and $\Gamma_\ell$ are both compact. In addition, $U_0$ and $V_0$ are both continuous. Therefore, there exist covers $\decisionspace = \bigcup_{i=1}^I \decisionspace_i$ and $\Gamma_\ell = \bigcup_{j=1}^J \Pi_j$ such that (i) $\decisionspace_i$ and $\Pi_j$ are all compact, and (ii) $\sup_{i,t} \MV_t(\decisionspace_i)/t \leq \epsilon$, $\sup_{j,t} \MV_t(\Pi_j)/t \leq \epsilon$. (Note that $I$ and $J$ may depend on $\epsilon$.) For index sets $\mathcal{I} \subseteq \{1,2,\ldots,I\}$ and $\mathcal{J} \subseteq \{1,2,\ldots,J\}$, define $\decisionspace_{\mathcal{I}} := \bigcup_{i \in \mathcal{I}} \decisionspace_i$ and $\Pi_{\mathcal{J}} := \bigcup_{j \in \mathcal{J}} \Pi_j$. We show that $\max V_{t} - \min U_{t} \leq C \epsilon t$ for an absolute constant $C$ and all sufficiently large $t$ via induction on the sizes of $\mathcal{I}$ and $\mathcal{J}$.
		
		Let $\{(U_t,V_t)\}_t$ be a pair of cumulative Bayes risk functions constructed from $(\decisionspace_\mathcal{I},\Pi_\mathcal{J})$ where $|\mathcal{I}|=|\mathcal{J}|=1$. By \eqref{equation: fictitious play Ut Vt as t r} and the fact that $\MV_t(\decisionspace_\mathcal{I}) \leq \epsilon t$ and $\MV_t(\Pi_\mathcal{J}) \leq \epsilon t$, we have that
		\begin{align*}
			\min_{\decisionspace_\mathcal{I}} U_t &= \min_{d \in \decisionspace_\mathcal{I}} [U_0(d) + t \cdot r(d,\pi_{(t)})] \geq \expect_{d \sim \varpi_{(t)}}[U_0(d)] + t \cdot r(\overline{d}(\varpi_{(t)}),\pi_{(t)}) - \epsilon t \\
			&\geq \min_{d \in \decisionspace_\mathcal{I}} U_0(d) + t \cdot r(\overline{d}(\varpi_{(t)}),\pi_{(t)}) - \epsilon t \\
			&= \max_{\pi \in \Pi_{\mathcal{J}}} V_0(\pi) + t \cdot r(\overline{d}(\varpi_{(t)}),\pi_{(t)}) - \epsilon t \\
			&\geq V_0(\pi_{(t)}) + t \cdot r(\overline{d}(\varpi_{(t)}),\pi_{(t)}) - \epsilon t \\
			&\geq \max_{\pi \in \Pi_{\mathcal{J}}} [V_0(\pi) + t \cdot r(\overline{d}(\varpi_{(t)}),\pi)] - 2 \epsilon t = \max_{\Pi_{\mathcal{J}}} V_t - 2 \epsilon t.
		\end{align*}
		Therefore, $\max_{\Pi_{\mathcal{J}}} V_t - \min_{\decisionspace_\mathcal{I}} U_t \leq 2 \epsilon t$.
		
		Let $\epsilon'>0$ be arbitrary. Suppose that there exists $t_0$ such that, for any $\mathcal{I}' \subseteq \mathcal{I}$ and $\mathcal{J}' \subseteq \mathcal{J}$ such that $\mathcal{I}' \neq \mathcal{I}$ or $\mathcal{J}' \neq \mathcal{J}$, for any pair of cumulative Bayes risk functions $\{(U_t,V_t)\}_t$ constructed from $(\decisionspace_{\mathcal{I}'},\Pi_{\mathcal{J}'})$, it holds that $\max_{\Pi_{\mathcal{J}'}} V_t - \min_{\decisionspace_{\mathcal{I}'}} U_t \leq \epsilon' t$ for all $t \geq t_0$. We next obtain a slightly greater bound on $\max_{\Pi_{\mathcal{J}}} V_t - \min_{\decisionspace_{\mathcal{I}}} U_t$ for all sufficiently large $t$.
		
		We first prove that if, for a given pair of cumulative Bayes risk functions $\{(U_t,V_t)\}_t$ constructed from $(\decisionspace_{\mathcal{I}},\Pi_{\mathcal{J}})$, there exists $i' \in \mathcal{I}$ or $j' \in \mathcal{J}$ such that $\decisionspace_{i'}$ or $\Pi_{j'}$ is not eligible in an interval $[s,s+t_0]$, then
		\begin{equation}
			\label{equation: fictitious play proof induction lemma}
			\max_{\Pi_{\mathcal{J}}} V_{s+t_0} - \min_{\decisionspace_\mathcal{I}} U_{s+t_0} \leq \max_{\Pi_{\mathcal{J}}} V_{s} - \min_{\decisionspace_\mathcal{I}} U_{s} + \epsilon' t_0.
		\end{equation}
		
		Suppose that $\decisionspace_{i'}$ is not eligible in $[s,s+t_0]$, then define $U_t' := U_{s+t}$ and $V_t' := V_{s+t} - \max_{\Pi_{\mathcal{J}}} V_s + \min_{\decisionspace_\mathcal{I}} U_s$ for all $t \geq 0$. It is straightforward to check that $\{(U_t',V_t')\}_{t=0}^{t_0}$ satisfies the recursive definition of a pair of cumulative Bayes risk functions constructed from $(\decisionspace_{\mathcal{I} \setminus \{i'\}},\Pi_{\mathcal{J}})$.  By the induction hypothesis, $\max_{\Pi_{\mathcal{J}}} V_{t_0}' - \min_{\decisionspace_{\mathcal{I} \setminus \{i'\}}} U_{t_0}' \leq \epsilon' t_0$. Therefore, $\max_{\Pi_{\mathcal{J}}} V_{s+t_0} - \min_{\decisionspace_{\mathcal{I}}} U_{s+t_0} = \max_{\Pi_{\mathcal{J}}} V_{t_0}' - \min_{\decisionspace_{\mathcal{I} \setminus \{i'\}}} U_{t_0}' + \max_{\Pi_{\mathcal{J}}} V_s - \min_{\decisionspace_{\mathcal{I}}} U_s \leq \max_{\Pi_{\mathcal{J}}} V_s - \min_{\decisionspace_{\mathcal{I}}} U_s + \epsilon' t_0$. Similar argument can be applied if $\Pi_{j'}$ is not eligible in $[s,s+t_0]$.
		
		Now we obtain a bound on $\max_{\Pi_{\mathcal{J}}} V_t - \min_{\decisionspace_{\mathcal{I}}} U_t$. Let $t > t_0$, $\mathscr{Q} := \lfloor t/t_0 \rfloor \geq 1$ and $\mathscr{R} := t/t_0 - \mathscr{Q} \in [0,1)$. There are two cases.
		
		\noindent \textbf{Case 1}: There exists $s_0 \leq \mathscr{Q}$ such that $\decisionspace_i$ and $\Pi_j$ are eligible in $[(s_0 - 1 + \mathscr{R}) t_0, (s_0 + \mathscr{R}) t_0]$ for all $i \in \mathcal{I}$ and $j \in \mathcal{J}$. Take $s_0$ to be the largest such integer. Then, repeatedly apply \eqref{equation: fictitious play proof induction lemma} to intervals $[(s_0 + \mathscr{R}) t_0, (s_0 +1+ \mathscr{R}) t_0], [(s_0 + 1 + \mathscr{R}) t_0, (s_0 +2+ \mathscr{R}) t_0], \ldots, [(\mathscr{Q}-1+\mathscr{R})t_0,(\mathscr{Q}+\mathscr{R})t_0]=[t-t_0,t]$ and we derive that
		$$\max_{\Pi_{\mathcal{J}}} V_{t} - \min_{\decisionspace_{\mathcal{I}}} U_{t} \leq \max_{\Pi_{\mathcal{J}}} V_{(s_0+\mathscr{R})t_0} - \min_{\decisionspace_{\mathcal{I}}} U_{(s_0+\mathscr{R})t_0} + \epsilon' (\mathscr{Q}-s_0) t_0.$$
		By Lemma~\ref{lemma: fictitious play lemma 2}, $\max_{\Pi_{\mathcal{J}}} V_{(s_0+\mathscr{R})t_0} - \min_{\decisionspace_{\mathcal{I}}} U_{(s_0+\mathscr{R})t_0} \leq (4B+\epsilon)t_0$. Therefore,
		$$\max_{\Pi_{\mathcal{J}}} V_{t} - \min_{\decisionspace_{\mathcal{I}}} U_{t} \leq (4B+\epsilon)t_0 + \epsilon' (\mathscr{Q}-s_0) t_0 \leq (4B+\epsilon) t_0 + \epsilon' t.$$
		
		\noindent \textbf{Case 2}: There is no integer $s_0$ satisfying the condition in Case 1. Then, repeatedly apply \eqref{equation: fictitious play proof induction lemma} to intervals $[\mathscr{R} t_0, (1+\mathscr{R}) t_0], [(1+\mathscr{R}) t_0, (2+\mathscr{R}) t_0],\ldots, [(\mathscr{Q}-1+\mathscr{R}) t_0, (\mathscr{Q}+\mathscr{R}) t_0]=[t-t_0,t]$, we derive that
		$$\max_{\Pi_{\mathcal{J}}} V_{t} - \min_{\decisionspace_{\mathcal{I}}} U_{t} \leq \max_{\Pi_{\mathcal{J}}} V_{\mathscr{R}t_0} - \min_{\decisionspace_{\mathcal{I}}} U_{\mathscr{R}t_0} + \epsilon' \mathscr{Q} t_0.$$
		By the bound on the risk, $\max_{\Pi_{\mathcal{J}}} V_{\mathscr{R}t_0} \leq B \mathscr{R} t_0$ and $\min_{\decisionspace_\mathcal{I}} U_{\mathscr{R}t_0} \geq -B \mathscr{R} t_0$. Hence,
		$$\max_{\Pi_{\mathcal{J}}} V_{t} - \min_{\decisionspace_{\mathcal{I}}} U_{t} \leq 2B \mathscr{R} t_0 + \epsilon' \mathscr{Q} t_0 \leq (4B+\epsilon) t_0 + \epsilon' t.$$
		
		Thus, in both cases, it holds that $\max_{\Pi_{\mathcal{J}}} V_{t} - \min_{\decisionspace_{\mathcal{I}}} U_{t} \leq (4B+\epsilon) t_0 + \epsilon' t$ for $t > t_0$. Let $C>0$ be any constant (which may depend on $\epsilon$, the approximation error of the covers, that is, the bound on $\MV_t/t$). The following holds for any sufficiently large $t$,
		\begin{equation}
			\label{equation: fictitious play proof induction}
			\max_{\Pi_{\mathcal{J}}} V_{t} - \min_{\decisionspace_{\mathcal{I}}} U_{t} \leq (4B+\epsilon) t_0 + \epsilon' t \leq (1+C) \epsilon' t.
		\end{equation}
		In other words, we show that after increasing the size of either index set by 1, for all sufficiently large $t$, we obtain a bound on $\max_{\Pi_{\mathcal{J}}} V_{t} - \min_{\decisionspace_{\mathcal{I}}} U_{t}$ that grows by a multiplicative factor of $(1+C)$ relative to the original bound.
		
		It takes finitely many, say $N$, steps to induct from the initial case where the sizes of both index sets are one to the case of interest with index sets $\{1,\ldots,I\}$ and $\{1,\ldots,J\}$. (Note that $N$ may also depend on $\epsilon$ through its dependence on $I$ and $J$.) Take $C=1/N$ in \eqref{equation: fictitious play proof induction} and we derive that, for all sufficiently large $t$,
		$$\max V_{t} - \min U_{t} = \max_{\Pi_{\{1,\ldots,J\}}} V_{t} - \min_{\decisionspace_{\{1,\ldots,I\}}} U_{t} \leq (1+1/N)^N \cdot 2 \epsilon t \leq 2 e \epsilon t$$
		where $e$ is the base of natural logarithm. Since $\epsilon$ is arbitrary, we show that $\limsup_{t \rightarrow \infty} (\max V_{t} - \min U_{t})/t \leq 0$.
	\end{proof}

	\subsection{Derivation of $\Gamma$-minimax estimator of the mean in Section~\ref{section: simulation mean}} \label{section: theory of mean estimation}
	
	In this section, we show that, for the problem of estimating the mean in Section~\ref{section: simulation mean}, one $\Gamma$-minimax estimator lies in $\decisionspace_\mathrm{linear}$. This is formally presented below.
	
	\begin{proposition} \label{proposition: Gamma-minimax estimator of mean}
		Let $\modelspace$ consist of all probability distributions defined on the Borel $\sigma$-algebra on $[0,1]$. Let $X_1,\ldots,X_n \overset{\mathrm{iid}}{\sim} P_0 \in \modelspace$ and $\observation=(X_1,X_2,\ldots,X_n)$ be the observed data. Let $\Psi: P \mapsto \int x P(\intd x)$ denote the mean parameter and $\Gamma=\{\pi \in \Pi: \int \Psi(P) \pi(\intd P) = \mu\}$ be the set of priors that represent prior information. Let $\decisionspace$ denote the space of estimators that are square-integrable with respect to all $P \in \modelspace$. Consider the risk in Example~\ref{example: estimation}, $R: (d,P) \mapsto \expect_P[(d(\observation) - \Psi(P))^2]$. Define $\bar{X} = \sum_{i=1}^n X_i/n$ and $d_0: \observation \mapsto (\mu + \sqrt{n} \bar{X})/(1+\sqrt{n})$. Then $d_0 \in \decisionspace_\mathrm{linear}$ is $\Gamma$-minimax over $\decisionspace$.
	\end{proposition}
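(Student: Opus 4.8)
The plan is to exhibit a least favorable prior $\pi^*\in\Gamma$ for which $d_0$ is a Bayes estimator and which attains $\sup_{\pi\in\Gamma} r(d_0,\pi)$. Given such a $\pi^*$, the $\Gamma$-minimaxity of $d_0$ over $\decisionspace$ follows from the standard saddle-point argument: for every $d\in\decisionspace$,
$$\sup_{\pi\in\Gamma} r(d,\pi)\ \ge\ r(d,\pi^*)\ \ge\ r(d_0,\pi^*)\ =\ \sup_{\pi\in\Gamma} r(d_0,\pi),$$
where the middle inequality is exactly the statement that $d_0$ minimizes $d\mapsto r(d,\pi^*)$. Since $d_0\in\decisionspace_\mathrm{linear}$ by inspection, this gives the conclusion.

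First I would record the risk of an arbitrary linear estimator $d(\observation)=\beta_0+\beta_1\bar X$. A bias--variance decomposition gives $R(d,P)=\big(\beta_0-(1-\beta_1)\Psi(P)\big)^2+\beta_1^2\,\Var_P(X_1)/n$, so for fixed mean $\Psi(P)$ the risk is nondecreasing in $\Var_P(X_1)$; since $X_1\in[0,1]$ forces $\Var_P(X_1)\le \Psi(P)\{1-\Psi(P)\}$ with equality precisely for Bernoulli laws (cf.\ Lemma~\ref{lemma: max variance distribution}), we get $R(d_0,P)\le g(\Psi(P))$, where $g(\theta):=R\big(d_0,\mathrm{Bernoulli}(\theta)\big)$. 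The key computation is that, with $\beta_0=\mu/(1+\sqrt n)$ and $\beta_1=\sqrt n/(1+\sqrt n)$, the function $g$ collapses to an \emph{affine} function of $\theta$: $g(\theta)=\{\mu^2+(1-2\mu)\theta\}/(1+\sqrt n)^2$. Hence, for any $\pi\in\Gamma$,
$$r(d_0,\pi)=\int R(d_0,P)\,\pi(\intd P)\le\int g(\Psi(P))\,\pi(\intd P)=g\!\left(\int\Psi(P)\,\pi(\intd P)\right)=g(\mu)=\frac{\mu(1-\mu)}{(1+\sqrt n)^2},$$
with the first inequality tight whenever $\pi$ is supported on Bernoulli distributions.

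Next I would construct $\pi^*$ as the law of $\mathrm{Bernoulli}(\theta)$ with $\theta\sim\mathrm{Beta}(\mu\sqrt n,(1-\mu)\sqrt n)$, treating the non-degenerate case $\mu\in(0,1)$ (the cases $\mu\in\{0,1\}$ are trivial, as then $\Gamma$ consists only of a point mass at a degenerate $P_0$ and the claim is immediate). This $\pi^*$ lies in $\Gamma$ because the Beta prior has mean $\mu$, and since it is supported on Bernoulli laws the display above holds with equality, so $r(d_0,\pi^*)=g(\mu)=\sup_{\pi\in\Gamma}r(d_0,\pi)$, i.e.\ $\pi^*$ is least favorable for $d_0$. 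Under $\pi^*$ the observations are conditionally iid $\mathrm{Bernoulli}(\theta)$, so by Beta--Bernoulli conjugacy the posterior of $\theta$ is $\mathrm{Beta}(\mu\sqrt n+n\bar X,\ (1-\mu)\sqrt n+n-n\bar X)$ and its mean---the Bayes rule under squared error---is $(\mu\sqrt n+n\bar X)/(\sqrt n+n)=(\mu+\sqrt n\bar X)/(1+\sqrt n)=d_0(\observation)$; therefore $d_0\in\argmin_{d\in\decisionspace}r(d,\pi^*)$. Feeding this into the saddle-point inequality completes the proof.

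The only genuinely non-routine step is recognizing that this particular $d_0$ makes $g$ affine in $\theta$: that is what forces the worst-case Bayes risk over all priors with mean $\mu$ to coincide with $g(\mu)$ and simultaneously makes a Beta prior least favorable. The remaining ingredients---the elementary variance bound, the conjugate posterior computation, and the generic "Bayes estimator for a least favorable prior is $\Gamma$-minimax" argument---are standard. Minor points to dispatch along the way are the measurability of $\theta\mapsto\mathrm{Bernoulli}(\theta)$, so that $\pi^*$ is a bona fide element of $\Pi$, and finiteness of all risks; both are immediate because everything is supported on the compact interval $[0,1]$.
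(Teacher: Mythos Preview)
Your proposal is correct and follows essentially the same route as the paper: exhibit the Beta$(\mu\sqrt n,(1-\mu)\sqrt n)$ prior over Bernoulli laws as least favorable, verify $d_0$ is its posterior mean (hence Bayes), use the variance bound $\Var_P(X_1)\le\Psi(P)(1-\Psi(P))$ to show this prior attains $\sup_{\pi\in\Gamma}r(d_0,\pi)$ via the affine expression $\{\mu^2+(1-2\mu)\Psi(P)\}/(1+\sqrt n)^2$, and conclude by the standard saddle-point criterion. The paper packages the saddle-point step as a separate Theorem~\ref{theorem: Gamma-minimaxity criterion} and cites \cite{Lehmann1998} for the Bayes rule, but the content is identical.
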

	
	We first present a theorem on a criterion of $\Gamma$-minimaxity.
	\begin{theorem} \label{theorem: Gamma-minimaxity criterion}
		Suppose that $d_0 \in \decisionspace$ is a Bayes estimator for $\pi_0 \in \Gamma$ and $r(d_0,\pi_0)=r_{\sup}(d_0,\Gamma)$. Then $d_0$ is a $\Gamma$-minimax estimator in $\decisionspace$.
	\end{theorem}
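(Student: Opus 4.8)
The plan is to prove this by a short chain of inequalities, exactly the classical argument that a Bayes rule whose Bayes risk equals its worst-case Bayes risk over $\Gamma$ must be $\Gamma$-minimax. Fix an arbitrary competitor $d \in \decisionspace$; the goal is to show $r_{\sup}(d_0,\Gamma) \leq r_{\sup}(d,\Gamma)$, since this, holding for every $d$, places $d_0$ in $\argmin_{d \in \decisionspace} r_{\sup}(d,\Gamma)$ and hence makes it $\Gamma$-minimax by definition \eqref{equation: define Gamma minimax}.

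First I would invoke the hypothesis that $d_0$ is a Bayes estimator for $\pi_0$, which gives $r(d_0,\pi_0) \leq r(d,\pi_0)$. Next, since $\pi_0 \in \Gamma$, we trivially have $r(d,\pi_0) \leq \sup_{\pi \in \Gamma} r(d,\pi) = r_{\sup}(d,\Gamma)$. Finally, the second hypothesis, $r(d_0,\pi_0) = r_{\sup}(d_0,\Gamma)$, lets us substitute on the left. Combining,
\[
r_{\sup}(d_0,\Gamma) = r(d_0,\pi_0) \leq r(d,\pi_0) \leq r_{\sup}(d,\Gamma).
\]
As $d \in \decisionspace$ was arbitrary, $r_{\sup}(d_0,\Gamma) = \min_{d \in \decisionspace} r_{\sup}(d,\Gamma)$, so $d_0$ is $\Gamma$-minimax.

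There is essentially no obstacle here: the only thing to be careful about is that $\pi_0$ genuinely lies in $\Gamma$ (so that the second inequality is licit), which is part of the stated hypothesis, and that the quantities involved are finite and well defined, which is guaranteed by the standing assumptions in Section~\ref{section: setup} that $P \mapsto R(d,P)$ is $\pi$-integrable for all $d,\pi$ and that $\sup_{\pi \in \Gamma} r(d,\pi)$ is finite for each $d \in \decisionspace$. The result is thus an immediate consequence of the definitions and requires no further machinery; it will serve as the tool used to certify that the estimator $d_0$ constructed in Proposition~\ref{proposition: Gamma-minimax estimator of mean} (and the numerically computed estimators in Section~\ref{section: simulation mean}) are indeed $\Gamma$-minimax.
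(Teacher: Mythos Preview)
Your proof is correct and follows essentially the same approach as the paper: both fix an arbitrary competitor and chain the inequalities $r_{\sup}(d_0,\Gamma) = r(d_0,\pi_0) \leq r(d,\pi_0) \leq r_{\sup}(d,\Gamma)$, using respectively the second hypothesis, the Bayes property of $d_0$ for $\pi_0$, and the fact that $\pi_0 \in \Gamma$. The arguments are identical up to the order in which the steps are presented.
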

	\begin{proof}[Proof of Theorem~\ref{theorem: Gamma-minimaxity criterion}]
		Clearly $r_{\sup}(d_0,\Gamma) \geq \inf_{d \in \decisionspace} r_{\sup}(d,\Gamma)$. Fix $d' \in \decisionspace$. Then, $r_{\sup}(d',\Gamma) \geq r(d',\pi_0) \geq r(d_0,\pi_0) = r_{\sup}(d_0,\Gamma)$. Since $d'$ is arbitrary, this shows that $\inf_{d \in \decisionspace} r_{\sup}(d,\Gamma) \geq r_{\sup}(d_0,\Gamma)$. Thus, $r_{\sup}(d_0,\Gamma) = \inf_{d \in \decisionspace} r_{\sup}(d,\Gamma)$ and $d_0$ is $\Gamma$-minimax.
	\end{proof}
	
	We now present a lemma that is used to prove Proposition~\ref{proposition: Gamma-minimax estimator of mean}.
	
	\begin{lemma} \label{lemma: max variance distribution}
		Let $a<b$ and suppose that $\modelspace$ denotes the model space that consists of all probability distributions defined on the Borel $\sigma$-algebra on $[a,b] \subseteq \real$ with mean $\mu \in [a,b]$. Let $X$ denote a generic random variable generated from some $P \in \modelspace$. Then $\max_{P \in \modelspace} \Var_P(X) = \Var_{P^*}(X) = (b-\mu)(\mu-a)$, where $P^*$ is defined by $P^*(X = a) = (b-\mu)/(b-a)$ and $P^*(X = b) = (\mu-a)/(b-a)$.
	\end{lemma}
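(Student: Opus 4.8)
The plan is to reduce everything to the elementary pointwise inequality $(x-a)(b-x)\ge 0$, valid for every $x\in[a,b]$, together with the mean constraint. First I would rewrite this inequality as $x^2\le (a+b)x-ab$ and integrate it against an arbitrary $P\in\modelspace$; since $\expect_P[X]=\mu$, this gives $\expect_P[X^2]\le (a+b)\mu-ab$. Subtracting $\mu^2$ from both sides yields
$$\Var_P(X)=\expect_P[X^2]-\mu^2\le (a+b)\mu-ab-\mu^2=(b-\mu)(\mu-a),$$
the last equality being the routine factorization $(a+b)\mu-ab-\mu^2=-(\mu-a)(\mu-b)$. As $P\in\modelspace$ was arbitrary, this establishes $\sup_{P\in\modelspace}\Var_P(X)\le (b-\mu)(\mu-a)$.

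Next I would show that $P^*$ attains this bound, which simultaneously shows the supremum is a maximum and identifies the maximizer. Because $\mu\in[a,b]$, the weights $(b-\mu)/(b-a)$ and $(\mu-a)/(b-a)$ are nonnegative and sum to one, so $P^*$ is a valid probability distribution on $[a,b]$; a one-line computation of $\expect_{P^*}[X]$ gives $a(b-\mu)/(b-a)+b(\mu-a)/(b-a)=\mu$, so $P^*\in\modelspace$. A second one-line computation gives $\expect_{P^*}[X^2]=a^2(b-\mu)/(b-a)+b^2(\mu-a)/(b-a)=(a+b)\mu-ab$, hence $\Var_{P^*}(X)=(a+b)\mu-ab-\mu^2=(b-\mu)(\mu-a)$. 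Combining this with the bound from the previous paragraph proves the stated equality $\max_{P\in\modelspace}\Var_P(X)=\Var_{P^*}(X)=(b-\mu)(\mu-a)$.

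For completeness I would remark on the equality case of the pointwise bound: $\Var_P(X)=(b-\mu)(\mu-a)$ forces $\expect_P[(X-a)(b-X)]=0$, and since the integrand is nonnegative this means $X\in\{a,b\}$ $P$-almost surely, so (when $\mu\in(a,b)$) $P^*$ is in fact the unique maximizer; this is not required by the statement but clarifies the structure of the extremal distribution. I do not anticipate any real obstacle here — the only point to handle carefully is the degenerate endpoints $\mu=a$ and $\mu=b$, where $P^*$ collapses to a point mass and $(b-\mu)(\mu-a)=0$, but the argument above already covers these cases without modification.
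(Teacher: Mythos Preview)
Your proof is correct and follows essentially the same approach as the paper: both arguments bound $\expect_P[X^2]$ via a pointwise inequality on $[a,b]$ and then check that $P^*$ attains the bound. The only cosmetic difference is that the paper first reduces without loss of generality to $a=-1$, $b=1$ (where the pointwise bound becomes $X^2\le 1$), whereas you work directly with the general interval via $(x-a)(b-x)\ge 0$; your additional remarks on uniqueness and the degenerate endpoints are correct but, as you note, not needed for the statement.
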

	\begin{proof}[Proof of Lemma~\ref{lemma: max variance distribution}]
		Without loss of generality, we may assume that $a=-1$ and $b=1$.
		Note that for any $P \in \modelspace$, it holds that $\Var_P(X) = \expect_P[X^2] - \expect_P[X]^2 = \expect_P[X^2] - \mu^2 \leq 1 - \mu^2$, where the equality is attained if $P(X \in \{-1,1\})=1$. Therefore, the maximum variance is achieved at the distribution with the specified mean $\mu$ and support being $\{a,b\}$, that is, at the distribution $P^*$ defined in the lemma statement. Straightforward calculations show that $\Var_{P^*}(X) = (b-\mu)(\mu-a)$.
	\end{proof}
	
	\begin{proof}[Proof of Proposition~\ref{proposition: Gamma-minimax estimator of mean}]
		Let $\modelspace':=\{\text{Bernoulli}(\theta): \theta \in (0,1)\} \subseteq \modelspace$ and let $\pi_0$ be a prior distribution over $\modelspace'$ such that the prior distribution on the success probability $\theta$ is $\mathrm{Beta}(\mu \sqrt{n}, (1-\mu) \sqrt{n})$. By Theorem~1.1 in Chapter~4 of \cite{Lehmann1998}, a Bayes estimator for $\pi_0$ minimizes the risk under the posterior distribution, whose minimizer over $\decisionspace$ is the posterior mean $d_0$ for our choice of risk. That is, $d_0$ is a Bayes estimator in $\decisionspace$ for $\pi_0$.
		
		We next show that $r(d_0,\pi_0)=\sup_{\pi \in \Gamma} r(d_0,\pi)$. Let $\pi \in \Gamma$ be arbitrary. Since $\expect_P[\bar{X}]=\Psi(P)$ and $\Var_P(\bar{X})=\Var_P(X_1)/n$, we can derive that
		\begin{align*}
			r(d_0,\pi) &= \int \expect_P\left[\left\{ \frac{\mu + \sqrt{n} \bar{X}}{1+\sqrt{n}} - \Psi(P) \right\}^2 \right] \pi(\intd P) \\
			&= \int \expect_P\left[\left\{ \frac{\sqrt{n}}{1+\sqrt{n}} \left(\bar{X} - \Psi(P)\right) + \frac{\mu-\Psi(P)}{1+\sqrt{n}} \right\}^2 \right] \pi(\intd P) \\
			&= \int \left\{ \frac{1}{(1+\sqrt{n})^2} \Var_P(X_1) + \frac{(\mu-\Psi(P))^2}{(1+\sqrt{n})^2} \right\} \pi(\intd P)
			\intertext{Apply Lemma~\ref{lemma: max variance distribution} to $\Var_P(X_1)$ and the display continues as}
			&\leq \int \left\{ \frac{1}{(1+\sqrt{n})^2} \Psi(P) (1-\Psi(P)) + \frac{(\mu-\Psi(P))^2}{(1+\sqrt{n})^2} \right\} \pi(\intd P) \\
			&= \int \frac{1}{(1+\sqrt{n})^2} \left\{ \mu^2 + (1-2\mu) \Psi(P) \right\} \pi(\intd P) = \frac{\mu (1-\mu)}{(1+\sqrt{n})^2}.
		\end{align*}
		This upper bound can be attained by any $\pi$ with support contained in $\modelspace'$, for example, $\pi_0$. Therefore, $r_{\sup}(d_0,\Gamma) = r(d_0,\pi_0)$. By Theorem~\ref{theorem: Gamma-minimaxity criterion}, $d_0$ is $\Gamma$-minimax over $\decisionspace$.
	\end{proof}
	
	\begin{table}[bt!]
		\centering
		\caption{Summary of frequently used symbols}
		\label{table: symbols}
		\begin{tabular}{l|l}
			\hline
			Symbol &  \\
			\hline
			$P_0$ & True data-generating mechanism \\
			$\modelspace$ & Space of data-generating mechanisms containing $P_0$ \\
			$\observation^*$ and $\observation=\coarsen(\observation^*)$ & Full generated data and coarsened data \\
			$\decisionspace$ & Space of candidate estimators or decision functions (e.g., neural networks) \\
			$R$ & Risk function \\
			$r$ & Bayes risk function $r: (d,\pi) \mapsto \int R(d,P) \pi(\intd P)$ \\
			$\Gamma (\subseteq \Pi)$ & Set of prior distributions consistent with prior knowledge \\
			$\Psi$ & Functional defining the estimand $\Psi(P_0)$ in Examples~\ref{example: estimation}--\ref{example: estimate entropy} \\
			$\modelspace_\ell$ & An increasing sequence of finite subsets of $\modelspace$ \\
			$\Gamma_\ell$ & Set of priors in $\Gamma$ with support in $\modelspace_\ell$ \\
			$r_{\sup}$ & Worst-case Bayes risk function $r_{\sup}: (d,\Gamma') \mapsto \sup_{\pi \in \Gamma'} r(d,\pi)$ \\
			$d_\ell^*$ & $\Gamma_\ell$-minimax estimator in $\decisionspace$ \\
			$d^*$ & A limit point of sequence $\{d_\ell^*\}_{\ell=1}^\infty$, which is $\Gamma$-minimax in $\decisionspace$ by Theorem~\ref{theorem: approximate Gamma-minimax on a grid} \\
			$\beta (\in \real^D)$ & Coefficient of a finite-dimensional estimator (e.g., neural network) \\
			$\xi \sim \Xi$ & Exogenous randomness \\
			$\hat{R}(\beta,P,\xi)$ & Unbiased approximation of $R(\beta,P)$ \\
			$\overline{d}(\varpi)$ & Stochastic estimator following distribution $\varpi$ over $\decisionspace$ \\
			$\overline{\decisionspace}$ & Space of stochastic estimators $\overline{d}(\varpi)$ \\
			$\overline{d}^*=\overline{d}(\varpi^*)$ & $\Gamma$-minimax estimator in $\overline{\decisionspace}$ \\
			\hline
		\end{tabular}
	\end{table}
\end{appendix}

\bibliographystyle{chicago}
\bibliography{references}

\end{document}